\newcommand{\donotshow}[1]{}
\newcommand{\ignore}[1]{}
\providecommand{\qed}{\rule[-0.2ex]{0.3em}{1.4ex}}
\newcommand{\assign}{\mathbin{:=}}
\providecommand{\R}{\mathbb{R}}
\newcommand{\N}{\mathbb{N}}
\newcommand{\Z}{\mathbb{Z}}
\newcommand{\Q}{\mathbb{Q}}
\newcommand{\C}{\mathbb{C}}
\newlength{\setspacing}
\providecommand{\sset}[1]{\{\, #1 \,\}}
\newcommand{\mbegin}{\{\ \ }
\newcommand{\mend}{\}}
\newlength{\mleftindent}
\newlength{\mindent}
\newlength{\mboxwidth}
\newcommand{\mincrement}{\addtolength{\mboxwidth}{-\mindent}}
\newcommand{\mdecrement}{\addtolength{\mboxwidth}{\mindent}}
\newlength{\preprogramskip}
\newlength{\postprogramskip}
\newlength{\mexpwidth}
\newlength{\mexpindent}
\newcommand{\indentafterkeyword}{\hspace*{0.5em}}
\newcommand{\mslifelse}[3]  %if is short else is long
{\setlength{\mexpwidth}{\mboxwidth}%
\settowidth{\mexpindent}{{\bf if\indentafterkeyword}}%
\addtolength{\mexpwidth}{-\mexpindent}%
{\bf if\indentafterkeyword}\parbox[t]{\mexpwidth}{#1}\\
\mincrement \mbegin \parbox[t]{\mboxwidth}{#2 \mend} \mdecrement \\
{\bf else} \\
\mincrement \mbegin \parbox[t]{\mboxwidth}{#3}\\
\mend \mdecrement
}
\newlength{\proofpostskipamount}\newlength{\proofpreskipamount}
\par\vspace{0.5ex}\noindent{\bf Proof #1:}\hspace{0.5em}}%
\newlength{\mydefwidth}
\newlength{\mytextwidth}
\newcommand{\myurl}[1]{{\footnotesize \url{#1}}}
\newcommand{\var}{\operatorname{var}}
\newcommand{\Otilde}{\tilde{O}}
\newcommand{\sgn}{\mathrm{sign}}
\newcommand{\rev}{\operatorname{rev}}
\newcommand{\IBox}{\mathfrak{B}}
\newcommand{\T}{\mathcal{T}}
\newcommand{\dcm}{\textsc{Dcm}}
\numberwithin{equation}{section}
\numberwithin{figure}{section}
\begin{document}

\begin{frontmatter}

\title{On the Complexity of Real Root Isolation}

%\thanks{This research was partly supported by .....}

\author{Michael Sagraloff}
\address{MPI for Informatics, Saarbr\"ucken, Germany}
\ead{msagralo@mpi-inf.mpg.de}
%\ead[url]{URL 2}

\begin{abstract}
We introduce a new approach to isolate the real roots of a square-free polynomial $F=\sum_{i=0}^n A_i x^i$ with real coefficients. It is assumed that each coefficient of $F$ can be approximated to any specified error bound. The presented method is exact, complete and deterministic. Due to its similarities to the Descartes method, we also consider it practical and easy to implement. Compared to previous approaches, our new method achieves a significantly better bit complexity. It is further shown that the hardness of isolating the real roots of $F$ is exclusively determined by the geometry of the roots and not by the complexity or the size of the coefficients. For the special case where $F$ has integer coefficients of maximal bitsize $\tau$, our bound on the bit complexity writes as $\Otilde(n^3\tau^2)$ which improves the best bounds known for existing practical algorithms by a factor of $n=\deg F$.\\ 
The crucial idea underlying the new approach is to run an approximate version of the Descartes method, where, in each subdivision step, we only consider approximations of the intermediate results to a certain precision. We give an upper bound on the maximal precision that is needed for isolating the roots of $F$. For integer polynomials, this bound is by a factor $n$ lower than that of the precision needed when using exact arithmetic explaining the improved bound on the bit complexity. 
\end{abstract}

\begin{keyword}
Root isolation, complexity bounds, bitstream coefficients, approximate coefficients
\end{keyword}

\end{frontmatter}

\section{Introduction}\label{intro}

Finding the roots of a univariate polynomial $F\in\R[x]$ can be considered as the fundamental problem of computational algebra, and there exist numerous approaches dedicated to approximate the real roots of $F$. We mainly distinguish between purely numerical methods such as Newton iteration and exact and complete methods such as those based on Descartes' Rule of Signs or Sturm Sequences. The latter approaches apply to polynomials with rational coefficients and guarantee to compute a set of disjoint \emph{isolating} intervals. That is, each of these intervals contains exactly one root and the union of all intervals covers all real roots of $F$. In this paper, we propose an algorithm which extends the Descartes method to arbitrary square-free polynomials with real coefficient. 
Throughout the paper, 
\begin{align}
F(x):=\sum_{i=0}^n A_i x^{i}\in\R[x] \label{polyF}
\end{align}
denotes a square-free polynomial of degree $n\ge 2$ with real coefficients $A_{i}$, where $|A_{n}|\ge 1$. We define $\tau$ to be the minimal positive integer with $\max_{i=0,\ldots,n-1} \frac{|A_{i}|}{|A_n|}<2^{\tau}$. It is assumed that each coefficient $A_{i}$ can be approximated to any specified precision and we refer to such coefficients as \emph{bitstream coefficients}. The roots of $F$ are denoted by $\xi_{1},\ldots,\xi_{n}\in\C$ and $\Gamma_F:=\log(\max_i|\xi_i|)$ denotes the corresponding logarithmic root bound. The \emph{separation} $\sigma_i:=\sigma(\xi_{i},F)$ of $\xi_{i}$ is defined as the minimal distance of $\xi_{i}$ to any root $\xi_{j}\neq \xi_{i}$, the separation $\sigma_F$ of $F$ is defined as the minimum of all $\sigma(\xi_i,f)$, and $\Sigma_F:=-\sum_{i=1}^{n} \log \sigma_i.$\\

\subsection{Main results and related work} 
We present an \emph{exact} and \emph{deterministic} algorithm which computes isolating intervals $I_{1},\ldots,I_{m}$ for the real roots of $F$. We further provide a detailed complexity analysis showing that our algorithm needs no more than 
\begin{align}
\Otilde(n(\Sigma_F+n\Gamma_F)^2)=\Otilde(n(\Sigma_F+n\tau)^2)\label{result:complexity}
\end{align}
bit operations\footnote{$\Otilde$ indicates that we omit polylogarithmic factors} and demands for approximations of the coefficients of $F$ to $\tilde{O}(\Sigma_F+n\Gamma_F)$ bits after the binary point. Our results show that the complexity of isolating the real roots does not depend on whether the given polynomial has irrational, rational or integer coefficients. In fact, the hardness of isolating the roots of $F$ is exclusively determined by the degree of $F$ and the quantities $\Gamma_F$ and $\Sigma_F$ which only depend on the location of the roots of $F$.
For a polynomial $F$ with integer coefficients, the bound in (\ref{result:complexity}) writes as $\Otilde(n^3\tau^2)$ which improves the best bounds known for other practical methods such as the Descartes method~\cite{Alesina-Galuzzi,collins-akritas:76,eigenwillig-sharma-yap:descartes:06,mrr:bernstein:05,rouillier-zimmermann:roots:04}, Sturm's method~\cite{du-sharma-yap:sturm:07,lickteig-roy:sequences:01} or the continued fraction method~\cite{Akritas:1980:FEA,Sharma,tsigaridasE08,Vincent} by a factor of $n$. To the best of our knowledge, this is the first time where it is shown that approximation leads to a better worst case complexity for real root isolation, a fact which has already been observed in experiments~\cite{snc-benchmarks09,rouillier-zimmermann:roots:04}. We consider this new result as an important step to further reduce the gap (with respect to worst case bit complexity) between practical and efficient algorithms for real root isolation and asymptotically fast methods for isolating all complex roots as proposed by Sch\"onhage \cite{schonhage:fundamental} and Pan~\cite{pan:seq-parallel:87,pan:history-progress:97} in the eighties and nineties. The latter methods achieve almost optimal complexity bounds $\tilde{O}(n^3\tau)$ for the benchmark problem of isolating all complex roots but both methods lack evidence of being efficient in practice; see~\cite{Gourdon96} for an implementation of the splitting circle method within the Computer Algebra system Pari/GP. Due to its similarities to the Descartes method, we consider the proposed algorithm practical and easy to implement. The latter claim has already been proven by means of a recent implementation from A. Strzebonski and E. Tsigaridas~\cite{st:polyalgebraic:11} ``in C as part of the core library of \textsc{Mathematica}.''\\

\begin{figure}[t]
\begin{center}
\vspace{-5.6cm}\includegraphics[width=12cm]{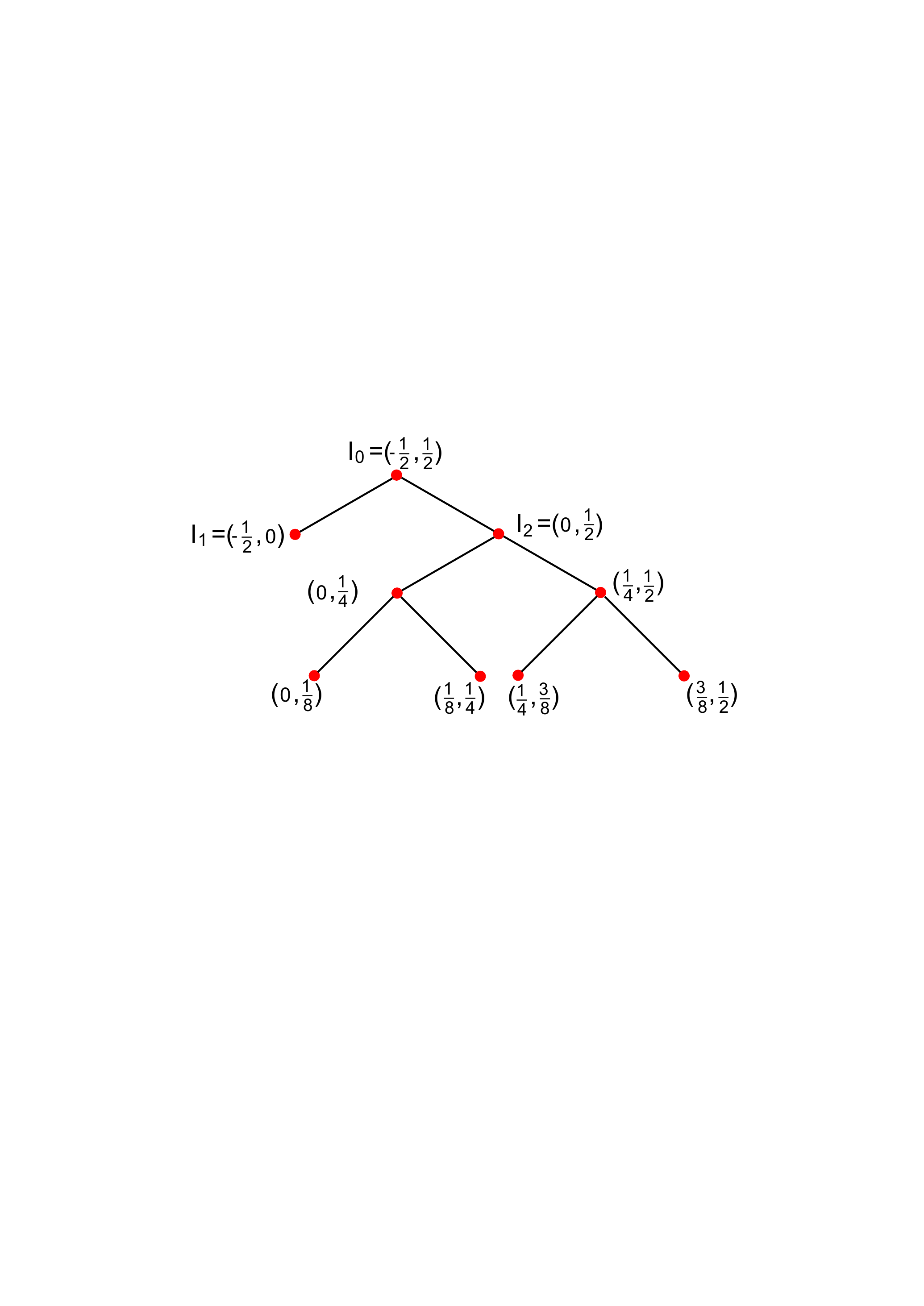}\end{center}
\vspace{-7.3cm}\caption{\label{fig:desctree} \footnotesize{The above figure shows the recursion tree induced by the Descartes method when applied to the polynomial $f(x):=16\sqrt{2}x^2-8x+\frac{\pi}{4}$ (with roots $z_1=0.06\ldots$ and $z_2=0.29\ldots$). For each interval $I=(a,b)$ in the subdivision process, we have to compute $f_I(x)=f(a+(b-a)x)$. For instance, for $I=(\frac{1}{4},\frac{1}{2})$, we have $f_I(x)=f(\frac{1}{4}+\frac{x}{4})=\sqrt{2}x^2+(2\sqrt{2}-2)x+\sqrt{2}+\frac{\pi}{8}$. The Bitstream solvers presented in~\cite{ms-detbitdesc-09,s-bitstream-mcs11} initially start with an approximation $g$ of $f$ to a certain number of bits; e.g., $g(x)=\frac{11585}{512} x^2-8x+\frac{201}{512}$ approximates $f$ to $10$ bits after the binary point. Then, the Descartes method is applied to $g$, that is, for each interval $I=(a,b)$, $g_I(x)=g(a+(b-a)x)$ is computed; e.g., $g(\frac{1}{4}+\frac{x}{4})=\frac{11585}{8192} x^2+\frac{3393}{4096}x-\frac{1583}{8192}$. Given that $g$ is a sufficiently good approximation of $f$, it is shown that the roots of $f$ can be isolated in this way. 
Our new approach follows a similar strategy, that is, we start with an approximation $\tilde{f}_{I_0}$ of $f_{I_{0}}$ to a certain number $\rho_{I_0}=\rho$ of bits. Then, we recursively compute approximations $\tilde{f}_I$ of $f_I$ to $\rho_I$ bits, where $\rho_I$ is updated in each step. In contrast to the previous method, the polynomials $\tilde{f}_{I}$ do not necessarily correspond to a specific initial approximation $g$ of $f$.
We illustrate this by means of the above example: We start with 
$\tilde{f}_{I_0}(x)=\frac{11585}{512}x^2-\frac{31363}{1024}+\frac{5145}{512}$ 
which approximates 
$f_{I_0}(x)=f(-\frac{1}{2}+x)=16\sqrt{2}x^2-16\sqrt{2}x+4-8x+4+\frac{\pi}{8}+4\sqrt{2}
$ to $\rho_{I_0}=10$ bits. Then, $\tilde{f}_{I_0}(\frac{x}{2})$ and 
$\tilde{f}_{I_0}(\frac{1}{2}+\frac{x}{2})$ are evaluated and the result is rounded to $9$ bits 
after the binary point. The resulting polynomials are then approximations of 
$f_{I_{1}}(x)=f(\frac{1}{2}+\frac{x}{2})$ and $f_{I_{2}}(x)=f(\frac{x}{2})$ to 
$\rho_{I_1}=\rho_{I_2}=8$ bits, respectively (see 
Lemma~\ref{lem:taylorshift}). In the following bisection steps, we proceed in 
exactly the same manner. For instance, for the interval $I=(\frac{1}{4},\frac{1}{2})$, we 
obtain $\tilde{f}_I(x)=\frac{181}{128} x^2+\frac{53}{64} x-\frac{25}{128}$ 
which approximates $f_I$ to $\rho_I=6$ bits after the binary point.
}}\vspace{0.25cm}
\end{figure}

The crucial idea underlying the presented method is to use an ``approximate version" of the Descartes method. More precisely, we first consider a scaled polynomial $f(x):=F(2^{\Gamma+1} x)/A_n$\label{deftau}, where $\Gamma$ is an integer approximation of $\Gamma_F$ with $\Gamma_F\le\Gamma\le \Gamma_F+4\log n$; see Section~\ref{sec:scaling} and Appendix~\ref{sec:rootbound}. Then, all roots of $f$ are contained within the disc of radius $1/2$ centered at the origin. 
In a second step, we apply a modified Descartes method to isolate the roots of $f$. However, instead of computing the exact intermediate results obtained in the subdivision process, we only consider approximations to a certain number of bits. Whereas other methods~\cite{cjk-iaicad-02,BitstreamDescartes,Johnson_Krandick:97,mrr:bernstein:05,rouillier-zimmermann:roots:04} proceed in a similar way by using interval polynomials, our new method considers a specific approximation in each step and updates the possible approximation error. In~\cite{ms-detbitdesc-09,s-bitstream-mcs11}, a similar approach was proposed. Therein, the proposed algorithms also initially start with an approximation $g$ of $f$, however, all intermediate results correspond to the initial approximation $g$ and are computed exactly. In contrast, we propose to consider independent approximations of the intermediate results at \emph{each node} of the recursion tree; see Figure~\ref{fig:desctree} for a more detailed example.\\

How is it possible that, for integer polynomials, an approximate version of the Descartes method is more efficient than the original ``exact version"? Let us first consider the ``exact Descartes method": Its complexity analysis shows that, for each interval (node) $I=(a,b)$ in the recursion tree, the dominating costs are those for the computation of the Taylor expansion $f_{I}(x):=f(a+(b-a)x)$ at $a$; see Section~\ref{descartes} for a more comprehensive treatment. In each bisection step, the polynomials $f_{I_l}$ and $f_{I_r}$ (corresponding to the left and the right subinterval of $I$) are recursively computed from $f_I$ by replacing $x$ by $x/2$, followed by a Taylor shift by $1$, that is, $x\mapsto x+1$. More precisely, we have $f_{I_l}(x)=f_I(x/2)$ and $f_{I_r}(x)=f_{I_l}(x+1)$. In each iteration, the bitsize of the coefficients of $f_I$ increases by $n$ bits, and since the recursion tree has depth bounded by $h_{\max}=\Otilde(n\tau)$, the representation of $f_{I}$ eventually demands for at most $\tau+nh_{\max}=\Otilde(n^2\tau)$ bits. Hence, assuming asymptotically fast Taylor shift~\cite{Ger04,GG97}, the computation of a certain $f_{I}$ amounts for $\Otilde(n^{3}\tau)$ bit operations.

Now, let us turn to the approximate method: In Section~\ref{sec:apx}, we show that, for an arbitrary approximation $g$ of $f$ to $\rho_{\max}=\tilde{O}(n\tau)$ bits after the binary point, corresponding roots of $f$ and $g$ are almost at the same location with respect to their separations; see Theorem~\ref{lem:precision} and Appendix~\ref{integer} for a more precise result. Thus, for each interval $I$, it should suffice to consider approximations $\tilde{f}_I$ of $f_I$ to $\rho_{\max}$ bits after the binary point. Starting with an approximation of $f$ to $\rho_{\max}+2h_{\max}=\tilde{O}(n\tau)$ bits after the binary point, we can iteratively obtain such approximations $\tilde{f_I}$. Namely, $\tilde{f}_I$ can be recursively computed such that the approximation error quadruples at most in each bisection step and the height of the recursion tree is bounded by $h_{\max}$.
Eventually, all polynomials $\tilde{f_I}$ are represented by $\tilde{O}(n\tau)$ bits (instead of $\Otilde(n^2\tau)$ bits for the exact counterpart $f_I$) and, thus, the cost at each node decreases by a factor $n$.\\

We will prove the above result for the more general setting where $F$ is a polynomial with arbitrary real coefficients. More precisely, we show that it suffices to approximate each $f_I$ to a number of bits after the binary point bounded by $O(\Sigma_f+n)=\tilde{O}(\Sigma_F+n\Gamma_F)$. Then, each $\tilde{f_I}$ is represented by $\tilde{O}(\Sigma_F+\tau+n\Gamma_F)$ bits and, as a consequence, the cost at each node is bounded by $\Otilde(n(\Sigma_F+\tau+n\Gamma_F))$ bit operations. We remark that, due to Appendix~\ref{sec:apx}, we have $\tau=\tilde{O}(n\Gamma_F)$ and, thus, the latter bound writes as $\Otilde(n(\Sigma_F+n\Gamma_F))$.
The additional factor $\Sigma_F+n\Gamma_F$ in the bound (\ref{result:complexity}) on the bit complexity is due to the size of the induced recursion tree.

\subsection{Outline}
In Section~\ref{basics}, we first introduce some basic notations. Furthermore, we derive a bound on how good $f$ has to be approximated such that 
its roots stay at almost the same place with respect to the corresponding 
separations. Eventually, we revise the Descartes method before presenting our slight modification $\dcm$ of it in Section~\ref{modifieddescartes}. In Section~\ref{algorithm}, we present our new algorithm to isolate the roots of $F$ and provide the corresponding complexity bounds. We conclude in Section~\ref{conclusion}. Parts of the complexity analysis as well as pseudo-code for our subroutines is outsourced to the Appendix.

\section{Preliminaries}\label{basics}

\subsection{Some Notations} 
For an interval $I=(a,b)$, $w(I):=b-a$ denotes the \emph{width}, $m(I):=\frac{a+b}{2}$ the \emph{center}, and $r(I)=\frac{w(I)}{2}$ the \emph{radius} of $I$. Furthermore,
$$I^+=(a^+,b^+):=(a-\frac{w(I)}{4n},b+\frac{w(I)}{4n})\quad\text{and}\quad \tilde{I}=(\tilde{a},\tilde{b}):=(a-\frac{w(I)}{2n},b+\frac{w(I)}{2n})$$
denote extensions of $I$ by $\frac{w(I)}{4n}$ and $\frac{w(I)}{2n}$ (to both sides), respectively. We will need these intervals for our modified version of the Descartes method as presented in Section~\ref{modifieddescartes}.
An (open) disc in $\C$ is denoted by $\Delta=\Delta_r(m)$, where $m\in\C$ indicates the center of $\Delta$ and $r\in\R^+$ its radius. The closure of a disc $\Delta$ or an interval $I$ is denoted by $\bar{\Delta}$ and $\bar{I}$, respectively.

\subsection{Scaling the Polynomial}\label{sec:scaling}

Instead of isolating the roots of the given polynomial $F$ as in (\ref{polyF}), we consider the equivalent task of isolating the roots of a "scaled'' polynomial $f$ which is defined as follows: We first compute an integer approximation $\Gamma\in\mathbb{N}$ of the exact logarithmic root bound $\Gamma_F=\log(\max_i|\xi_i|)$ of $F$ such that 
\begin{align}
\Gamma_F\le\Gamma<4\log n+\Gamma_F.\label{Gamma}
\end{align}
This computation can be done with $\tilde{O}((n\Gamma_F)^2)$ bit operations and demands for an approximation of $F$ to $\tilde{O}(n\Gamma_F)$ bits after the binary point; see Appendix~\ref{sec:rootbound}. We can further assume that $\Gamma\le \tau+1$ due to Cauchy's Bound~\cite{yap-fundamental} $B_{CB}:=1+\max_i\frac{|A_i|}{|A_n|}<1+2^{\tau}$ on the modulus of all roots.
Now, we define
\begin{align}
f(x)=\sum_{i=0}^{n}a_{i}x^{i}:=\frac{F(2^{\Gamma+1}\cdot x)}{A_{n}}.\label{polyf}
\end{align}
It follows that all roots $z_{1}=\xi_{1}\cdot 2^{-(\Gamma-1)},\ldots,z_{n}=\xi_{n}\cdot 2^{-(\Gamma-1)}$ of $f$ are contained within the disc $\Delta_{1/2}(0)$ and the absolute value of each coefficient $a_i$ of $f$ is bounded by $2^{n(\Gamma+1)+\tau}=2^{O(n\tau)}$. In practice, it might be worth to investigate in an even tighter root bound $\Gamma$ as described in~\cite[Section 2.4]{eigenwillig:thesis} in order to prevent the coefficients of $f$ to become unnecessarily large.
We further remark that the separations of corresponding roots of $F$ and $f$ scale by $2^{\Gamma+1}$ (i.e., $\sigma(\xi_{i},F)=2^{\Gamma+1}\cdot\sigma(z_{i},f)$). Thus, \begin{align}
\Sigma_f=-\sum_{i=1}^{n}\log\sigma(z_{i},f)= \Sigma_F+n(\Gamma+1)=O(n\tau+\Sigma_F).\label{Sigmaf}
\end{align}

\subsection{Approximating Polynomials}\label{sec:apx}

We assume that the coefficients of $F$ are given as infinite bitstreams, that is, for a given $\rho\in\N$, we can ask for an approximation of $F$ to $\rho$ bits after the binary point. More precisely, each coefficient $A_i$ is approximated by a binary fraction 
$\tilde{A}_i = m_i\cdot 2^{-\rho}$ with $m_i \in \Z$ and $|A_i - \tilde{A}_i| \le
2^{-\rho}$, e.g., $\tilde{A}_{i} = \sgn(A_i)\lfloor |A_{i} 2^{\rho}| \rfloor 2^{-\rho}$. We call a polynomial $\tilde{F}\in \Q[x]$ obtained in this way a \emph{$\rho$-binary approximation of $F$}. We remark that, in order to get a $\rho$-binary approximation of $f$, it suffices to approximate $F$ to $n(\Gamma+1)+\rho+\tau+1$ bits after the binary point. Namely, given approximations $\tilde{A}_i=A_i+\mu_i$, with $|\mu_i|\le\mu:=2^{-(n(\Gamma+1)+\rho+\tau+1)}$ for all $i$, it follows that $$\left|\frac{\tilde{A_i}}{\tilde{A}_n}-\frac{A_i}{A_n}\right|=\frac{|\mu_i A_n+\mu_n A_i|}{|\tilde{A}_nA_n|}\le \mu\cdot \frac{|A_n|+|A_i|}{|A_n\tilde{A}_n|}< \mu(1+2^\tau)<2^{-(n(\Gamma+1)+\rho)}.$$
Thus, $\tilde{a}_i:=\frac{\tilde{A_i}}{\tilde{A}_n}(2\Gamma)^i$ approximates $a_i=\frac{A_i}{A_n}(2^{\Gamma+1})^i$ to an error less than $2^{-\rho}$.

For an arbitrary polynomial $g(x):=\sum_{i=0}^{m}g_{i}x^{i}\in\C[x]$ with complex coefficients and an arbitrary non-negative real number $\mu\in\R^+_0$, we define
\begin{align*}
[g]_{\mu}:=\left\{\tilde{g}(x)=\sum_{i=0}^{n}\tilde{g}_{i}x^{i}\in\C[x]:|g_{i}-\tilde{g}_{i}|\le \mu\text{ for all }i=0,\ldots,n\right\}
\end{align*}
\emph{the set of all $\mu$-approximations of $g$.} We remark that, since the coefficients of modulus less than $\mu$ can be approximated by zero, a $\mu$-approximation $\tilde{g}$ of $g$ might have lower degree than $g$.\\

\noindent\textbf{Example.} For $g(x):=\frac{12256}{65589}x^{10}-2x^2+\frac{1}{243}x-\frac{9}{16}$, the polynomial $\tilde{g}(x):=\frac{11}{64}x^{10}-2x^2-\frac{9}{16}$ constitutes a $6$-binary approximation and $\tilde{g}(x):=-2x^2-\frac{3}{4}$ a $2$-binary approximation of $g$.\\

\subsection{Taylor Shifts}\label{taylor}

For an arbitrary polynomial $g\in\C[x]$ and arbitrary values $m\in\C$, $\lambda\in\R\backslash \{0\}$, let \begin{align}
g_{[m,\lambda]}(x):=g(m+\lambda x).\label{tshift}
\end{align}

The following lemma provides error bounds on how the absolute approximation error $\mu$ of a polynomial $\tilde{g}\in [g]_{\mu}$ scales under the transformation $x\mapsto m+\lambda x$:

\begin{lem}
\label{lem:taylorshift}
For $\mu\in\R_0^+$ and $\tilde{g}\in [g]_{\mu}$ an arbitrary $\mu$-approximation of a polynomial $g\in\C[x]$ of degree $n$, it holds that
\begin{itemize}
\item[(i)] $\tilde{g}_{[\frac{1}{2},\frac{1}{2}]}\in 
[g_{[\frac{1}{2},\frac{1}{2}]}]_{2\mu}$,
\item[(ii)] $\tilde{g}_{[-\frac{1}{4n},1+\frac{1}{2n}]}\in [g_{[-\frac{1}{4n},1+\frac{1}{2n}]}]_{4\mu}$, 
\item[(iii)] $\tilde{g}_{[-\frac{1}{2},1]}\in [g_{[-\frac{1}{2},1]}]_{2^n\mu}$, and $\tilde{g}_{[1,1]}\in [g_{[1,1]}]_{2^n\mu}$.
\end{itemize}
\end{lem}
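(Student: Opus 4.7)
The plan is to reduce all three parts of the lemma to a single coefficient-wise bound on the error polynomial $e := \tilde{g} - g = \sum_{i=0}^n e_i x^i$, where $|e_i| \le \mu$ by hypothesis. Since the Taylor-shift operator $h \mapsto h_{[m,\lambda]}$ is linear in $h$, we have $\tilde{g}_{[m,\lambda]} - g_{[m,\lambda]} = e_{[m,\lambda]}(x) = \sum_{i=0}^n e_i (m + \lambda x)^i$. Expanding via the binomial theorem and collecting powers of $x$, the coefficient of $x^j$ in $e_{[m,\lambda]}$ equals $\lambda^j \sum_{i=j}^n e_i \binom{i}{j} m^{i-j}$, whose modulus is therefore at most
\begin{align*}
\mu \, |\lambda|^j \sum_{i=j}^n \binom{i}{j} |m|^{i-j}.
\end{align*}
The whole lemma then reduces to bounding these binomial sums in the three specific substitutions.

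The main analytic tool I would use is the generating function identity $\sum_{k=0}^\infty \binom{k+j}{j} x^k = (1-x)^{-(j+1)}$, valid for $|x| < 1$. After the index change $k = i - j$, it gives the clean bound $\sum_{i=j}^n \binom{i}{j} |m|^{i-j} \le (1 - |m|)^{-(j+1)}$ whenever $|m| < 1$. For part (i), plugging in $m = \lambda = 1/2$ yields $(1/2)^j \cdot 2^{j+1} = 2$, so the coefficient-wise error is at most $2\mu$. For part (ii), setting $|m| = 1/(4n)$ produces the bound $\mu\,(1 + 1/(2n))^j \,(1 - 1/(4n))^{-(j+1)}$; here I would invoke the standard estimates $(1 + 1/(2n))^n \le e^{1/2}$ and $(1 - 1/(4n))^{-(n+1)} \le e^{1/3} \cdot \tfrac{4}{3}$ to conclude that the product stays below $4$ uniformly in $0 \le j \le n$ and $n \ge 1$.

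Part (iii) splits naturally into two sub-cases. For $m = -1/2,\ \lambda = 1$, the generating function bound gives $\sum_{i=j}^n \binom{i}{j}(1/2)^{i-j} \le 2^{j+1}$, which is bounded by $2^n$ for $j \le n-1$; for $j = n$ the sum collapses to $1$, so the uniform estimate $2^n\mu$ holds. For $m = \lambda = 1$, the hockey-stick identity $\sum_{i=j}^n \binom{i}{j} = \binom{n+1}{j+1}$ together with the standard central-binomial estimate $\binom{n+1}{\lfloor (n+1)/2\rfloor} \le 2^n$ (valid for $n \ge 1$) again yields $2^n\mu$. The only step requiring genuine care is the two-factor exponential estimate in case (ii), where both $(1+1/(2n))^j$ and $(1-1/(4n))^{-(j+1)}$ need to be tracked as functions of $j$; every other bound is a routine manipulation of binomial sums via the geometric-type identity above, so I do not anticipate any deeper obstacle.
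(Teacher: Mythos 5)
Your proposal is correct and follows essentially the same route as the paper: apply the shift to the error polynomial $e=\tilde g-g$, bound the coefficient of $x^j$ by $\mu|\lambda|^j\sum_{i\ge j}\binom{i}{j}|m|^{i-j}\le\mu|\lambda|^j(1-|m|)^{-(j+1)}$ via the negative-binomial series, and then specialize $(m,\lambda)$ in each case, with only cosmetic differences (hockey-stick plus the central-binomial bound in place of the paper's estimate $\sum_{i=0}^{n-k}\binom{i+k}{i}\le 2^n$ for $m=\lambda=1$, and slightly different but valid constants in case (ii)).
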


\begin{proof}
For $h(x):=(g-\tilde{g})(x)=\mu_{n} x^{n}+\ldots+\mu_1x+\mu_0$, the absolute value of each coefficient $\mu_i$ is bounded by $\mu$. Let $m\in\C$ and $\lambda\in\R\backslash \{0\}$ be arbitrary values, then
\begin{align}
h(m+\lambda x) & =\sum_{i=0}^n\mu_i (m+\lambda x)^i=\sum_{i=0}^n\mu_i\sum_{k=0}^i x^k\lambda^k m^{i-k}\binom{i}{k} = \sum_{k=0}^n x^k\sum_{i=k}^n \mu_i m^{i-k}\lambda^k \binom{i}{k}  \label{computation:shift}
\end{align}
Thus, for $|m|<1$, the absolute value of the coefficient of $x^k$ is bounded by
\begin{align}
\mu |\lambda|^k\cdot \sum_{i\geq k} |m|^{i-k}\binom{i}{k}=\mu|\lambda|^k\cdot\sum_{i\geq 0}|m|^i\binom{k+i}{k}=\mu |\lambda|^k\cdot\frac{1}{(1-|m|)^{k+1}}, \label{computation:shift2}
\end{align}
where we used 
\begin{align*}
(1-|m|)^{-(k+1)}=\sum_{i\geq 0}\binom{-(k+1)}{i}(-1)^i |m|^{i}=\sum_{i\geq 0}\binom{k+i}{i} |m|^{i}=\sum_{i\geq 0}\binom{k+i}{k} |m|^{i}.
\end{align*}
For $m=\lambda=1/2$, it follows that all coefficients of $h$ are bounded by $2\mu$. This shows (i). For $m=-\frac{1}{4n}$ and $\lambda=1+\frac{1}{2n}$, (\ref{computation:shift2}) implies that
\[\tilde{g}_{[-\frac{1}{4n},1+\frac{1}{2n}]}\in [g_{[-\frac{1}{4n},1+\frac{1}{2n}]}]_{\mu\frac{8}{7}\cdot \left(\frac{1+1/(2n)}{1-1/(4n)}\right)^n}\subset [g_{[-\frac{1}{4n},1+\frac{1}{2n}]}]_{4\mu}\]
because $\frac{8}{7}\cdot\left(\frac{1+1/(2n)}{1-1/(4n)}\right)^n\le \frac{8^3}{7^3}\cdot \sqrt{e}\le 4$. Hence, (ii) follows. The first part of (iii) is also a direct implication of (\ref{computation:shift2}).
The second claim in (iii) follows
from the computation in (\ref{computation:shift}) since each $\mu_{i}$ is then ($m=\lambda=1$) bounded by 
$\mu\cdot\sum_{i=k}^{n}\binom{i}{k}=\sum_{i=k}^n\binom{i}{i-k}=\sum_{i=0}^{n-k}\binom{i+k}{i}\le \sum_{i=0}^{n-k}\binom{n}{i}\le 2^{n}\cdot\mu$.
\end{proof}

\subsection{On Sufficiently Good Approximation}\label{approximation}

In the next step, we derive a bound on how good $f$ has to be approximated by an $\tilde{f}$ such that, for all $i$, the distance of corresponding roots $z_{i}$ and $\tilde{z}_{i}$ of $f$ and $\tilde{f}$ is small with respect to the separation $\sigma(z_{i},f)$. The following considerations are mainly adopted from our studies in~\cite{s-bitstream-mcs11}. Only for the sake of comprehensibility, we decided to integrate the results in this paper as well.
We start with the following definition:

\begin{defn}\label{def:sufficintlylarge}
Let $t\ge 1$ be an arbitrary real value and $f$ a polynomial as in (\ref{polyf}). We define 
\begin{align}\
\mu(f,t):=\frac{1}{t}\cdot\min_{i=1,\ldots,n}\left|\frac{\sigma(z_i,f) 
f'(z_i)}{8n^2}\right| \label{mugt}
\end{align}
We call a $\rho\in\mathbb{N}$ \emph{sufficiently large\footnote{This definition is motivated by our results in Theorem~\ref{lem:precision} and Section~\ref{dcmL}} with respect to $f$} if 
\begin{align}
\rho\ge \rho_f:=\lceil-\log\mu(f,64n^2)\rceil=O(\Sigma_f+\log n-\log|a_n|)=O(\Sigma_F+\log n).\label{sufficientlylarge}
\end{align}
\end{defn}

The upper bound for $\rho_f$ in (\ref{sufficientlylarge}) follows from
$$\sigma(z_i,f)\cdot|f'(z_i)|=\sigma(z_i,f)\cdot|a_n|\prod_{j\neq i}|z_i-z_j|\ge \sigma(z_i,f)\cdot|a_n|\prod_{j\neq i} \sigma(z_j,f)=|a_n|2^{-\Sigma_f}.$$
and
$
\Sigma_f-\log |a_n|=\Sigma_F+n(\Gamma+1)-\log(2^{n(\Gamma+1)})=\Sigma_F
$. The following theorem gives an answer to our question raised above:

\begin{thm}
\label{lem:precision}
Let $f$ be the polynomial as defined in (\ref{polyf}), $t\ge 1$ and $\tilde{f}\in [f]_{\mu(f,t)}$.
\begin{itemize}
\item[(i)] For all $i=1,\ldots,n$, the disc $\Delta_i:=\Delta_{\sigma(z_i,f)/(tn)}$ contains the root $z_i$ of $f$ and a corresponding counterpart $\tilde{z}_i$ of $\tilde{f}$.
\item[(ii)] For each $z\in\C\backslash\bigcup_{i=1}^n \Delta_i$, it holds that $|f(z)|>(n+1)\mu(f,t).$
\end{itemize}
\end{thm}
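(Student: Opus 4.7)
The plan is to prove (ii) first and then derive (i) from Rouch\'e's theorem. For (ii) I would split the work into a sharp pointwise estimate on each boundary $\partial\Delta_i$, followed by a minimum-modulus argument lifting it to every $z\in\C\setminus\bigcup_i\Delta_i$.

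First I would bound $|f(z)|$ on $\partial\Delta_i$, i.e., when $|z-z_i|=\sigma(z_i,f)/(tn)$. Factoring $f(z)=a_n(z-z_i)\prod_{k\ne i}(z-z_k)$ and, for each $k\ne i$, combining the triangle inequality with $|z_i-z_k|\ge \sigma(z_i,f)=tn\,|z-z_i|$ gives $|z-z_k|\ge |z_i-z_k|\bigl(1-\tfrac{1}{tn}\bigr)$. Multiplying over $k$, using $(1-1/(tn))^{n-1}\ge (1-1/n)^{n-1}>1/e$ and the identity $|f'(z_i)|=|a_n|\prod_{k\ne i}|z_i-z_k|$, together with the definition of $\mu(f,t)$, yields
\[
|f(z)|\;\ge\;\frac{\sigma(z_i,f)\,|f'(z_i)|}{e\,tn}\;\ge\;\frac{8n^2 t\,\mu(f,t)}{e\,tn}\;=\;\frac{8n}{e}\mu(f,t)\;>\;(n+1)\mu(f,t),
\]
where the last inequality follows from $8/e>2$.

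To extend this to all of $\C\setminus\bigcup_i\Delta_i$, I would consider the bounded open set $U:=\{z\in\C:|f(z)|<(n+1)\mu(f,t)\}$ and argue that every connected component $V$ of $U$ must contain at least one root of $f$: otherwise $1/f$ would be holomorphic on a neighborhood of the compact set $\overline V$, and the maximum modulus principle applied to $1/f$ would force $|f|\ge (n+1)\mu(f,t)$ throughout $V$, contradicting $V\subset U$. If $V$ contains the root $z_i$ then, by the boundary estimate, $V$ cannot cross $\partial\Delta_i$ (they would meet at a point where $|f|$ is simultaneously less than and greater than $(n+1)\mu(f,t)$), so $V\subseteq\Delta_i$, giving $U\subseteq\bigcup_i\Delta_i$, which is exactly (ii). For (i) I would apply Rouch\'e on each $\Delta_i$: any $z\in\overline{\Delta_i}$ satisfies $|z|\le |z_i|+\sigma(z_i,f)/(tn)\le \tfrac12+\tfrac{1}{tn}\le 1$ (using $|z_i|\le 1/2$ and $\sigma(z_i,f)\le 1$), so the perturbation $h:=f-\tilde f$, whose coefficients have modulus at most $\mu(f,t)$, obeys $|h(z)|\le (n+1)\mu(f,t)<|f(z)|$ on $\partial\Delta_i$ by (ii). Rouch\'e then shows that $f$ and $\tilde f$ have the same number of zeros in $\Delta_i$, and since $f$ is square-free $z_i$ is simple, yielding the unique counterpart $\tilde z_i\in\Delta_i$.

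The main obstacle is the second step, the lift from the boundary bound to the global statement in (ii). A direct global triangle-inequality estimate using the nearest-root bound $|z-z_k|\ge |z_i-z_k|/2$ picks up an unavoidable factor $2^{-(n-1)}$ and already fails to produce $(n+1)\mu(f,t)$ for $n\ge 4$; only the soft topological argument based on components of sublevel sets of $|f|$ keeps the constant $8n^2$ in the definition of $\mu(f,t)$ adequate.
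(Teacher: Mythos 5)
Your proposal is correct and follows essentially the same route as the paper: the same pointwise boundary estimate obtained by factoring $f$, using $|z_i-z_j|\ge\sigma(z_i,f)$ and $(1-\tfrac{1}{tn})^{n-1}>e^{-1}$ together with the definition of $\mu(f,t)$, the same Rouch\'e argument on each $\Delta_i$ for (i), and your sublevel-set/maximum-modulus step is just an explicit rendering of the paper's one-line remark that $|f|$ attains its minimum on $\C\setminus\bigcup_i\Delta_i$ on the circles $\partial\Delta_i$. The only cosmetic wrinkle is that working with $U=\{|f|<(n+1)\mu(f,t)\}$ literally yields the non-strict bound $|f(z)|\ge(n+1)\mu(f,t)$; running the identical component argument at the threshold $2n\mu(f,t)$, which still lies below your boundary bound $\tfrac{8n}{e}\mu(f,t)$, gives the strict inequality claimed in (ii).
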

\begin{proof}
Since all roots of $f$ are contained within $\Delta_{1/2}(0)$, it follows that $\sigma(z_{i},f)<1$ for all $i$ and, thus, each disc $\Delta_i$ is completely contained within the unit disc. For an arbitrary point $z\in\partial\Delta_i$ on the boundary of $\Delta_i$, we have
\begin{align}
|f(z)|&=|a_n|\prod_{j=1}^n|z-z_j|= \frac{\sigma(z_i,f)}{tn} \left(\prod_{1\le j\le n, j\neq i}\left|\frac{z-z_j}{z_i-z_j}\right|\right)\cdot |a_n|\left(\prod_{1\le j\le n, j\neq i}|z_i-z_j|\right)\nonumber \\ 
& = \frac{\sigma(z_i,f)|f'(z_i)|}{tn}\prod_{1\le j\le n, j\neq i}\left|\frac{z-z_j}{z_i-z_j}\right|\geq\frac{\sigma(z_i,f)|f'(z_i)|}{tn}\prod_{1\le j\le n, j\neq i}\frac{|z_i-z_j|-|z-z_i|}{|z_i-z_j|}\nonumber \\
&\ge \frac{\sigma(z_i,f)|f'(z_i)|}{tn}\left(1-\frac{1}{tn}\right)^{n-1}
>\frac{\sigma(z_i,f)|f'(z_i)|}{2.72\cdot tn}> (n+1)\mu(f,t).\nonumber
\end{align}
In addition, since $\tilde{f}\in [f]_{\mu(f,t)}$ and $|z|<1$, we have $|(f-\tilde{f})(z)|< (n+1)\mu(f,t)<|f(z)|$. Hence, (i) follows from Rouch\'e's Theorem applied to the discs $\Delta_i$ and the functions $f$ and $\tilde{f}$. 
For (ii), we remark that $f$ is a holomorphic function on $\C\backslash\bigcup_{i=1}^n \Delta_i$ and, thus, $|f(z)|$ becomes minimal for a point $z$ on the boundary of one of the discs $\Delta_i$. 
\end{proof}

From the last theorem, it follows that, for given $f$ as in (\ref{polyf}), it suffices to approximate the coefficients of $f$ to $\rho=O(\Sigma_f+\log n-\log|a_n|)=O(\Sigma_F+\log n)$ bits after the binary point to guarantee that each approximation $\tilde{f}\in [f]_{2^{-\rho}}$ has its roots at almost the same location as $f$.

\begin{cor}\label{cor:precision}
Let $f$ be a polynomial as defined in (\ref{polyf}) and $\rho\in\mathbb{N}$ be sufficiently large with respect to $f$, that is, $\rho\ge \rho_f$ with $\rho_f$ as defined in (\ref{sufficientlylarge}). Then, each root $z_i$ moves by at most $\frac{\sigma(z_i,f)}{64n^3}$ when passing from $f$ to an arbitrary approximation $\tilde{f}\in [f]_{2^{-\rho}}$. In particular, real roots of $f$ stay real and non-real roots stay non-real. Furthermore, for any $z\in\C$ with $|z-z_i|\ge \frac{\sigma(z_i,f)}{64n^3}$ for all $i$, it holds that $|f(z)|>(n+1)2^{-\rho_f}.$
\end{cor}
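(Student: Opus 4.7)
The plan is to obtain the corollary as a direct specialisation of Theorem~\ref{lem:precision} to the parameter choice $t = 64n^2$, handling separately the claim about real/non-real roots, which the theorem does not address.

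First I would note that, by definition of $\rho_f$ in (\ref{sufficientlylarge}), one has $2^{-\rho} \le 2^{-\rho_f} \le \mu(f, 64n^2)$, so any $\tilde f \in [f]_{2^{-\rho}}$ is in $[f]_{\mu(f, 64n^2)}$. Applying Theorem~\ref{lem:precision}(i) with $t = 64n^2$, the disc $\Delta_i := \Delta_{\sigma(z_i,f)/(64n^3)}(z_i)$ contains both $z_i$ and a corresponding root $\tilde z_i$ of $\tilde f$, which already yields the quantitative displacement bound. Similarly, part (ii) of the theorem combined with $(n+1)\mu(f, 64n^2) \ge (n+1) 2^{-\rho_f}$ gives the lower bound on $|f(z)|$ outside $\bigcup_i \Delta_i$.

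For the reality statement, I would first argue that the discs $\Delta_i$ are pairwise disjoint: for $j \neq i$, $|z_i - z_j| \ge \sigma(z_i,f)$, which vastly exceeds the sum of radii $2\sigma(z_i,f)/(64n^3)$. Hence the root $\tilde z_i$ singled out in $\Delta_i$ by Rouch\'e's argument (as used in the proof of Theorem~\ref{lem:precision}(i)) is uniquely determined, and the map $z_i \mapsto \tilde z_i$ is a bijection between the roots of $f$ and the roots of $\tilde f$. Now assume $\tilde f$ has real coefficients (the only case relevant in our setting, where the approximations are real binary fractions). If $z_i \in \R$, then $\Delta_i$ is symmetric about the real axis and contains the conjugate $\overline{\tilde z_i}$ as well; by uniqueness $\overline{\tilde z_i} = \tilde z_i$, so $\tilde z_i \in \R$. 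Conversely, if $z_i \notin \R$, its conjugate $\overline{z_i}$ equals some $z_j$ with $j \neq i$, and $\Delta_j$ is the reflection of $\Delta_i$; conjugation symmetry of $\tilde f$ gives $\tilde z_j = \overline{\tilde z_i}$, so if $\tilde z_i$ were real we would have $\tilde z_i = \tilde z_j$, contradicting that $\Delta_i$ and $\Delta_j$ are disjoint. Hence $\tilde z_i \notin \R$.

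The main obstacle, as usual in such perturbation statements, is the quantifier ``for all $i$'' implicit in the preservation of reality: one needs the disjointness of the family $\{\Delta_i\}$ to promote the per-disc Rouch\'e count into a global bijection before the conjugation symmetry argument can be applied. Once that is in place, everything else reduces to bookkeeping around the choice $t = 64n^2$ and the definition of $\rho_f$.
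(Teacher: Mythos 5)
Your proposal is correct and follows essentially the same route as the paper, which treats the corollary as an immediate specialisation of Theorem~\ref{lem:precision} to $t=64n^2$ together with $2^{-\rho}\le 2^{-\rho_f}\le\mu(f,64n^2)$. Your additional disjointness-plus-conjugation argument for the preservation of real and non-real roots (under the implicit assumption that $\tilde f$ has real coefficients) just makes explicit what the paper leaves unstated, and it is sound since $\sigma(z_i,f)/(64n^3)$ is far smaller than half the distance between any two roots, so each disc contains exactly one root of $\tilde f$ by the Rouch\'e count.
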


\subsection{The Descartes Method}\label{descartes}

We first resume some basic facts about the Descartes method for isolating the real roots of a polynomial $f(x) = \sum_{i=0}^n a_i x^n\in\R[x]$. Descartes' Rule of Signs states that the number $\var(f)$ of sign changes in the coefficient sequence of $f$,
that is, the number of pairs $(i,j)$ with $i < j$, $a_ia_j < 0$, and $a_{i+1} =
\ldots = a_{j-1} = 0$, is not smaller than and of the same parity as the number
of positive real roots of $f$. If
$\var(f) = 0$, then $f$ has no positive real root, and if $\var(f) = 1$,
$f$ has exactly one positive real root. The rule easily extends to an
arbitrary open interval $I=(a,b)$ via a suitable coordinate transformation: The mapping $x \mapsto a+(b-a)x$ maps $(0,1)$ bijectively onto $I$, that is, the roots of $f$ in $I$ exactly correspond to those of 
\begin{align}
f_I(x):=f_{[a,w(I)]}(x)=f(a+w(I)x)=f(a+(b-a)x)\label{polyfI}
\end{align} 
in $(0,1)$. Hence, the composition of $x \mapsto a+(b-a)x$ and $x \mapsto 1/(1+x)$ constitutes a bijective map from $(0,\infty)$ to $I$. It follows that the positive real roots of 
\begin{align*} 
f_{I,\operatorname{rev}}(x):= (1+x)^n f_I(\frac{1}{x+1})=(1+x)^n \cdot f(\frac{ax
+ b}{x + 1})\label{fIt}
\end{align*}
correspond bijectively to the real roots of $f$ in $I$. The factor $(1 + x)^n$ in the definition of
$f_{I,\rev}$ clears denominators and guarantees that $f_{I,\rev}$ is a polynomial. $f_{I,\operatorname{rev}}$ is computed from $f_I$ by reversing the coefficients followed by a Taylor shift by $1$. We now define $\var(f,I)$ as $\var(f_{I,\rev})$.

Based on Descartes' Rule of Sign, Vincent, Collins and Akritas introduced a bisection algorithm denoted $\textsc{Vca}$ for isolating the roots of $f$ in an interval $I_0$ (here, we assume that $I_0=(-1/2,1/2)$). 
We refer the reader to~\cite{Alesina-Galuzzi,Alesina-Galuzzi99,Alesina-Galuzzi00,BPR,collins-akritas:76,eigenwillig:thesis} for extensive treatments and
references.\vspace{0.5cm}
\hrule\vspace{0.2cm}
\noindent\textbf{\textsc{Vca}.} The algorithm requires that the real roots of $f$ in $I_0$ are simple, otherwise it diverges. In each step, a set $\mathcal{A}$
of active intervals is maintained. Initially, $\mathcal{A}$ contains $I_0$, and we stop as soon as $A$ is empty. In each iteration, some interval $I \in \mathcal{A}$ is
processed; If $\var(f,I)=0$, then $I$ contains no root of $f$ and we discard $I$. If $\var(f,I)=1$, then $I$ contains exactly one root of $f$ and
hence is an isolating interval for it. We add $I$ to a list $\mathcal{O}$ of isolating
intervals. If there is more than one sign change, we divide $I$ at its
midpoint $m(I)$ and add the subintervals to the set of active intervals. If $m(I)$ is a root of $f$, we add the trivial interval $[m(I),m(I)]$ to the list
of isolating intervals.\vspace{0.2cm}
\hrule\vspace{0.5cm} 
Correctness of the algorithm is obvious. Termination and complexity
analysis of the \textsc{Vca} algorithm rest on the
following theorem:

\begin{thm}[\cite{Obreshkoff25,Ostrowski:1950}]\label{Circle-Theorems}
Consider a polynomial $f\in\R[x]$, an interval $I=(a,b)$ and $v = \var(f,I)$. 
\begin{itemize}
\item[(i)] (One-Circle Theorem) If the open disc bounded by the circle centered at $m(I)$ and passing through the endpoints of $I$ contains no root of
$f(x)$, then $v = 0$.
\item[(ii)] (Two-Circle Theorem) If the union of the open discs bounded by the two circles centered at $m(I) \pm i
(1/(2\sqrt{3})) w(I)$ and passing through the endpoints of $I$ contains
exactly one root of $f(x)$, then $v = 1$. 
\end{itemize}\end{thm}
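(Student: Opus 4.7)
My plan is to transplant the question to the $y$-plane via the Möbius map $\phi(y) := \frac{ay+b}{y+1}$, under which $f_{I,\rev}(y) = (1+y)^n f(\phi(y))$, and then read off the sign-change count from the real factorization of $f_{I,\rev}$. Since $\phi$ sends $0 \mapsto b$, $\infty \mapsto a$, and the positive real ray bijectively onto $(a,b)$, it carries lines through the origin in the $y$-plane to circles through $a$ and $b$ in the $x$-plane. A short calculation using $\phi'(0) = a - b < 0$ (which rotates tangent directions by $\pi$) together with the tangent--chord angle shows that the image of the imaginary axis is the circle of diameter $[a,b]$, while the images of the lines through the origin at angles $\pm\pi/3$ are precisely the two circles centered at $m(I) \mp i\,w(I)/(2\sqrt 3)$. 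Because positive-real $y$ maps into the interior of each of the three discs, we deduce: the open disc of diameter $[a,b]$ corresponds to the right half-plane $\{\operatorname{Re} y > 0\}$, and the union of the two open discs corresponds to $\C \setminus \bar S$, where $\bar S := \{y \ne 0 : |\arg y - \pi| \le \pi/3\}$ is the closed sector around the negative real axis of half-angle $\pi/3$.

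For part (i), the hypothesis says $f_{I,\rev}$ has no root with positive real part. Factoring over $\R$, each real root $r \le 0$ yields $(y + |r|)$ and each pair $r,\bar r$ with $\operatorname{Re}(r) \le 0$ yields $y^2 - 2\operatorname{Re}(r)\,y + |r|^2$; both factor types have non-negative coefficients. Hence $f_{I,\rev}$ has all coefficients of one sign (that of the leading coefficient), and $\var(f_{I,\rev}) = 0$.

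For part (ii), the hypothesis gives exactly one root of $f_{I,\rev}$ in $\C \setminus \bar S$. Since $\bar S$ is symmetric about the real axis and non-real roots of a real polynomial come in conjugate pairs that lie jointly in or jointly outside $\bar S$, this root is real; and it must be some $y_0 > 0$ since $\bar S \supseteq (-\infty, 0]$. Accordingly, $f_{I,\rev}(y) = c \cdot (y - y_0) \cdot \prod_j q_j(y)$ where each $q_j$ comes from roots in $\bar S$: either $q_j(y) = y + \alpha$ with $\alpha \ge 0$, or $q_j(y) = y^2 + p y + q'$ with $p, q' \ge 0$ satisfying $p^2 = 4|r|^2 \cos^2(\arg r) \ge |r|^2 = q'$, because $\cos^2(\arg r) \ge 1/4$ throughout $\bar S$.

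The main obstacle is to upgrade $\var(f_{I,\rev}) \ge 1$ (immediate from Descartes applied to the single positive root $y_0$) to equality. I would prove by induction on the number of factors that multiplication by any admissible $q_j$ preserves $\var = 1$, starting from the base case $c(y - y_0)$. The linear case is a direct check on consecutive signs. For the quadratic case, write the current product as $\sum h_i y^i$ with its unique sign change between positions $k$ and $k+1$; after WLOG taking $h_i \le 0$ for $i \le k$ and $h_i \ge 0$ for $i \ge k+1$, the new coefficients $c_i := h_{i-2} + p h_{i-1} + q' h_i$ are automatically non-positive for $i \le k$ and non-negative for $i \ge k+3$. The only way to introduce a second sign change is the pattern $c_{k+1} > 0$ and $c_{k+2} < 0$, which forces $q' h_{k+1} > |h_{k-1}| + p|h_k|$ together with $|h_k| > p h_{k+1}$; eliminating $h_{k+1}$ yields $p^2 < q'$, contradicting $p^2 \ge q'$. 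Thus the sector condition of half-angle $\pi/3$, which is exactly $p^2 \ge q'$, is precisely what closes the argument, and explains the specific choice of centers $m(I) \pm i\, w(I)/(2\sqrt 3)$.
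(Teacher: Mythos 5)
The paper itself gives no proof of this theorem --- it is quoted from Obreshkoff/Ostrowski and the reader is referred to the literature --- so your argument stands on its own, and it is correct. The Möbius substitution $y\mapsto\phi(y)=\frac{ay+b}{y+1}$, the identification of the one-circle region with $\{\operatorname{Re}y>0\}$ and of the two-circle region with the complement of the closed sector of half-angle $\pi/3$ about the negative real axis (equivalently: $x$ lies in the respective region iff the angle at $x$ subtended by $[a,b]$ exceeds $\pi/2$, resp.\ $\pi/3$), and above all the key lemma --- multiplying a polynomial with exactly one sign change by $y+\alpha$ with $\alpha\ge 0$, or by $y^2+py+q'$ with $p,q'\ge 0$ and $p^2\ge q'$, again yields exactly one sign change --- are all sound; in the only dangerous pattern $c_{k+1}>0$, $c_{k+2}<0$ your two inequalities do force $q'>p^2$, and the sector condition gives exactly $p^2\ge q'$, with the strictness of the contradiction even covering roots on the boundary circles. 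This route is leaner than the classical treatments (Obreshkoff, Ostrowski, Krandick--Mehlhorn, Eigenwillig's thesis), which usually pass through ``normal'' polynomials or coefficient estimates for products of sector factors; your induction uses nothing beyond the $\var=1$ invariant itself. Three small points to polish: (a) the pairing of the lines at angles $\pm\pi/3$ with the two circles is the opposite of your ``$\mp$'' (the line at $+\pi/3$ maps to the circle centered at $m(I)+i\,w(I)/(2\sqrt{3})$), but since only the union is used nothing breaks; (b) roots of $f$ at the endpoints deserve a sentence: a root at $a$ only lowers $\deg f_{I,\rev}$, and a root at $b$ contributes a factor $y$, i.e.\ your $\alpha=0$ case, so you should include $0$ in $\overline{S}$ when translating ``exactly one root of $f_{I,\rev}$ outside $\overline{S}$''; (c) ``exactly one root'' in (ii) must be read with multiplicity (automatic in the paper's square-free setting, but worth stating).
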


Proofs of the one- and two-circle theorems can be found
in~\cite{Alesina-Galuzzi,eigenwillig:thesis,Krandick-Mehlhorn:Descartes,Obreshkoff25,Obreshkoff:book,Obreshkoff:book-english,Ostrowski:1950}.
Theorem~\ref{Circle-Theorems} implies that no interval $I$ of length $\sigma_f$
or less is split. Such an interval, recall that it is open, cannot contain two real roots
and its two-circle region cannot contain any nonreal root. Thus, $\var(f,I) \le 1$ by
Theorem~\ref{Circle-Theorems}. We conclude that the depth of the recursion tree is bounded by $1/\sigma_f$. Furthermore, it holds (see~\cite[Corollary 2.27]{eigenwillig:thesis} for a simple self-contained proof):

\begin{thm}\label{subad}
Let $I$ be an interval and $I_1$ and $I_2$ be two disjoint subintervals of $I$. Then, $$\var(f,I_1) + var(f,I_2) \le \var(f,I).$$
\end{thm}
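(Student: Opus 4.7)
The plan is to reduce to a single bisection and then prove that case using the Bernstein-coefficient picture together with the variation-diminishing property of de Casteljau subdivision. First I would observe that it suffices to prove the bisection inequality $\var(f,(a,c)) + \var(f,(c,b)) \le \var(f,(a,b))$ for all $a<c<b$; iterating this at the four endpoints of $I_1,I_2$ splits off disjoint sub-intervals of $I$ whose $\var$ contributions are non-negative and therefore can be dropped, yielding the general statement.

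For the bisection case I would switch to Bernstein coefficients. Writing $f(y) = \sum_{i=0}^{n}\beta_i \binom{n}{i}(y-a)^i(b-y)^{n-i}/(b-a)^n$ on $I=(a,b)$ and substituting $y=(ax+b)/(x+1)$ into the definition of $f_{I,\rev}$, the identities $y-a=(b-a)/(x+1)$ and $b-y=(b-a)x/(x+1)$ give $f_{I,\rev}(x) = \sum_i \beta_i \binom{n}{i} x^{n-i}$, whence $\var(f,I) = \var(\beta_0,\ldots,\beta_n)$. For $c=(1-\lambda)a+\lambda b$ with $\lambda\in(0,1)$, the de Casteljau iterates $\beta_j^{(0)}:=\beta_j$ and $\beta_j^{(k)}:=(1-\lambda)\beta_j^{(k-1)}+\lambda\beta_{j+1}^{(k-1)}$ produce the Bernstein coefficients of $f$ on $(a,c)$ as the left column $(\beta_0^{(0)},\ldots,\beta_0^{(n)})$ and on $(c,b)$ as the right diagonal $(\beta_n^{(0)},\beta_{n-1}^{(1)},\ldots,\beta_0^{(n)})$ of the de Casteljau triangle. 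The bisection inequality then reads
\[
\var(\text{left column}) + \var(\text{right diagonal}) \le \var(\text{top row}),
\]
which I would prove by induction on $n$. Applying the inductive hypothesis to the sub-triangle anchored at row $1$ (itself a degree-$(n-1)$ de Casteljau triangle with the same $\lambda$) gives $\var(\beta_0^{(1)},\ldots,\beta_0^{(n)}) + \var(\beta_{n-1}^{(1)},\ldots,\beta_0^{(n)}) \le \var(\beta_0^{(1)},\ldots,\beta_{n-1}^{(1)})$, so it remains to absorb the two boundary edges $(\beta_0^{(0)},\beta_0^{(1)})$ and $(\beta_n^{(0)},\beta_{n-1}^{(1)})$.

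For this I would form the length-$(2n+1)$ interleaved sequence $(\beta_0^{(0)},\beta_0^{(1)},\beta_1^{(0)},\beta_1^{(1)},\ldots,\beta_{n-1}^{(1)},\beta_n^{(0)})$ and invoke two elementary facts: (i) inserting a convex combination $\beta_j^{(1)}$ of its two current neighbours between them preserves $\var$; applied iteratively starting from row $0$, this shows that the interleaved sequence has variation equal to $\var(f,I)$; and (ii) $\var$ can only drop upon passing to a sub-sequence, so deleting the interior row-$0$ entries yields $(\beta_0^{(0)},\beta_0^{(1)},\beta_1^{(1)},\ldots,\beta_{n-1}^{(1)},\beta_n^{(0)})$ with $\var \le \var(f,I)$. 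The variation of this sub-sequence splits as the two endpoint sign-changes plus $\var(\beta_0^{(1)},\ldots,\beta_{n-1}^{(1)})$, and the same endpoint decompositions applied to $\var(\text{left column})$ and $\var(\text{right diagonal})$, combined with the inductive hypothesis, deliver the desired inequality. The hard part is the combinatorial bookkeeping for (i) and (ii) and for these sign-change decompositions in the presence of zero entries; under the paper's convention that $\var$ skips zeros, both facts can be verified by short direct case analyses.
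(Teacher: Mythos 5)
Your reduction to a single bisection, the identification $\var(f,I)=\var(\beta_0,\ldots,\beta_n)$ via the substitution $y=(ax+b)/(x+1)$, the de Casteljau setup, and facts (i) (inserting a strict convex combination of its two neighbours preserves $\var$) and (ii) (passing to a subsequence cannot increase $\var$) are all correct; note that the paper itself offers no proof but points to Eigenwillig's thesis, Corollary 2.27, which lives in exactly this Bernstein/de Casteljau territory, so your route is the intended one in spirit.

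The genuine gap is the final ``endpoint decomposition'' step. You need, in effect, $\var(\beta_0^{(0)},\beta_0^{(1)})+\var(\beta_0^{(1)},\ldots,\beta_0^{(n)})\ge\var(\beta_0^{(0)},\ldots,\beta_0^{(n)})$ (and its mirror image on the right diagonal), and this is simply false for sequences with zero connecting entries: $\var(1,0,-1)=1$ while $\var(1,0)+\var(0,-1)=0$. Since splitting $S$ at $\beta_0^{(1)}$ and $\beta_{n-1}^{(1)}$ only gives $\var(S)\ge\var(\beta_0^{(0)},\beta_0^{(1)})+\var(\text{row }1)+\var(\beta_{n-1}^{(1)},\beta_n^{(0)})$, the losses do not cancel automatically, so the induction does not close whenever $\beta_0^{(1)}=0$ or $\beta_{n-1}^{(1)}=0$ -- a situation that genuinely occurs in de Casteljau triangles. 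No case analysis of the zero-skipping convention alone can repair this; you must use the triangle structure. One fix: every entry $\beta_0^{(k)}$ with $k\ge 1$ is a combination of $\beta_0^{(1)},\ldots,\beta_{k-1}^{(1)}$ with strictly positive weights, so the first nonzero entry of the left column below the top has the same sign as the first nonzero entry of row $1$; hence whenever prepending $\beta_0^{(0)}$ to the left column creates an extra sign change, prepending it to row $1$ (i.e.\ inside $S$) creates one as well, which restores the missing unit (symmetrically on the right). Alternatively, avoid the endpoint surgery altogether: keep the whole boundary path $P_k$ (left edge down to row $k$, across row $k$, right edge back up), show $\var(P_{k+1})\le\var(P_k)$ by your insert-then-delete move applied only to the middle portion, and conclude with $\var(L)+\var(R)\le\var(P_n)\le\var(P_0)=\var(f,I)$, using only the valid superadditivity of $\var$ under concatenation at the shared apex.
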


According to the above theorem, there cannot be more than $n/2$ intervals $I$ with $\var(f,I) \ge 2$ at any level of the recursion. Therefore, the size of the recursion tree $T_{\textsc{Vca}}$ is bounded by $-n\log\sigma_f$. For polynomials with integer coefficients of maximal bitsize $\tau$, it is shown that $-\log\sigma_f=O(n(\log n+\tau))$, thus, the latter bound writes as $\Otilde(n^2\tau)$. However, a more refined argumentation~\cite{eigenwillig:thesis} shows that $|T_{\textsc{Vca}}|$ is even bounded by $\Otilde(n\tau)$.

The  computation of $f_{I,\rev}$ at each node of the tree is costly. It is better to store with every interval $I = (a,b)$ the polynomial $f_I(x)= f(a + x(b - a))$. If $I$ is split at its midpoint $m(I)$ into $I_l = (a,m(I))$ and $I_r =
(m(I),b)$, the polynomials associated with the subintervals are
$f_{I_l}(x) = f_I(\frac{x}{2})$ and $f_{I_r}(x)= f_I(\frac{1 + x}{2}) =
f_{I_l}(1 + x)$.
Also, $f_{I,\rev}(x) =
(1 + x)^n f_I(\frac{1}{1+x})$. If the coefficients of $f$ are
integers (or dyadic fractions) of bitsize $\tau$, then the coefficients grow by $n$ bits in every 
bisection step. Thus, for a node $I$ of depth $h$, the bitsize $\tau_h$ of the coefficients of $f_I$ is given bounded by $\tau_h=\tau+nh$. Hence, using asymptotically fast 
Taylor shift (see~\cite{GG97,Ger04}), the
number of bit operations needed to compute $f_{I_l}$, $f_{I_{r}}$ and $f_{I,\rev}$ 
from $f_I$ is in $\Otilde(n(nh+\tau))$. Since the depth of the recursion tree is bounded by $\Otilde(n\tau)$, each $f_I$ has coefficients of bitsize $\Otilde(n^2\tau)$ and, thus, the cost at each node is in $\Otilde(n^3\tau)$. Eventually, the total cost for \textsc{Vca} is in $\Otilde(n^3\tau)\cdot \tilde{O}(n\tau)=\Otilde(n^4\tau^2)$.

\section{A Modified Descartes Method}\label{modifieddescartes}

In the REAL-RAM model, where exact
operations on real numbers are assumed to be available at unit
costs, the Descartes method can directly be used to isolate the real roots of the polynomial $f$ as defined in (\ref{polyf}). Then, for each node $I$ of the recursion tree, we have to compute the number $\var(f,I)=\var(f_{I,\rev})$ of sign variations for the polynomial $f_{I,\rev}$ and the sign of $f$ at the midpoint $m(I)$. However, for an actual implementation, these computations turn out to be hard in general because the coefficients of $f$ are arbitrary real numbers. To overcome this issue, we aim to only consider approximations of $f_I$ and $f_{I,\rev}$ instead. In Section~\ref{algorithm}, we will show that, for sufficiently good approximations of $f$, this approach is feasible. However, our approach does not directly apply to the Descartes method but to a slight modification of it.

For our modified version of the Descartes method, we aim to replace the inclusion predicate $\var(f,I)=1$ by a predicate used in the Bolzano method; see Corollary~\ref{inclusion}. Section~\ref{ttest} resumes some useful results which are adopted from our studies on the Bolzano method~\cite{sagraloff-yap:ceval:09} whereas, in Section~\ref{dcm}, our modified version is formulated. 

\subsection{The $\T_K^g(m,r)$-Test: Existence of Roots}\label{ttest}

For $g\in\C[x]$, $m\in\C$ and positive real values $K$ and $r$, we consider the test
	\begin{equation}
	\T^g_K(m,r):\quad t_K^g(m,r):=|g(m)| - K \sum_{k\ge 1}
		\left|\frac{g^{(k)}(m)}{k!} \right|r^k>0.\label{TK-test}
	\end{equation}
	
In order to simplify notation, we also write $\T_{K}^{g}(\Delta)$ or $\T_{K}^{g}(I)$ instead of $\T_{K}^{g}(m,r)$, where $\Delta=\Delta_{r}(m)$ or $I=(a,b)$ an interval with midpoint $m=m(I)$ and radius $r=r(I)$. If the polynomial $g$ is fixed and no mix-up is possible, we further omit the ''$g$'' and write $\T_{K}(m,r)$ for $\T^g_K(m,r)$ and $\T_{K}'(m,r)$ for $\T^{g'}_K(m,r)$. We mainly use $K=3/2$. Therefore, whenever the "$K$" is suppressed (i.e., we write $\T^g(m,r)$ instead of $\T_{3/2}^g(m,r)$), we consider $K=3/2$. Before presenting the main technical lemmata, we first summarize the following useful properties of $\T_K^g(m,r)$:
\begin{itemize}
\item If $\T^g_K(m,r)$ holds, then $\T^g_{K'}(m,r)$ holds for all $K'\leq K$ and all $r'\le r$.
\item For arbitrary values $m$, $r$ and $\lambda\neq 0$, the test $\T_K^g(m,r)$ is equivalent to $\T_K^{g_{[m,\lambda]}}(0,r/\lambda)$ because of $t_K^{g_{[m,\lambda]}}(0,r/\lambda)=t_K^g(m,r)$. In particular, for an interval $I=(a,b)$, the test $\T_K^{g_I}(0,r)$ is equivalent to $\T_K^{g}(a,rw(I))$, where $g_I(x)=g(a+w(I)x)$. 
\item For $\lambda\in\R^+$, $t^g_K(m,r)=t^{\lambda g}_K(m,r)\cdot\lambda^{-1}$ and, thus, $\T^{g}_K(m,r)$ is equivalent to $\T^{\lambda g}_K(m,r)$. Hence, $\T^{(g')_I}_K(m,r)$ and $\T^{(g_I)'}_K(m,r)$ are equivalent since $(g_I)'=(g(a+w(I)x))'=w(I)(g')_I$.
\end{itemize} 

The $\T_K^g(m,r)$-test serves as exclusion predicate but might also guarantee that a certain disc contains at most one root. We refer to~\cite[Theorem 3.2]{bes:bisolve:11} for a proof of the following lemma.

\begin{lem}
\label{lem:test}
Consider a disc $\Delta=\Delta_m(r)\subset\C$ and a polynomial $g\in\R[x]$:
\begin{itemize}
\item[(i)]
	If $\T_{K}(\Delta)$ holds for a $K\ge 1$, then $\bar{\Delta}$ contains no root of $g$ and
	\[
	(1-\frac{1}{K})|g(m)|<|g(z)|<(1+\frac{1}{K})|g(m)|
	\]
	for all $z$ in the closure $\bar{\Delta}$ of $\Delta$.
\item[(ii)]
	If $\T'_{3/2}(\Delta)$ holds, then
	$\bar{\Delta}$ contains at most one root of $g$.
\end{itemize}
\end{lem}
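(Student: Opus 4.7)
For part (i), I will use the Taylor expansion of $g$ around $m$. Since $g$ is a polynomial, the expansion $g(z)=g(m)+\sum_{k\ge 1}\frac{g^{(k)}(m)}{k!}(z-m)^k$ is exact and finite. For any $z\in\bar{\Delta}$, so $|z-m|\le r$, the triangle inequality gives
\[
|g(z)-g(m)|\le\sum_{k\ge 1}\left|\frac{g^{(k)}(m)}{k!}\right|r^k.
\]
The hypothesis $\T_K^g(m,r)$ (strict inequality) rewrites the right-hand side as strictly less than $|g(m)|/K$, and the reverse and forward triangle inequalities immediately give $(1-\tfrac{1}{K})|g(m)|<|g(z)|<(1+\tfrac{1}{K})|g(m)|$. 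Since $K\ge 1$, the lower bound is non-negative, so $g(z)\neq 0$. This handles (i).

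For part (ii), I will apply (i) to the derivative $g'$ with $K=3/2$. This yields both that $g'$ does not vanish on $\bar{\Delta}$ and the sharper quantitative bound $|g'(z)-g'(m)|<\tfrac{2}{3}|g'(m)|$ for every $z\in\bar{\Delta}$. Equivalently, the ratio $g'(z)/g'(m)$ lies in the open disc of radius $2/3$ around $1$, so in particular $\operatorname{Re}(g'(z)/g'(m))>1/3$ uniformly on $\bar{\Delta}$. From here the plan is a complex Rolle-type argument: suppose for contradiction that $g$ has two distinct roots $z_1,z_2\in\bar{\Delta}$. Since $\bar{\Delta}$ is convex, the segment $z(t):=z_1+t(z_2-z_1)$ lies in $\bar{\Delta}$, and the fundamental theorem of calculus gives
\[
0\;=\;g(z_2)-g(z_1)\;=\;(z_2-z_1)\int_0^1 g'(z(t))\,dt.
\]
Dividing by $(z_2-z_1)g'(m)\neq 0$ and taking real parts yields $\int_0^1\operatorname{Re}(g'(z(t))/g'(m))\,dt=0$, which contradicts the uniform lower bound $>1/3$ derived above.

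The only non-routine step is the passage in part (ii) from ``$g'$ is nonzero on $\bar{\Delta}$'' to ``$g$ has at most one root on $\bar{\Delta}$''; the naive complex Rolle statement is false in general, and it is precisely the quantitative bound from part (i), combined with convexity of the disc, that pins the values of $g'$ into a half-plane avoiding the origin and makes the line-integral argument go through. Once that half-plane bound is in hand the contradiction is immediate, and everything else is a direct application of the truncated Taylor expansion.
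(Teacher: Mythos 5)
Your proof is correct. Note that the paper itself does not prove Lemma~\ref{lem:test} at all: it simply defers to \cite[Theorem 3.2]{bes:bisolve:11}, so your write-up supplies a self-contained argument where the paper has only a citation. Part (i) is the standard truncated-Taylor/triangle-inequality computation (and you correctly observe that $\T_K^g(m,r)>0$ forces $|g(m)|>0$, so the strict lower bound $(1-\tfrac1K)|g(m)|<|g(z)|$ excludes roots even for $K=1$). For part (ii), your key observation is the right one: the statement of (i) applied to $g'$ only bounds $|g'(z)|$ and would not suffice, but the intermediate estimate $|g'(z)-g'(m)|<\tfrac{2}{3}|g'(m)|$ from the same Taylor bound pins $g'(z)/g'(m)$ into the half-plane $\operatorname{Re}(w)>\tfrac13$, after which convexity of $\bar{\Delta}$ and the identity $0=g(z_2)-g(z_1)=(z_2-z_1)\int_0^1 g'\bigl(z_1+t(z_2-z_1)\bigr)\,dt$ give an immediate contradiction (division by $g'(m)\neq 0$ being justified by the nonvanishing of $g'$ on $\bar{\Delta}$). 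This is essentially the argument used in the cited reference, so there is no gap; your proof stands on its own.
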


The $\T_{3/2}'(m,r)$-test now easily applies as an inclusion predicate:

\begin{cor}\label{inclusion} Let $I=(a,b)$ be an interval such that $\T_{3/2}^{g_I'}(0,r)$ holds for an $r\ge 1$. Then, $I$ contains a root $\xi$ of $g$ exactly if $g(a)\cdot g(b)< 0$. In the latter case, the disc $\Delta_{rw(I)}(a)$ is isolating for $\xi$.
\end{cor}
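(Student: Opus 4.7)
\medskip

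\noindent\textbf{Proof plan.} The plan is to translate the hypothesis on $\T_{3/2}^{g_I'}(0,r)$ into a statement about $g'$ on a disc centred at $a$, and then combine part (i) of Lemma~\ref{lem:test} (which gives strict monotonicity of $g$) with part (ii) (which bounds the number of roots) to obtain both the existence/non-existence characterization and the isolation claim.

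First, using the translation rules listed just before Lemma~\ref{lem:test} (namely $\T_K^{g_I}(0,r)\Leftrightarrow\T_K^{g}(a,rw(I))$ together with $\T_K^{(g_I)'}\Leftrightarrow\T_K^{(g')_I}$), I would observe that the hypothesis $\T_{3/2}^{g_I'}(0,r)$ is equivalent to $\T_{3/2}^{g'}(\Delta)$ for the disc $\Delta:=\Delta_{rw(I)}(a)$. Since $r\ge 1$, the closed disc $\bar\Delta$ contains the closed real interval $[a,b]$ and in particular the open interval $I=(a,b)$.

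Second, I would apply Lemma~\ref{lem:test}(i) to the polynomial $g'$ with $K=3/2\ge 1$. This gives that $g'$ has no zero on $\bar\Delta$ and, more strongly, that $|g'(z)|$ stays within a constant factor of $|g'(a)|$ on $\bar\Delta$. In particular $g'$ keeps a constant sign on $[a,b]\subset\bar\Delta$, so $g$ is strictly monotone on $[a,b]$. From strict monotonicity and the intermediate value theorem it is then immediate that the open interval $I$ contains a root of $g$ if and only if $g(a)\cdot g(b)<0$: in the strict-sign case there is exactly one such root $\xi\in(a,b)$, while otherwise any root of $g$ in $[a,b]$ must coincide with an endpoint and therefore lie outside $I$.

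Third, for the isolation statement I would invoke Lemma~\ref{lem:test}(ii), again applied to the hypothesis $\T_{3/2}^{g'}(\Delta)$, which asserts that $\bar\Delta=\bar\Delta_{rw(I)}(a)$ contains at most one root of $g$. Combined with the previous step, when $g(a)\cdot g(b)<0$ this unique root is precisely $\xi\in I\subset\Delta$, and hence $\Delta_{rw(I)}(a)$ is an isolating disc for $\xi$.

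The argument is really only a careful book-keeping exercise; I do not anticipate a genuine obstacle. The main point that needs to be handled with some care is the translation of the $\T$-test under the affine change of variables $x\mapsto a+w(I)x$ (so that the test performed on $g_I'$ at the origin really does correspond to the test on $g'$ over the disc $\Delta_{rw(I)}(a)$ around $a$), and the distinction between the open interval $I$ in the statement and the closed disc $\bar\Delta$ on which Lemma~\ref{lem:test} gives information; this is what forces the use of the strict inequality $g(a)\cdot g(b)<0$ rather than a weak inequality.
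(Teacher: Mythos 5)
Your proposal is correct and follows essentially the same route as the paper: translate $\T_{3/2}^{g_I'}(0,r)$ into $\T_{3/2}^{g'}(a,rw(I))$, use Lemma~\ref{lem:test}(i) to get that $g'$ has no zero on $\bar\Delta_{rw(I)}(a)\supset[a,b]$ (hence $g$ is monotone and the sign-change criterion follows), and use Lemma~\ref{lem:test}(ii) for the isolation of $\xi$ in $\Delta_{rw(I)}(a)$. Your extra care about the open interval versus its endpoints is a fine (slightly more explicit) treatment of a point the paper leaves implicit.
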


\begin{proof}
If $\T_{3/2}^{g_I'}(0,r)$ holds, then $\T_{3/2}^{g'}(a,rw(I))$ holds as well. It follows that the disc $\Delta_{rw(I)}(a)$ and, thus, $I$ contains no root of the derivative $g'$. Now, since $f$ is monotone on $I$, it suffices to check for a sign change of $g$ at the endpoints of $I$. Namely, there exists a root $\xi$ of $g$ in $I$ if and only if $g(a)g(b)< 0$. In case of existence, $\Delta_{rw(I)}(a)$ is isolating for $\xi$ due to Lemma~\ref{lem:test}.
\end{proof}

In order to show that the $\T'_{3/2}(m,r)$-test in combination with sign evaluation is an efficient inclusion predicate, we give lower bounds on $r$ in terms of $\sigma_g$ such that the predicate succeeds under guarantee.

\begin{lem}
\label{lem:success}
For $g$ a polynomial of degree $n$, a disc $\Delta=\Delta_r(m)\subset\C$, an interval $I=(a,b)$ and $I^+=(a-\frac{w(I)}{4n},b+\frac{w(I)}{4n})$, it holds that:
\begin{itemize}
\item[(i)] If $r\le\frac{\sigma_g}{4n^2}$, then $\T(\Delta)$ or $\T'(\Delta)$ holds.
\item[(ii)] If $\Delta$ contains a root $\xi$ of $g$ and $r\le\frac{\sigma(\xi,f)}{4n^2}$, then $\T'(\Delta)$ holds.
\item[(iii)] If $\var(g,I^+)>0$ and $\T^{g_I'}(0,2)$ fails, $\Delta_{2w(I)}(a)$ contains a root $\xi$ of $g$ with $\sigma(\xi,g)<8n^2w(I)$.
\item[(iv)] If $\var(g',I)> 0$ and $\T^{g_I}(0,1)$ fails, $\Delta_{2nw(I)}(a)$ contains a root $\xi$ of $g$ with $\sigma(\xi,g)<4n^2w(I)$.
\end{itemize}
\end{lem}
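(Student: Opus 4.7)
The plan is to base all four parts on a single identity. Expanding $g(x)=a_n\prod_i\bigl((x-m)-(z_i-m)\bigr)$ and bounding each resulting elementary symmetric sum by the triangle inequality gives
\[
\sum_{k\ge 1}\left|\frac{g^{(k)}(m)}{k!}\right| r^k \;\le\; |g(m)|\Bigl(\textstyle\prod_{i=1}^n(1+r/|m-z_i|)-1\Bigr),
\]
so $\T^g_K(m,r)$ holds whenever $\prod_i(1+r/|m-z_i|)<1+1/K$; with $K=3/2$ the product must lie below $5/3$. Applying the same identity to $g'$ controls $\T'$.

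For parts (i) and (ii), I let $\xi$ be the root of $g$ closest to $m$ and split on the magnitude of $|m-\xi|$. A key auxiliary bound I prove first is that every root $\zeta$ of $g'$ satisfies $|\zeta-\xi|\ge \sigma(\xi)/n$: if $\xi$ is also the root of $g$ closest to $\zeta$, then $\sum_i 1/(\zeta-z_i)=0$ combined with $|\zeta-z_i|\ge\sigma(\xi)-|\zeta-\xi|$ for $z_i\neq\xi$ gives $\sigma(\xi)\le n|\zeta-\xi|$; otherwise the closest root $\xi'$ of $g$ to $\zeta$ yields $|\zeta-\xi|\ge|\xi-\xi'|/2\ge\sigma(\xi)/2$. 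In the case $|m-\xi|\ge\sigma(\xi)/(2n)$ of (i), the factor at $\xi$ contributes at most $1+1/(2n)$, while for $z_i\neq\xi$ one has $|m-z_i|\ge\sigma(\xi)/2$ (either $|m-\xi|\le|\xi-z_i|/2$ and the triangle inequality applies, or $|m-z_i|\ge|m-\xi|\ge|\xi-z_i|/2$ by the choice of $\xi$), so the full product is bounded by $(1+1/(2n))(1+1/(2n^2))^{n-1}\le e^{1/n}<5/3$ for $n\ge 2$ and $\T(\Delta)$ holds. In the complementary case, which covers (ii) since $|m-\xi|<r\le\sigma(\xi)/(4n^2)$, the auxiliary bound gives $|m-\zeta_j|>\sigma(\xi)/n-\sigma(\xi)/(2n)=\sigma(\xi)/(2n)$, so $r/|m-\zeta_j|\le 1/(2n)$ and the product is $\le(1+1/(2n))^{n-1}\le e^{1/2}<5/3$, showing $\T'(\Delta)$.

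Parts (iii) and (iv) reduce to (i)/(ii) via the One-Circle Theorem and the scaling identities $\T^{g_I}(0,\lambda)\Leftrightarrow\T^g(a,\lambda w(I))$ and $\T^{g_I'}(0,\lambda)\Leftrightarrow\T^{g'}(a,\lambda w(I))$. For (iii), $\var(g,I^+)>0$ together with the One-Circle Theorem produces a root $\xi$ of $g$ in $\Delta_{w(I^+)/2}(m(I))$, which a short calculation shows lies inside $\Delta_{2w(I)}(a)$; the failure of $\T^{g'}(a,2w(I))$ then gives $\sigma(\xi)<8n^2w(I)$ directly from the contrapositive of (ii). For (iv), the One-Circle Theorem applied to $g'$ on $I$ produces $\zeta\in\Delta_{w(I)/2}(m(I))$ with $g'(\zeta)=0$ and $|\zeta-a|<w(I)$. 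The failure of $\T^g(a,w(I))$, i.e., $\prod_i(1+w(I)/|a-z_i|)\ge 5/3$, combined with $\ln(1+x)\le x$ and $\ln(5/3)>1/2$, yields a root $\xi_a$ of $g$ with $|a-\xi_a|<2nw(I)$, so $\xi_a\in\Delta_{2nw(I)}(a)$. Now let $\xi^*$ be the root of $g$ closest to $\zeta$; the inequality $\sigma(\xi^*)\le n|\zeta-\xi^*|$ reused from (ii), together with $|\zeta-\xi^*|\le|\zeta-\xi_a|<(2n+1)w(I)$, either gives the claim immediately when $\xi_a=\xi^*$, or else yields $\sigma(\xi_a)\le|\xi_a-\xi^*|\le 2|\zeta-\xi_a|<(4n+2)w(I)\le 4n^2w(I)$ for $n\ge 2$. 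The main obstacle throughout is keeping the constants tight: the estimates $e^{1/2}<5/3$ in (i)/(ii) and $\ln(5/3)>1/2$ in (iv) show that the thresholds $\sigma_g/(4n^2)$, $8n^2w(I)$, and $4n^2w(I)$ are only just enough.
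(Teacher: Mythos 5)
Your proposal is correct, and it is in places more self-contained than the paper's own argument. For parts (i) and (ii) the paper simply cites an external lemma from the bitstream-Descartes paper, whereas you give a direct proof via the bound $\sum_{k\ge 1}|g^{(k)}(m)/k!|\,r^k\le |g(m)|\bigl(\prod_i(1+r/|m-z_i|)-1\bigr)$, so that $\T_{3/2}$ reduces to the product being below $5/3$; your auxiliary estimate that no root $\zeta$ of $g'$ lies within $\sigma(\xi,g)/n$ of a simple root $\xi$ is exactly the fact the paper invokes from the literature in its proof of (iv), and your two-case derivation of it (closest root versus not) is sound. Part (iii) is argued exactly as in the paper: one-circle theorem on $I^+$ places a root in $\Delta_{w(I^+)/2}(m(I))\subset\Delta_{2w(I)}(a)$, and the contrapositive of (ii) applied to the failing test $\T^{g'}(a,2w(I))$ gives $\sigma(\xi,g)<8n^2w(I)$. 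For (iv) your skeleton matches the paper's (a root of $g$ near $a$ from the failure of $\T^{g_I}(0,1)$, a root of $g'$ near $m(I)$ from the one-circle theorem, then the critical-point separation fact), but you extract the nearby root of $g$ from the product form $\prod_i(1+w(I)/|a-z_i|)\ge 5/3$ together with $\ln(1+x)\le x$, while the paper bounds the power sums $|g^{(k)}(a)/g(a)|$ directly; the two exclusion computations are interchangeable, and your version is arguably cleaner (the paper's displayed estimate is off by the omitted $1/k!$ factors, which are needed to get below the $2/3$ threshold). Your detour through the root $\xi^*$ closest to $\zeta$ is unnecessary — your auxiliary bound applies directly to $\xi_a$, giving $\sigma(\xi_a,g)\le n|\zeta-\xi_a|<n(2n+1)w(I)<4n^2w(I)$ as in the paper — but the case analysis you give is also correct for $n\ge 2$. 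Two harmless caveats worth stating explicitly: the identity $\sum_i 1/(\zeta-z_i)=0$ requires $g(\zeta)\neq 0$ (i.e., simple roots, which holds in every application since $f$ is square-free), and in (iv) the product characterization of the failed test presupposes $g(a)\neq 0$, the case $g(a)=0$ being trivial since $a$ itself is then the required root.
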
  

\begin{proof}
For the proof of (i) and (ii), we refer to~\cite[Lemma 5]{s-bitstream-mcs11}. 
For (iii), suppose that $\var(g,I^+)>0$ and $\T^{g_I'}(0,2)$ does not hold. Then, according to Theorem~\ref{Circle-Theorems} (i), the disc $\Delta_{w(I^+)/2}(m(I))\subset \Delta_{2w(I)}(a)$ contains a root $\xi$ of $g$. With (ii), it follows that $2w(I)>\frac{\sigma(\xi,g)}{4n^2}$ and, thus, $\sigma(\xi,g)<8n^2w(I)$. For (iv), we first argue by contradiction that $\Delta_{2nw(I)}(a)$ contains a root $\xi$ of $g$: If $|a-x_i|\ge 2nw(I)$ for all roots $x_i$ of $g$, then 
\begin{align*}
\left|\frac{g^{(k)}(a)}{g(a)}\right|&=\left|\sum\nolimits_{i_1,\ldots,i_k}^{\prime}\frac{1}{(a-x_{i_1})\ldots(a-x_{i_k})}\right|
\leq\left(\sum\nolimits_{i=1}^n\frac{1}{|a-x_i|}\right)^k
\leq\left(\frac{1}{2w(I)}\right)^{k},
\end{align*}
where the prime means that the $i_j$'s ($j=1\ldots k$)
are chosen to be distinct. It follows that $T^g(a,w(I))$ holds because of $\sum_{k=1}^n \left|\frac{g^{(k)}(a)}{g(a)}\right| w(I)^k\le \sum_{k=1}^n 2^{-k}<1<\frac{3}{2}$. In addition, Theorem~\ref{Circle-Theorems} guarantees the 
existence of a root $\xi'\in\Delta_{w(I)/2}(m(I))$ of $g'$. Hence, we have 
$|\xi-\xi'|<2nw(I)+r(I)<4nw(I)$ which implies $\sigma(\xi,g)<4n^2w(I)$ due to 
the fact~\cite{Eigenwillig2007a,yap-fundamental} that there exists no root of 
the derivative $g'$ in $\Delta_{\sigma(\xi,g)/n}(\xi)$. 
\end{proof}\vspace{0.25cm}

\subsection{$\dcm$: A Modified Descartes Algorithm}\label{dcm}
 
We introduce our modified Descartes method \textsc{Dcm} (short for ``Descartes modified'') to isolate the real roots of a polynomial $f$ as defined in (\ref{polyf}). We formulate the algorithm in the REAL-RAM model, thus, it still does not directly apply to bitstream polynomials. However, in Section~\ref{dcmL}, we will present a corresponding version $\dcm^\rho$ of $\textsc{Dcm}$ which resolves this issue; see also Appendix, Algorithm~\ref{alg:dcm} for pseudo-code of $\dcm$.\vspace{0.5cm}
\hrule\vspace{0.2cm}
\noindent\textbf{\textsc{Dcm}.} $\dcm$ maintains a list $\mathcal{A}$ of active nodes and a list $\mathcal{O}$ of isolating intervals, where we initially set $\mathcal{O}=\emptyset$ and $\mathcal{A}:=\{(I_0,f_{I_0})\}$ with $I_0:=(-\frac{1}{2},\frac{1}{2})$. For each active node $(I,f_{I})\in\mathcal{A}$, we proceed as follows. We remove $(I,f_{I})$ from $\mathcal{A}$. Then, we compute the number $v_{I^+}:=\var(f,I^+)=\var(f_{I^{+},\rev})$ of sign variations for $f$ on the extended interval $I^+$.
We remark that $f_{I^{+}}(x)=f_{I}(-\frac{1}{4n}+(1+\frac{1}{2n})x)$ and $f_{I^{+},\rev}(x)=(1+x)^{n}f_{I^{+}}(\frac{1}{1+x})$.
If $v_{I^+}=0$, we do nothing. If $v_{I^+}\ge 1$, we consider the test $\T^{f_I'}(0,2)$ which is equivalent to $\T^{f'}(a,2w(I))$. If it fails, then $I$ is subdivided into $I_{l}=(a,m(I))$ and $I_{r}=(m(I),b)$ and we add $(I_{l},f_{I_{l}})=(I_{l},f_{I}(\frac{x}{2}))$ and $(I_{r},f_{I_{r}})=(I_{r},f_{I_{l}}(x+1))$ to $\mathcal{A}$. Otherwise, we evaluate the sign $s$ of $f(a^+)\cdot f(b^+)=f_{I^{+}}(0)\cdot f_{I^{+}}(1)$. If $s<0$ and $I^+$ is disjoint from any other interval in $\mathcal{O}$, we add $I^+$ to $\mathcal{O}$.
If $s\ge 0$ or $I$ intersects an interval in $\mathcal{O}$, we do nothing. The algorithm stops when $\mathcal{A}$ becomes empty.\vspace{0.2cm}
\hrule\vspace{0.5cm}

\begin{thm}\label{dcmltermination}
For the polynomial $f$ as defined in (\ref{polyf}), $\dcm$ terminates and returns a list $\mathcal{O}=\{I_1,\ldots,I_m\}$ of disjoint isolating intervals for \emph{all} real roots of $f$.
\end{thm}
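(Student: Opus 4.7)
My first step is to bound the depth of the recursion. $\dcm$ subdivides an interval $I=(a,b)$ only when $v_{I^+}\ge 1$ and $\T^{f_I'}(0,2)$ fails. Lemma~\ref{lem:success}(iii) then produces a root $\xi$ of $f$ in $\Delta_{2w(I)}(a)$ with $\sigma(\xi,f)<8n^2 w(I)$, forcing $w(I)>\sigma_f/(8n^2)$. Since $w(I_0)=1$ and widths halve along every branch, no branch can descend further than $\lceil\log(8n^2/\sigma_f)\rceil$ levels, so the recursion tree is finite and $\dcm$ halts.

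\textbf{Isolating.} Next I would show that every $I^+$ placed into $\mathcal{O}$ isolates exactly one real root. When $I^+$ is inserted, $\T^{f_I'}(0,2)$ holds, which by the test equivalences listed in Section~\ref{ttest} is the same as $\T_{3/2}^{f'}(a,2w(I))$. Applying Lemma~\ref{lem:test}(i) with $g=f'$ yields that $f'$ has no root in $\overline{\Delta_{2w(I)}(a)}$; in particular $f'$ does not vanish on $[a-2w(I),a+2w(I)]\supseteq\overline{I^+}$, so $f$ is strictly monotone on $\overline{I^+}$. Together with $f(a^+)\cdot f(b^+)<0$, the intermediate value theorem yields exactly one real root of $f$ in $I^+$.

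\textbf{Completeness.} The delicate part is showing that every real root of $f$ is captured. My plan is to analyse the subdivision tree produced by $\dcm$ once it has terminated. Each leaf falls into one of three disjoint classes: (D) \emph{discarded}, with $v_{L^+}=0$; (C) \emph{certified}, with $L^+\in\mathcal{O}$; and (N) \emph{neither}, meaning $\T^{f_L'}(0,2)$ succeeded but either $s\ge 0$ or $L^+$ met some earlier interval of $\mathcal{O}$. The key geometric observation is that for any two adjacent leaves $L_1,L_2$ sharing a boundary $c$, each of $L_1^+,L_2^+$ contains a neighbourhood of $c$, so the union $\bigcup_{L}L^+$ covers $I_0$ without gaps. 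Hence every real root $\xi\in I_0$ lies in $L^+$ for some leaf $L$. A (D)-leaf is excluded by Descartes' Rule, which forbids real roots in $L^+$ when $v_{L^+}=0$. For an (N)-leaf I would reuse the monotonicity argument of the Isolating paragraph: if $s\ge 0$, strict monotonicity on $\overline{L^+}$ forbids any real root of $f$ in $L^+$; while if $L^+\cap J\neq\emptyset$ for some $J\in\mathcal{O}$, the monotonicity directions on $L^+$ and on $J$ must agree on the overlap, so $f$ is monotone on the connected union $L^+\cup J$ and the unique root in $L^+$ must coincide with the unique root in $J$, placing $\xi$ in $J$. In every case $\xi$ ends up in an interval of $\mathcal{O}$.

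\textbf{Disjointness and main obstacle.} Pairwise disjointness of the output list is explicit in the algorithm, which refuses to insert $I^+$ whenever it already meets some interval already in $\mathcal{O}$. The hardest step is the completeness argument, specifically the combination of (i) the gap-free covering $\bigcup_{L}L^+\supseteq I_0$, which requires tracking how the $\tfrac{w(I)}{4n}$-extensions of neighbouring leaves of very different widths glue together, and (ii) the monotonicity-on-the-union reasoning that forces the roots witnessed by overlapping certificates to coincide.
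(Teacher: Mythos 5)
Your proof is correct, and its overall skeleton (termination via a lower bound on the width of subdivided intervals, uniqueness via the $\T'$-test plus a sign change, completeness by examining the terminal leaf whose extension contains a given root) parallels the paper's argument; the termination step via Lemma~\ref{lem:success}(iii) is just the contrapositive of the paper's appeal to Theorem~\ref{Circle-Theorems}. The genuine difference is in the overlap case: where you show that a root witnessed by a leaf $L$ whose extension meets an earlier $J\in\mathcal{O}$ is already the root isolated by $J$, the paper invokes the dyadic-geometry Lemma~\ref{lem:intersection} to produce a single disc $\Delta_{2w(I)}(a)$ or $\Delta_{2w(J')}(c)$ containing both extended intervals, and then uses Lemma~\ref{lem:test}(ii) (at most one root in a disc where $\T'_{3/2}$ holds). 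You instead stay on the real line: $f'$ is nonvanishing on $\overline{L^+}$ and on $\overline{J}$ by Lemma~\ref{lem:test}(i), these two sets overlap, so $f'$ has constant sign on the connected union and $f$ is strictly monotone there, forcing the two roots to coincide. Your route is more elementary and makes Lemma~\ref{lem:intersection} unnecessary for this theorem (do note the small edge case $s=0$, where a root may sit exactly at an endpoint $a^+$ or $b^+$ of the open extension; your covering argument still catches it through the neighbouring leaf, but it deserves a sentence). What the paper's disc-based formulation buys is reusability: in the approximate setting of Theorem~\ref{thm:success} one cannot argue with monotonicity of the exact $f$ on real intervals but only with the $\T'$-test holding for suitable perturbations $p,q\in[f]_{2^{-\rho_f}}$, and there the stronger, purely geometric Lemma~\ref{lem:intersection} is exactly what is needed, which is why the paper factors the argument that way.
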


\begin{proof}
If the width $w(I)$ of an interval $I=(a,b)$ is smaller or equal to $\frac{\sigma_f}{8n^2}$, then, according to Theorem~\ref{Circle-Theorems}, $\var(f,I^+)=0$ or $\T^{f_I'}(0,2)$ holds. Thus, $I$ is not further subdivided. This shows termination of $\dcm$. From our construction and Corollary~\ref{inclusion}, each interval in $\mathcal{O}$ is isolating for a real root of $f$ and all intervals in $\mathcal{O}$ are pairwise disjoint. It remains to show that, for each real root $\xi$ of $f$, there exists a corresponding isolating interval in $\mathcal{O}$. Since all roots of $f$ have absolute value bounded by $1/2$, there must be a terminal interval $I=(a,b)$ whose closure $\bar{I}$ contains $\xi$. Since $v_{I^+}>0$, $I$ cannot be discarded in the first step of $\dcm$. Hence, $\T^{f_I'}(0,2)$ holds and, thus, $f$ is monotone on $I^+$. Since $I^+$ contains the root $\xi$, we have $f(a^+)f(b^+)<0$. It follows that either $I^+$ is added to the list of isolating intervals or $I^+$ intersects an interval $J^+=(c^+,d^+)\in\mathcal{O}$ which has been added to $\mathcal{O}$ before. Let $J=(c,d)$ be the corresponding smaller interval for $J^+$. Since the $\frac{w(I)}{4n}$-neighborhood of $I$ intersects the $\frac{w(J)}{4n}$-neighborhood of $J$, the preceding Lemma~\ref{lem:intersection} shows that one of the discs $\Delta_{2w(I)}(a)$ or $\Delta_{2w(J)}(c)$ contains both intervals $I^+$ and $J^+$. Since both $\T^{f_I'}(0,2)$ and $\T^{f_J'}(0,2)$ hold, each of the latter two discs contains at most one root due to Corollary~\ref{inclusion}. It follows that $J^+\in\mathcal{O}$ already isolates $\xi$.
\end{proof}
\ignore{
\begin{figure}[t]
\begin{center}
\vspace{-7.7cm}\includegraphics[width=12cm]{twocircles}\end{center}
\vspace{-3.75cm}\caption{\label{fig:twocircles} \footnotesize{Wlog., we can assume that $w(J)\ge w(I)$. $w(I)$, $w(J)$ and the distance $\delta$ between $I$ and $J$ differ by a power of $2$. For $\delta=0$, the disc $\Delta:=\Delta_{2w(J)}(c)$ certainly contains $\tilde{I}$ and $\tilde{J}$. If $\delta\neq 0$, then $w(J)\ge 2w(I)$ and $w(J)\ge 4\delta$, hence $\tilde{I}$, $\tilde{J}\subset\Delta$.}}
\end{figure}
}
\begin{lem}\label{lem:intersection}\footnote{Lemma~\ref{lem:intersection} proves a slightly stronger result than necessary for the proof of Theorem~\ref{dcmltermination}. The stronger result applies in the proof of Theorem~\ref{thm:success} in Section~\ref{sec:knownvalues}.}
Let $I=(a,b)$ and $J=(c,d)$ be two intervals (not necessarily of equal length) of the form $\left(-\frac{1}{2}+i 2^{-h},-\frac{1}{2}+(i+1)2^{-h}\right),$ where $h\in\N$ and $i\in\{0,\ldots,2^h-1\}$. If the $\frac{w(I)}{2n}$-neighborhood $U_{w(I)/2n}(I)$ of $I$ intersects the $\frac{w(J)}{2n}$-neighborhood $U_{w(J)/2n}(J)$ of $J$, then one of the discs $\Delta_{2w(I)}(a)$ or $\Delta_{2w(J)}(c)$ contains the intervals $(a-w(I),b+w(I))$ and $(c-w(J),d+w(J))$.   
\end{lem}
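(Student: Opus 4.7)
My plan is to use the dyadic structure of $I$ and $J$ together with the quantitative constraint from the neighborhood intersection to obtain a sharp dichotomy on the widths and relative position of the two intervals, and then verify the required disc containment by elementary arithmetic.

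After possibly swapping the roles of $I$ and $J$, I will assume $w(J)\ge w(I)$ and show that the disc $\Delta:=\Delta_{2w(J)}(c)$ works in all cases except one degenerate sub-case handled at the end. The nested case $I\subsetneq J$ is immediate: here $w(I)\le w(J)/2$ and $a,b\in[c,d]$, so $\Delta$, whose real trace is $(c-2w(J),d+w(J))$, clearly contains both $(c-w(J),d+w(J))$ and $(a-w(I),b+w(I))$. So assume $I$ and $J$ are disjoint, and let $\delta\ge 0$ denote the gap between them on the real line.

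The key structural observation is that all endpoints of $I$ and $J$ lie on a common dyadic grid of mesh $w(I)=\min(w(I),w(J))$, so $\delta$ is an integer multiple of $w(I)$. The intersection hypothesis translates into
\[
\delta<\frac{w(I)+w(J)}{2n}\le\frac{w(J)}{n}\le\frac{w(J)}{2},
\]
which combined with the dyadic constraint forces either \textbf{(i)} $w(J)=w(I)$ and $\delta=0$ (any nonzero multiple of $w(I)$ would be $\ge w(I)>w(I)/2$), or \textbf{(ii)} $w(J)\ge 2w(I)$. If $I$ lies to the left of $J$, then containment of $(a-w(I),b+w(I))$ in $\Delta$ reduces to the single inequality $\delta\le 2(w(J)-w(I))$, which reads $0\le 0$ in regime (i) and follows from $\delta<w(J)/n\le w(J)\le 2(w(J)-w(I))$ in regime (ii), using $w(I)\le w(J)/2$; the other containment is immediate. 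If $I$ lies to the right of $J$, an analogous reduction $b+w(I)\le d+w(J)$ (equivalently $\delta\le w(J)-2w(I)$) holds whenever $w(J)\ge 2w(I)$, again by the dyadic bound on $\delta$.

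The one exception is regime (i) with $I$ strictly to the right of $J$ (two equal, adjacent intervals in the order $J,I$), where $\Delta_{2w(J)}(c)$ falls short of $b+w(I)$; here one instead uses $\Delta_{2w(I)}(a)$, which by the symmetric argument (interchanging the roles of $I$ and $J$ in regime (i)) contains both extended intervals. The main obstacle is precisely this bookkeeping of when to switch discs, since the "WLOG $w(J)\ge w(I)$" choice cannot simultaneously be made compatible with a "WLOG $I$ is to the left of $J$" choice; the inequalities themselves are otherwise routine.
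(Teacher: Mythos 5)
Your proof is correct and follows essentially the same route as the paper's: reduce w.l.o.g.\ to $w(J)\ge w(I)$, use the dyadic structure to force the gap $\delta$ to be an integer multiple of $w(I)$ with $w(J)$ a power-of-two multiple of $w(I)$, bound $\delta$ via the intersecting neighborhoods, and finish with elementary endpoint arithmetic. Your explicit handling of the equal-width adjacent sub-case (where $\Delta_{2w(I)}(a)$ must be used instead of $\Delta_{2w(J)}(c)$) is in fact slightly more careful than the paper's own $\delta=0$ step, which silently relies on the label symmetry there.
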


\begin{proof}
W.l.o.g., we can assume that $w(J)\ge w(I)$ and, thus, $w(J)=2^{l}w(I)$ with an $l\in\N_0$. Let $\delta$ denote the distance between $I$ and $J$. If $\delta=0$, then $\Delta_{2w(J)}(c)$ contains $(a-w(I),b+w(I))$ and $(c-w(J),d+w(J))$. If $\delta\neq 0$, then $\delta=2^kw(I)$ with a $k\in\N_0$. Since $U_{w(I)/2n}(I)\cap U_{w(J)/2n}(J)\neq \emptyset$, we must have $\frac{w(J)}{2n}>\frac{\delta}{2}$. In particular, we have $\frac{w(J)}{4}>\frac{\delta}{2}=2^{k-1}w(I)$. Since $w(I)$ and $w(J)$ differ by a power of $2$, it follows that $w(J)\ge 2^{k+2}w(I)=4\delta$ and, thus, $2w(J)=w(J)+\frac{w(J)}{2}+\frac{w(J)}{2}\ge w(J)+ 2w(I)+2\delta$. From the latter inequality our claim follows.
\end{proof}

\begin{thm}\label{thm:complexitydcm}
For a polynomial $f$ as in (\ref{polyf}), $\dcm$ induces a subdivision tree $T_{\dcm}$ of 
$$\text{\emph{height} } h(T_{\dcm})=O(\log n-\log\sigma_f)\text{ and \emph{size} } |T_{\dcm}|=O(\Sigma_f+n\log n).$$
\end{thm}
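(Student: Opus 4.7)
The plan is to establish the two bounds separately. The height bound follows directly from the termination analysis already used in the proof of Theorem~\ref{dcmltermination}. For the size bound, I would charge every internal node of $T_{\dcm}$ to a nearby root of small separation supplied by Lemma~\ref{lem:success}(iii), and then amortize the total charge root by root.

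For the height, recall from the proof of Theorem~\ref{dcmltermination} that whenever $w(I)\le\sigma_f/(8n^2)$, either $\var(f,I^+)=0$ or $\T^{f_I'}(0,2)$ holds, so $I$ is a leaf of $T_{\dcm}$. Since a node at depth $h$ has width $w(I)=2^{-h}$, every internal node satisfies $2^{-h}>\sigma_f/(8n^2)$, i.e., $h<\log(8n^2/\sigma_f)=O(\log n-\log\sigma_f)$, which proves the height estimate.

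For the size, let $N$ denote the number of internal nodes of $T_{\dcm}$; since $T_{\dcm}$ is a full binary tree we have $|T_{\dcm}|=2N+1$, so it is enough to bound $N$. For each internal $I=(a,b)$, both $\var(f,I^+)\ge 1$ and the failure of $\T^{f_I'}(0,2)$ hold, so Lemma~\ref{lem:success}(iii) yields a root $\xi_{i(I)}$ of $f$ with $\xi_{i(I)}\in\Delta_{2w(I)}(a)$ and $\sigma(\xi_{i(I)},f)<8n^2 w(I)$. I would fix such a choice $i(I)$ arbitrarily but once and for all. For a fixed root $\xi_i$, I then count the internal nodes with $i(I)=i$: the separation bound forces $2^{-h}=w(I)>\sigma_i/(8n^2)$, leaving only $O(\log n-\log\sigma_i)$ admissible depths. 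At any such depth $h$ the condition $\xi_i\in\Delta_{2w(I)}(a)$ splits into $|\operatorname{Im}(\xi_i)|<2\cdot 2^{-h}$ and $|a-\operatorname{Re}(\xi_i)|<2\cdot 2^{-h}$; since all left endpoints at depth $h$ lie on the grid $-\frac{1}{2}+2^{-h}\mathbb{Z}$, at most a constant number of intervals at depth $h$ can be charged to $\xi_i$. Summing the resulting per-root bound gives
\[
N\;\le\;\sum_{i=1}^n O(\log n-\log\sigma_i)\;=\;O(n\log n+\Sigma_f),
\]
and hence $|T_{\dcm}|=O(\Sigma_f+n\log n)$.

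The main obstacle I anticipate is the amortization step: one has to verify carefully that, even for non-real roots $\xi_i$, the per-depth charge is $O(1)$ and does not secretly depend on $n$ or on the geometry of $\xi_i$. This boils down to cleanly separating the real and imaginary constraints imposed by $\xi_i\in\Delta_{2w(I)}(a)$ and observing that the admissible left endpoints form a $2^{-h}$-spaced subset of a real interval of length $4\cdot 2^{-h}$. Once this constant-per-depth bound is in hand, everything else reduces to a standard root-wise summation, and the height argument is essentially a rereading of Theorem~\ref{dcmltermination}'s termination proof.
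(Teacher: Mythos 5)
Your proposal is correct and follows essentially the paper's argument: the height bound is read off from the termination proof of Theorem~\ref{dcmltermination}, and the size bound rests on Lemma~\ref{lem:success}(iii) together with the observation that depth-$h$ intervals lie on the grid $-\frac{1}{2}+2^{-h}\mathbb{Z}$, so each root $\xi_i$ can be charged by only $O(1)$ nodes per depth over $O(\log n-\log\sigma_i)$ depths. The paper packages this same count via a ``canonical tree'' and a map from internal nodes with at most four preimages per canonical interval, but your direct root-wise charging is the identical argument in different bookkeeping.
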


\begin{proof}
The result on the height of $T_{\dcm}$ follows directly from the proof of Theorem~\ref{dcmltermination}. Namely, we have shown that $\dcm$ never subdivides an interval of width less than or equal to $\frac{\sigma_f}{8n^2}$.
For the bound on $|T_{\dcm}|$, we use a similar argument as in~\cite{eigenwillig-sharma-yap:descartes:06} and~\cite{ms-detbitdesc-09}. Namely, for a root $\xi$ of $f$ and a certain $h\in\mathbb{N}_0$ we say that $I=(-\frac{1}{2}+i2^{-h},-\frac{1}{2}+(i+1)2^{-h})$, $i=\{0,\ldots,2^h-1\}$, is \emph{a canonical interval} for $\xi$ if the real part of $\xi$ is contained in $[-\frac{1}{2}+i2^{-h},-\frac{1}{2}+(i+1)2^{-h})$ and $\sigma(\xi,f)<8n^2 2^{-h}=8n^2w(I)$. We denote $T_c$ the \emph{canonical tree} which consists of all canonical intervals. We remark that, for a canonical interval $I$, the parent interval of $I$ is canonical as well. The following considerations will show that $|T_{\dcm}|=O(|T_{c}|)$ and $|T_{c}|=O(\Sigma_f+n\log n)$. For the size of the canonical tree, consider a leaf $I\in T_{c}$ and let $\xi_I$ be a root of $f$ corresponding to this leaf. If there are several, then $\xi_I$ is the root with minimal separation. Then, $\sigma(\xi_I,f)<8n^2 2^{-h}$ and, thus, $h\le 2\log n+4-\log\sigma(\xi_I,f)$. Since each root of $f$ is associated with at most one leaf of the canonical tree, we conclude $|T_c|=O(n\log n+\Sigma_f).$
It remains to show that $|T_{\dcm}|=O(|T_c|)$. Consider the following mapping of internal nodes (intervals) of $T_{\dcm}$ to canonical nodes (intervals) in $T_c$: Let $I$ be a non-terminal interval of width $w(I)=2^{-h}$. Then, $\var(f,I^+)> 0$ and $T^{f_I'}(0,2)$ does not hold. According Lemma~\ref{lem:success} (iii), the disc $\Delta_{2w(I)}(a)$ contains a root $\xi$ of $f$ with $\sigma(\xi,f)<8n^2 w(I)=8n^2 2^{-h}$. Hence, one of the four intervals $I_1=(a-2w(I),a-w(I))$, $I_2=(a-w(I),a)$, $I$ or $I_2=(b,b+(b-a))$ is canonical for $\xi$. We map $I$ to the corresponding interval. This defines a mapping from the internal nodes of $T_{\dcm}$ to the nodes of the canonical tree $T_c$. Furthermore, each node in the canonical tree has at most four preimages in $T_{\dcm}$ and, thus, the number of internal nodes of $T_{\dcm}$ is bounded by $O(n\log n+\Sigma_f)$.
Since $T_{\dcm}$ is a binary tree, the bound on the number of
internal nodes applies to the whole tree as well. 
\end{proof}

\section{Algorithm}\label{algorithm}

We first outline our algorithm $\R\textsc{Isolate}$ to isolate the roots of $f$. $\R\textsc{Isolate}$ decomposes into two subroutines $\dcm^\rho$ and $\textsc{Certify}^\rho$, where $\rho$ indicates the actual working precision. $\dcm^\rho$ is essentially identical to $\dcm$ with the main difference that, at each node $I=(a,b)$ of the recursion tree, we only consider approximations $\tilde{f}_I(x)$ of $f_I(x)=f(a+w(I)x)$ to a certain number $\rho_I$ of bits after the binary point, where $\rho+2\log w(I)\le\rho_I\le \rho$. We remark that we proceed $I$ in a way such that it is terminal for $\dcm^\rho$ if it is terminal for the exact counterpart $\dcm$. This ensures that, for any $\rho$, $\dcm^\rho$ induces a subtree $T_{\dcm^{\rho}}$ of $T_{\dcm}$ and, thus, $|T_{\dcm^{\rho}}|=O(\Sigma_f+n\log n)$ due to Theorem~\ref{thm:complexitydcm}. We further show that, for a precision $\rho\ge\rho_f^{\max}=O(\Sigma_f+n)$, $\dcm^\rho$ returns isolating intervals for \emph{all} real roots of $f$; see Theorem~\ref{thm:success} for the definition of $\rho_f^{\max}$ and further details. However, for smaller $\rho$, $\dcm^\rho$ may return isolating intervals only for some roots but without any information whether all real roots are captured or not. In order to overcome such an undesirable situation, we consider an additional subdivision method $\textsc{Certify}^\rho$ similar to $\dcm^\rho$ which aims to certify that all roots are captured. We further show that $\textsc{Certify}^\rho$ also induces a recursion tree of size $O(\Sigma_f+n\log n)$ and succeeds if $\rho\ge\rho_f^{\max}$. If, for a given precision $\rho$, our algorithm fails to isolate all roots of $f$, we double $\rho$ and restart.

\subsection{$\dcm^\rho$: An Approximate Version of $\dcm$}\label{dcmL}

We present our first subroutine $\dcm^\rho$. Comments to support the approach are in \emph{italic} and marked by a "//" at the beginning.\vspace{0.5cm}

\ignore{
In this section, we formulate a version of $\dcm$ which only considers approximations $\tilde{f}_I$ of the polynomials $f_I$ in each recursion step. The algorithm is driven by an initial precision $L\in\N$ explaining our denotation $\dcm^L$ (cf. Algorithm~\ref{alg:dcml} in the Appendix for pseudo-code). The algorithm is formulated in a way such that, for arbitrary precision $L$, $\dcm^L$ induces a subtree of the recursion tree of the exact counterpart $\dcm$ when applied to $f$ (cf. Theorem~\ref{complexitydcml} for a proof):\vspace{0.5cm}}
\hrule\vspace{0.2cm}
\noindent\textbf{$\dcm^{\rho}$.}
Let $I_0=(-\frac{1}{2},\frac{1}{2})$ be the starting interval which, by construction of $f$, contains all real roots of $f$. In a first step, we choose a $(\rho+n+1)$-binary approximation $\tilde{f}$ of $f$ and evaluate $\tilde{f}(-\frac{1}{2}+x)$. Then, the resulting polynomial is approximated by a $(\rho+1)$-binary approximation $\tilde{f}_{I_0}\in [\tilde{f}(-\frac{1}{2}+x)]_{2^{-\rho-1}}$ and, according to Lemma~\ref{lem:taylorshift}, we have $\tilde{f}_{I_0}\in [f_{I_0}]_{2^{-\rho}}$.\\ 

$\dcm^\rho$ maintains a list $\mathcal{A}$ of active nodes $(I,\tilde{f}_I,\rho_I)$, where $I=(a,b)\subset I_0$ is an interval, $\tilde{f}_I$ approximates $f_I$ to $\rho_I$ bits after the binary point and $\rho+2\log w(I)\le\rho_I\le \rho$. $\dcm^\rho$ eventually returns a list $\mathcal{O}$ of tuples $(J,s_{J,l},s_{J,r},B_J)$, where $J=(c,d)$ is an isolating interval for a root of $f$, $s_{J,l}=\sgn f(c)$, $s_{J,r}=\sgn f(d)$ and $0<B_J\le\min(|f(c)|,|f(d)|)$. We initially start with $\mathcal{A}:=\{(I_0,\tilde{f}_{I_0},L)\}$ and $\mathcal{O}:=\emptyset$. For each active node, we proceed as follows:\vspace{0.25cm}
\begin{enumerate} 
\item Remove $(I,\tilde{f}_I,\rho_I)$ from $\mathcal{A}$.  
\item Compute the polynomials \begin{align}
\tilde{f}_{I^+}(x)=\tilde{f}_I(-\frac{1}{4n}+(1+\frac{1}{2n})x) \text{ \hspace{0.1cm} and  \hspace{0.1cm}} 
\tilde{h}(x)=\sum_{i=0}^n 
\tilde{h}_i x^i:=(1+x)^n \tilde{f}_{I^+}(\frac{1}{1+x}).\label{polyh}
\end{align}
\item If $\tilde{h}_i>-2^{n+2-\rho_I}$ for all $i$ or $\tilde{h}_i<2^{n+2-\rho_I}$ for all $i$, do nothing (i.e., $I$ is dicarded).\\

\noindent$\text{\large{//}}$\textit{ A simple computation (see the subsequent Lemma~\ref{lem:dcml} (i)) shows that 
$\tilde{h}$  approximates $f_{I^+,\rev}(x)=(x+1)^n f_{I^+}(\frac{1}{1+x})$ to $\rho_I-n-2$ bits after the binary point. Thus, if $\var(f,I^+)=0$, all coefficients of $\tilde{h}$ are either smaller than $2^{n+2-\rho_I}$ or larger than $-2^{n+2-\rho_I}$. Since we want to induce a subtree of the recursion tree $T_{\dcm}$ induced by $f$, we discard $I$ if all coefficients of $\tilde{h}$ are larger than $-2^{n+2-\rho_I}$ (or smaller than $2^{n+2-\rho_I}$).}\\ 

\item If there exist $\tilde{h}_i$ and $\tilde{h}_j$ with $\tilde{h}_i\le -2^{n+2-\rho_I}$ and $\tilde{h}_j\ge 2^{n+2-\rho_I}$, consider the test $\T^{(\tilde{f}_I)'}(0,2)$, that is, evaluate $t_{3/2}^{(\tilde{f_I})'}(0,2)$.\\
\noindent$\text{\large{//}}$\textit{ Due to Lemma~\ref{lem:dcml} (i), we have $|t_{3/2}^{(f_I)'}(0,2)-t_{3/2}^{(\tilde{f_I})'}(0,2)|<n2^{n+1-\rho_I}$. Hence, if $\T^{(f_I)'}(0,2)$ holds, then $t_{3/2}^{(\tilde{f_I})'}(0,2)>-n2^{n+1-\rho_I}$. Thus, we proceed as follows:}\\
\begin{itemize} 
\item[(a)] If $t_{3/2}^{(\tilde{f_I})'}(0,2)>-n2^{n+1-\rho_I}$, consider the polynomial 
\begin{align}
\hat{f_I}(x):=\tilde{f_I}(x)+n2^{n+1-\rho_I}\cdot x\label{fhat},\vspace{0.25cm}
\end{align}
$\text{\large{//}}$\textit{ Then, $\T^{(\hat{f_I})'}(0,2)$ holds and, in particular, $\hat{f_I}$ is monotone on $(-2,2)$.}\\

evaluate 
\begin{align}
\lambda^-&:=\hat{f_I}(-\frac{1}{4n})=\tilde{f}_{I^+}(0)-2^{n-1-\rho_I}\label{lambda-}\\
\lambda^+&:=\hat{f_I}(1+\frac{1}{4n})=\tilde{f}_{I^+}(1)+(4n+1)2^{n-1-\rho_I}\label{lambda+}\\
\lambda&:=\hat{f_I}(-\frac{1}{n})=\tilde{f_{I}}(-\frac{1}{n})-2^{n+1-\rho_I},\label{lambda2}
\end{align}
and check whether the following conditions are fulfilled:
\begin{align}
\tilde{I}=(\tilde{a},\tilde{b})=(a-\frac{w(I)}{2n},b+\frac{w(I)}{2n})\text{ intersects no }J\text{ for any }(J,s_{J,l},s_{J,r},B_J)\in\mathcal{O},\label{cond1}
\end{align}
\vspace{-0.65cm}
\begin{align}
\lambda^-\cdot\lambda^+&< 0,\label{ineq1}\\
\min(|\lambda^-|,|\lambda^+|)&>2^{n+3-\rho_I}n, \text{ and} \label{ineq3}\\
|\lambda|&>2^{\deg \hat{f_I}+n+7-\rho_I}n^2.\label{ineq4}
\end{align}
If any of the conditions (\ref{cond1})-(\ref{ineq4}) fails, do nothing. If all conditions are fulfilled, then add $(\tilde{I},\sgn\lambda^-,\sgn\lambda^+,\min(|\lambda^-|,|\lambda^+|)-2^{n+3-\rho_I}n)$ to $\mathcal{O}$.\\

\noindent$\text{\large{//}}$\textit{ If (\ref{ineq1})-(\ref{ineq4}) hold, then $\tilde{I}$ is isolating for a root $\xi$ of $f$; see Lemma~\ref{lem:dcml} (iii). Furthermore, since $\hat{f}_I$ is monotone on $(-2,2)$, we have $|\hat{f}_I(-\frac{1}{2n})|>|\lambda^-|$ and $|\hat{f}_I(1+\frac{1}{2n})|>|\lambda^+|$. Then, from inequality (\ref{ineq3}) and Lemma~\ref{lem:dcml} (ii), it follows that $\sgn f(\tilde{a})=\sgn(\lambda^-)$, $\sgn f(\tilde{b})=\sgn(\lambda^+)$ and $\min(|f(\tilde{a})|,|f(\tilde{b})|)>\min(|\lambda^-|,|\lambda^+|)-2^{n+3-\rho_I}n$.}\\

\item[(b)] If $t_{3/2}^{(\tilde{f_I})'}(0,2)\le -n2^{n+1-\rho_I}$, subdivide $I$ into $I_l:=(a,m_I)$ and $I_r:=(m_I,b)$. Compute a $\rho_I$-binary approximation $\tilde{f}_{I_l}$ of $\tilde{f}_{I}(\frac{x}{2})$ and a $(\rho_I-1)$-binary approximation $\tilde{f}_{I_r}$ of $\tilde{f_I}(\frac{x+1}{2})$, and add $(I_l,\tilde{f}_{I_l},\rho_I-1)$ and
$(I_r,\tilde{f}_{I_r},\rho_I-2)$ to $\mathcal{A}$. If $\rho_I<2$, return ``insufficient precision''.\\

\noindent$\text{\large{//}}$\textit{ Due to Lemma~\ref{lem:taylorshift}, we have $\tilde{f}_{I_l}\in [f_{I_l}]_{2^{-\rho_I-1}}$ and $\tilde{f}_{I_r}\in [f_{I_r}]_{2^{-\rho_I-2}}$. Hence, by induction, it follows that $\rho+2\log w(I)\le\rho_I\le \rho$ for all active nodes.}\\ 
\end{itemize}
\end{enumerate}
$\dcm^\rho$ stops when $\mathcal{A}$ becomes empty. It may either return "insufficient precision" (in Step 4 (b)) or a list $\mathcal{O}$ of isolating intervals $\tilde{I}$ for some of the roots of $f$ together with the signs of $f$ and a lower bound on $|f|$ at the endpoints of $\tilde{I}$.\vspace{0.2cm}
\hrule\vspace{0.5cm}

\begin{lem}\label{lem:dcml}
Let $f$ be a polynomial as in (\ref{polyf}), $I=(a,b)$ an interval considered by $\dcm^\rho$ and $\tilde{h}$ the polynomial as defined in (\ref{polyh}). Then,
\begin{itemize}
\item[(i)] $\tilde{h}(x)\in [f_{I^+,\rev}]_{2^{n+2-\rho_I}}$ \text { and }$|t_{3/2}^{(f_I)'}(0,2)-t_{3/2}^{(\tilde{f}_I)'}(0,2)|<n\cdot 2^{n+1-\rho_I}.$
\item[(ii)] For an arbitrary real value $t$ with $|t|\le 1+\frac{1}{n}$, it holds that $|f(a+t\cdot w(I))-\hat{f_I}(t)|<2^{n+3-\rho_I}n$, with $\hat{f}_I$ as defined in (\ref{fhat}). In particular,
\begin{align*}
|f(a^+)-\lambda^{-}|,\text{ }|f(a-\frac{w(I)}{n})-\lambda|,\text{ }|f(b^+)-\lambda^{+}|<2^{n+3-\rho_I}n,
\end{align*}
with $\lambda^{-},$ $\lambda^{+}$ and $\lambda$ as defined in (\ref{lambda-})-(\ref{lambda2}).
\item[(iii)]
Suppose that $t_{3/2}^{(\tilde{f_I})'}(0,2)>-n2^{n+1-\rho_I}$ and the inequalities (\ref{ineq1})-(\ref{ineq4}) hold. Then, $I^+$ contains a real root $\xi$ of $f$ and the $\frac{w(I)}{n}$-neighborhood of $I$ is isolating for $\xi$.
\item[(iv)] For any tuple $(J,s_{J,l},s_{J,r},B_J)\in\mathcal{O}$, the endpoints of $J$ are located outside the union of the discs $$\Delta_i:=\Delta_{\sigma(z_i,f)/(64n^3)}(z_i),\text{ where }i=1,\ldots,n.$$ 
%It holds that
%\begin{align*}
%|\tilde{f_I}(-1/(2n))|&>(n+1)2^{-L_f}-2^{1-L_I}
%\end{align*}
%and
%\begin{align*}
%|\tilde{f_I}(1+1/(2n))|&>(n+1)2^{-L_f}-n2^{2-L_I}.
%\end{align*}
\end{itemize}
\end{lem}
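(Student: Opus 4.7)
The statement bundles four claims of very different flavour. Parts (i) and (ii) are routine propagation-of-error estimates from Lemma~\ref{lem:taylorshift}, (iii) is the real isolation content that validates Step~4(a) of $\dcm^\rho$, and (iv) is a geometric by-product of the monotonicity information extracted in the course of (iii). My plan is to dispatch (i) and (ii) by direct computation, attack (iii) via a Rouch\'e-style comparison of $f_I$ with the auxiliary polynomial $\hat{f}_I$, and deduce (iv) from the estimates in (ii) and (iii) together with the termination geometry of $\dcm^\rho$.

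For (i), from the inductive invariant $\tilde{f}_I\in[f_I]_{2^{-\rho_I}}$ maintained by $\dcm^\rho$, Lemma~\ref{lem:taylorshift}(ii) applied to the substitution $x\mapsto -\frac{1}{4n}+(1+\frac{1}{2n})x$ yields $\tilde{f}_{I^+}\in[f_{I^+}]_{2^{2-\rho_I}}$. Reversing the coefficient list preserves the max-coefficient error, and Lemma~\ref{lem:taylorshift}(iii) multiplies the error by $2^n$ under the shift $x\mapsto x+1$; hence $\tilde{h}\in[f_{I^+,\rev}]_{2^{n+2-\rho_I}}$. For the $t_{3/2}$-comparison, the $k$th Taylor coefficient of $(f_I)'-(\tilde{f}_I)'$ at $0$ is bounded by $(k+1)\cdot 2^{-\rho_I}$; the weighted sum $\sum_{k=1}^{n-1}(k+1)2^k=(n-1)\cdot 2^n$ fed into the definition of $t_{3/2}$ then produces the target bound $n\cdot 2^{n+1-\rho_I}$. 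For (ii), I decompose $f(a+t\,w(I))-\hat{f}_I(t)=(f_I(t)-\tilde{f}_I(t))-n\cdot 2^{n+1-\rho_I}\cdot t$ and bound each summand on $|t|\le 1+\frac{1}{n}\le 2$: the first is at most $2^{-\rho_I}\sum_{i=0}^n 2^i\le 2^{n+1-\rho_I}$ and the second is at most $n\cdot 2^{n+2-\rho_I}$, summing to $n\cdot 2^{n+3-\rho_I}$. Substituting $t\in\{-\frac{1}{4n},\,1+\frac{1}{4n},\,-\frac{1}{n}\}$ and using $\tilde{f}_{I^+}(0)=\tilde{f}_I(-\frac{1}{4n})$ and $\tilde{f}_{I^+}(1)=\tilde{f}_I(1+\frac{1}{4n})$, which are immediate from the definition of $\tilde{f}_{I^+}$, yields the three displayed inequalities.

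Part (iii) splits into existence and isolation. Existence is clean: (ii) combined with (\ref{ineq3}) pins $\sgn f(a^+)=\sgn\lambda^-$ and $\sgn f(b^+)=\sgn\lambda^+$, and (\ref{ineq1}) plus the intermediate value theorem supply a real root $\xi\in I^+$. For isolation on the larger neighbourhood $(a-\frac{w(I)}{n},b+\frac{w(I)}{n})$ the polynomial $\hat{f}_I=\tilde{f}_I+n\cdot 2^{n+1-\rho_I}\cdot x$ is engineered so that the relaxed hypothesis $t_{3/2}^{(\tilde{f}_I)'}(0,2)>-n\cdot 2^{n+1-\rho_I}$ forces $\T^{(\hat{f}_I)'}(0,2)$; Lemma~\ref{lem:test}(ii) then places no zero of $(\hat{f}_I)'$ inside $\bar{\Delta}_2(0)$, so $\hat{f}_I$ is monotone on $(-2,2)$ and admits at most one real root in $(-\frac{1}{n},1+\frac{1}{n})$. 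Transferring this uniqueness from $\hat{f}_I$ to $f_I$ is the technical core: I plan to apply Rouch\'e's theorem to $\hat{f}_I$ and $f_I$ on a small complex disc enclosing the real segment $(-\frac{1}{n},1+\frac{1}{n})$, using (ii) on the boundary to bound $|f_I-\hat{f}_I|$ and combining Lemma~\ref{lem:test}(i) with the monotonicity and the lower bound (\ref{ineq4}) on $|\lambda|=|\hat{f}_I(-\frac{1}{n})|$ to bound $|\hat{f}_I|$ from below on the same boundary. The one genuine obstacle is pairing these two estimates: the factor $2^{\deg\hat{f}_I+n+7-\rho_I}n^2$ in (\ref{ineq4}) is tuned precisely to dominate the approximation error on a disc of radius of order $1/n$, and verifying this balance is the only non-mechanical computation.

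For (iv), the monotonicity of $\hat{f}_I$ on $(-2,2)$ established in (iii) gives $|\hat{f}_I(\tilde{a})|\ge\min(|\lambda|,|\lambda^-|)$ and $|\hat{f}_I(\tilde{b})|\ge|\lambda^+|$ because $\tilde{a},\tilde{b}$ correspond to the $t$-values $-\frac{1}{2n}$ and $1+\frac{1}{2n}$, which are bracketed by the evaluation points for $\lambda,\lambda^-,\lambda^+$. Subtracting the error from (ii) and using (\ref{ineq3}) together with (\ref{ineq4}) forces $|f(\tilde{a})|,|f(\tilde{b})|>(n+1)\cdot 2^{-\rho_f}$. Combining this with the isolation property of (iii) --- which forces every real root $z_j\neq\xi$ to lie outside $(a-\frac{w(I)}{n},b+\frac{w(I)}{n})$, giving $|z_j-\tilde{a}|\ge\frac{w(I)}{2n}$ --- and with $\xi\in I^+$, which yields $|\xi-\tilde{a}|>\frac{w(I)}{4n}$, produces distance lower bounds from $\tilde{a}$ to every root. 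Comparing these against the radii $\sigma(z_i,f)/(64n^3)$ using the termination geometry of $\dcm^\rho$ (whose terminal widths $w(I)$ are controlled in terms of $\sigma_f$ via Lemma~\ref{lem:success} together with the $T$-value comparison in (i)) places $\tilde{a}$, and by the symmetric argument $\tilde{b}$, outside every $\Delta_i$. The main obstacle is quantifying this ``termination geometry'' precisely enough to close the gap, which is where the approximate $T$-test inequality together with (i) re-enters.
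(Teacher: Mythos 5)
Parts (i) and (ii) of your proposal are correct and essentially coincide with the paper's computations (your constants are slightly cruder but sufficient). The genuine gaps are in (iii) and (iv). For (iii) you set up the right frame --- Rouch\'e between $f_I$ and $\hat{f_I}$ on the $\frac{1}{n}$-neighborhood $U$ of $(0,1)$, with (ii) bounding $|f_I-\hat{f_I}|$ on $\partial U$ --- but the step you explicitly leave open is the entire content of the proof: a lower bound of the form $|\hat{f_I}(z)|>|\lambda|\cdot 2^{-\deg\hat{f_I}-4}/n$ for all $z\in\partial U$. The tools you name do not deliver it: Lemma~\ref{lem:test}(i) applied to $\hat{f_I}$ itself is unavailable, since $\T^{\hat{f_I}}$ cannot hold on a disc containing the root $\gamma\in(-\frac{1}{4n},1+\frac{1}{4n})$ of $\hat{f_I}$, and real monotonicity of $\hat{f_I}$ says nothing about $|\hat{f_I}(z)|$ at the non-real boundary points of $U$ close to $\gamma$ (if you meant to exploit the near-constancy of $(\hat{f_I})'$ on $\Delta_2(0)$, that can be made to work, but it requires integrating along complex segments, not monotonicity). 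The paper closes this step by factoring: $\lambda/\hat{f_I}(z)$ is the product over all roots $\tilde\gamma$ of $\hat{f_I}$ of $(-\frac{1}{n}-\tilde\gamma)/(z-\tilde\gamma)$; the factor of the unique root $\gamma$ inside $\Delta_2(0)$ is at most about $8n$ because $|z-\gamma|\ge\frac{3}{4n}$ on $\partial U$, every other root has modulus at least $2$, so its factor is at most $2(1+\frac{1}{2n})/(1-\frac{1}{n})$, giving $|\lambda/\hat{f_I}(z)|<n2^{\deg\hat{f_I}+4}$; only then does (\ref{ineq4}) dominate the error $n2^{n+3-\rho_I}$ from (ii), which moreover has to be invoked at complex $z$ with $|z|\le 1+\frac{1}{n}$ (its proof permits this).

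For (iv) your route is not just incomplete but goes in a wrong direction. First, the bound $|f(\tilde a)|,|f(\tilde b)|>(n+1)2^{-\rho_f}$ cannot be extracted from (\ref{ineq3})--(\ref{ineq4}): those thresholds are expressed in the working precision $\rho_I$, whereas (iv) is asserted for an arbitrary $\rho$ with no relation between $\rho_I$ and $\rho_f$ (such a relation only appears later, in Theorem~\ref{thm:success}, under $\rho\ge\rho_f^{\max}$). Second, even granted such a bound, you would be using the converse of Corollary~\ref{cor:precision}: it states that points outside all $\Delta_i$ have $|f|>(n+1)2^{-\rho_f}$, not that points where $|f|$ is large lie outside the discs; a point of $\Delta_i$ for a root $z_i$ whose quantity $\sigma(z_i,f)|f'(z_i)|$ is far above the minimum can have $|f|$ much larger than $(n+1)2^{-\rho_f}$. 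Third, controlling terminal widths globally by $\sigma_f$ does not suffice, since $\tilde a\in\Delta_i$ only yields $w(I)<\sigma(z_i,f)/(16n^2)$, which does not contradict $w(I)>\sigma_f/(16n^2)$. The paper's argument is local and needs no precision hypothesis: from (iii), $|\tilde a-z_i|\ge\frac{w(I)}{4n}$ for all $i$; if $\tilde a\in\Delta_i$ for some $z_i\neq\xi$, then $w(I)<\sigma(z_i,f)/(16n^2)$ and hence $|\xi-z_i|\le|\xi-\tilde a|+|\tilde a-z_i|<\sigma(z_i,f)$, contradicting the definition of separation; if $\tilde a\in\Delta_{i_0}$ with $\xi=z_{i_0}$, then $w(I)<\sigma(\xi,f)/(16n^2)$, so by Lemma~\ref{lem:success}(ii) the exact test $\T^{(f_J)'}(0,2)$ already holds at the parent $J$ of $I$, whence by part (i) $t_{3/2}^{(\tilde{f}_J)'}(0,2)>-n2^{n+1-\rho_J}$ and $J$ would not have been subdivided --- a contradiction; the same applies to $\tilde b$. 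Your closing appeal to ``termination geometry'' gestures at this parent-node argument but never carries it out, and the $i\neq i_0$ case is missing entirely.
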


\begin{proof}
Since $\tilde{f_I}\in [f_I]_{2^{-\rho_I}}$, we have $\tilde{f}_{I^+}\in [f_{I^+}]_{2^{-\rho_I+2}}$ due to Lemma~\ref{lem:taylorshift} (ii). Reversing the coefficients and replacing $x$ by $x+1$ increases the error by a factor of at most $2^n$ (see Lemma~\ref{lem:taylorshift} (iii)), thus $\tilde{h}\in [f_{I^+,\rev}]_{2^{-\rho_I+2+n}}$. For the second part of (i), consider the following simple computation:
$$|t_{3/2}^{(f_I)'}(0,2)-t_{3/2}^{(\tilde{f_I})'}(0,2)|\le \frac{3}{2}\cdot n\cdot 2^{-\rho_I}\sum_{i=0}^{n-1}2^i=\frac{3}{2}\cdot n \cdot 2^{-\rho_I}(2^n-1)<n2^{n+1-\rho_{I}},$$ 
where the first inequality uses $(\tilde{f_I})'\in [(f_I)']_{n\cdot 2^{-\rho_I}}$.

\begin{figure}[t]
\begin{center}
\vspace{-8.45cm}\includegraphics[width=12cm]{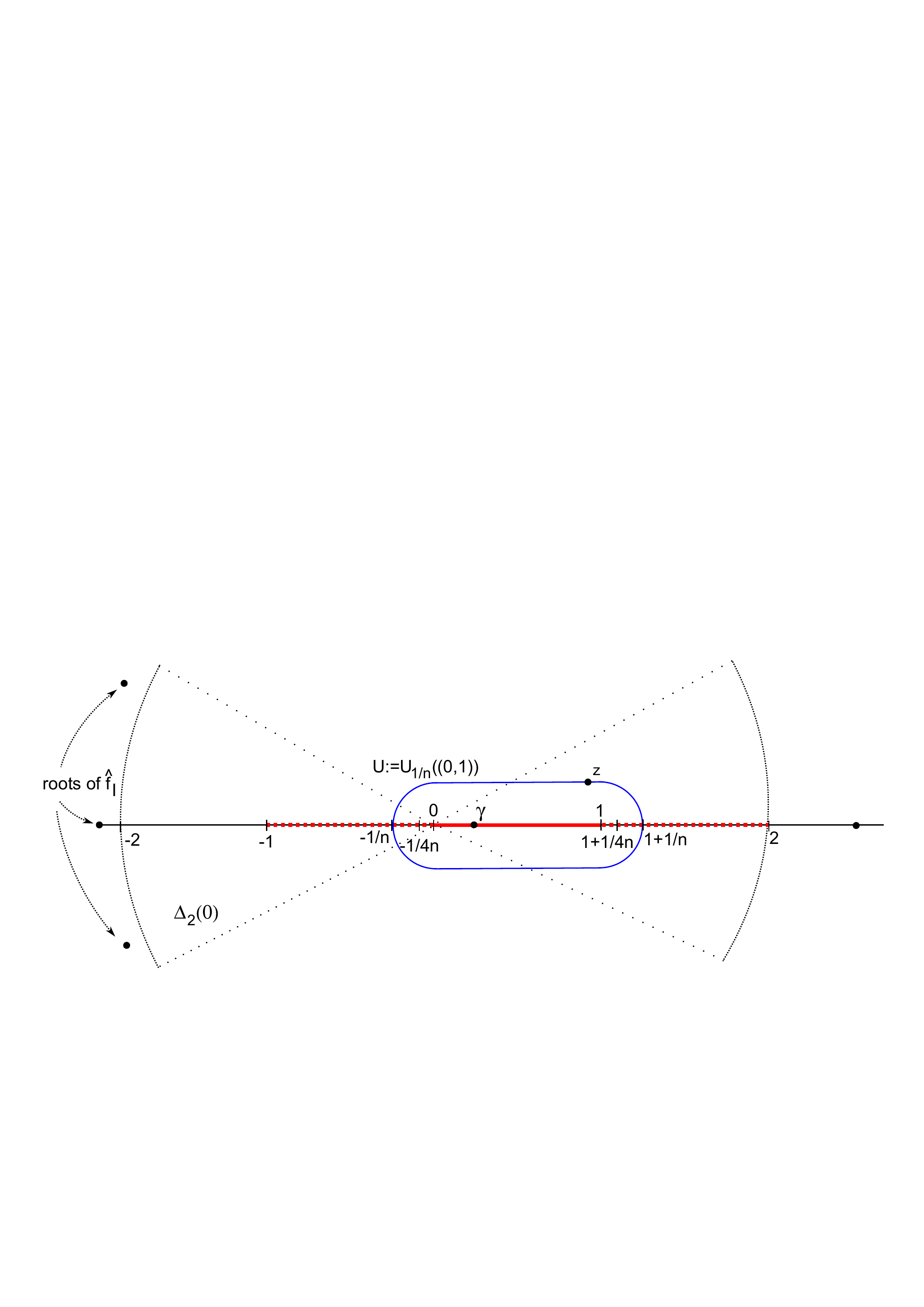}\end{center}
\vspace{-4.00cm}\caption{\label{fig:inklusion} \footnotesize{If $\lambda^-\cdot\lambda^+=\hat{f_I}(-\frac{1}{4n})\cdot\hat{f_I}(1+\frac{1}{4n})<0$, then there exists a root $\gamma\in(-\frac{1}{4n},1+\frac{1}{4n})$ of $\hat{f_I}$. Furthermore, $\Delta_{2}(0)$ contains no further root of $\hat{f}_I$. A computation shows that $|\hat{f_I}(z)|>|f_I(z)-\hat{f_I}(z)|$ for all $z$ on the boundary of the $\frac{1}{n}$-neighborhood $U$ of $(0,1)$ if the inequality (\ref{ineq4}) holds. Then, due to Rouch\'{e}'s Theorem, $U$ isolates a root of $f_I$.}}
\end{figure}

For (ii), we have
\begin{align*}
|f(a+tw(I))-\hat{f}_I(t)|&= |f_I(t)-\hat{f_I}(t)|\le|f_I(t)-\tilde{f}_{I}(t)|+|t|\cdot 2^{n+1-\rho_I}n\\
&\le 2^{-\rho_I}\sum_{i=0}^n |t|^i+(1+\frac{1}{n})2^{n+1-\rho_I}n\le n2^{-\rho_I}(1+\frac{1}{n})^{n+1}+n2^{n+2-\rho_I}<n2^{n+3-\rho_I}.
\end{align*}
\ignore{
and, in complete analogy, $|f(a-\frac{w(I)}{n})-\lambda|<2^{n+2-\rho_I}$.
For the right endpoint $b^{+}$, we have
\begin{align*}
\left|f(b^+)-\lambda^{+}\right|&=\left|f_I(1+\frac{1}{4n})-\hat{f_I}(1+\frac{1}{4n})\right|\le n2^{n+2-\rho_I}+(n+1)2^{-\rho_I}(1+\frac{1}{4n})^n<n2^{n+3-\rho_{I}}.
\end{align*}
}
Now, if the inequalities (\ref{ineq1}) and (\ref{ineq3}) hold, then $\sgn f(a^+)=\sgn(\lambda^-)$, $\sgn f(b^+)=\sgn(\lambda^+)$ and $f(a^+)\cdot f(b^+)<0$, hence, $f$ has a real root in $I^+$. We next show that (\ref{ineq4}) implies the uniqueness of this root. From $t_{3/2}^{(\tilde{f_{I}})'}(0,2)>-n2^{n+1-\rho_{I}}$, it follows that $\T_{3/2}^{(\hat{f_I})'}(0,2)$ succeeds and, thus, $\Delta_2(0)$ contains at most one root of $\hat{f_I}$. Since $\lambda^{-}=\hat{f_I}(-\frac{1}{4n})$ and $\lambda^{+}=\hat{f_I}(1+\frac{1}{4n})$ have different signs, the interval $(-\frac{1}{4n},1+\frac{1}{4n})$ contains a root $\gamma$ of $\hat{f_I}$. We consider the $\frac{1}{n}$-neighborhood $U\subset\C$ of $(0,1)$ and an arbitrary point $z$ on its boundary; see Figure~\ref{fig:inklusion}. It holds that $|-\frac{1}{n}-\gamma|/|z-\gamma|<(1+\frac{3}{4n})/(\frac{1}{4n})=4n+3<8n$ and, for any root $\tilde{\gamma}\neq \gamma$ of $\hat{f_I}$, we have $$\frac{|-\frac{1}{n}-\tilde{\gamma}|}{|z-\tilde{\gamma}|}\le \frac{|-\frac{1}{n}-z|+|z-\tilde{\gamma}|}{|z-\tilde{\gamma}|}\le 1+\frac{1+\frac{2}{n}}{1-\frac{1}{n}}=2\frac{1+\frac{1}{2n}}{1-\frac{1}{n}}.$$ Hence, it follows that
\begin{align*}
\left|\frac{\lambda}{\hat{f_I}(z)}\right|&=\left|\frac{\hat{f_{I}}(-\frac{1}{n})}{\hat{f_I}(z)}\right|=\frac{|-\frac{1}{n}-\gamma|}{|z-\gamma|}\prod_{\tilde{\gamma}\neq\gamma:\hat{f_{I}}(\tilde{\gamma})=0} \frac{|-\frac{1}{n}-\tilde{\gamma}|}{|z-\tilde{\gamma}|}<n2^{\deg\hat{f_I}+2}(1+\frac{1}{2n})^{\deg \hat{f_I}-1}(1-\frac{1}{n})^{-\deg \hat{f_I}+1}\\
&<n2^{\deg\hat{f_I}+2}\cdot\sqrt{2.72}\cdot 2.72<n2^{\deg\hat{f_I}+4}
\end{align*}
and, thus, $|\hat{f_I}(z)|>|\lambda|\cdot 2^{-\deg \hat{f_I}-4}n^{-1}$. Since $|z|\le 1+\frac{1}{n}$, we have $|f_I(z)-\hat{f_I}(z)|<n2^{n+3-\rho_I}$ according to (ii). 
Then, from Rouch\'{e}'s Theorem, it follows that $f_I$ has exactly one root within $U$ if (\ref{ineq4}) holds. This shows (iii). It remains to prove (iv): Let $J=\tilde{I}=(\tilde{a},\tilde{b})$ and $I=(a,b)$ the corresponding smaller interval. From our construction and (iii), $I^+$ contains a root 
$\xi=z_{i_0}$ of $f$ and the $\frac{w(I)}{n}$-neighborhood of $I$ is isolating for this root, thus, $|\tilde{a}-z_i|>\frac{w(I)}{4n}$ for all $i$. If there exists an $i\neq i_0$ with $\tilde{a}\in\Delta_i$, then
$w(I)<4n|\tilde{a}-z_i|<\sigma(z_i,f)/(16n^2)$. Thus, we obtain 
\begin{align*}
|\xi-z_i|&\le|\xi-\tilde{a}|+|\tilde{a}-z_i|<(1+\frac{1}{n})w(I)+\frac{\sigma(z_i,f)}{64n^3}<(1+\frac{1}{n})\frac{\sigma(z_i,f)}{16n^2}+\frac{\sigma(z_i,f)}{64n^3}<\sigma(z_i,f),
\end{align*} 
a contradiction. It remains to show that $\tilde{a}\notin 
\Delta_{i_0}$. If $\tilde{a}\in\Delta_{i_0}$, then  $w(I)<\frac{\sigma(\xi,f)}{16n^2}$. According to Lemma~\ref{lem:success} (ii), $T_{3/2}^{(f_J)'}(0,2)$ already holds for a parent node $J$ of $I$ and, thus, $t_{3/2}^{(\tilde{f_J})'}(0,2)>n2^{n+1-\rho_J}$ because of (i). This contradicts the fact that $J$ is not terminal. In completely analogous manner, one shows that $\tilde{b}$ is also not contained in any $\Delta_i$. This proves (iv). 
\end{proof}

We close this section with a result on the size of the recursion tree induced by $\dcm^{\rho}$ and the bit complexity of $\dcm^{\rho}$:

\begin{thm}\label{complexitydcml}
Let $f$ be a polynomial as in (\ref{polyf}) and $\rho\in\mathbb{N}$ an arbitrary positive integer. Then, the recursion tree $T_{\dcm^\rho}$ induced by $\dcm^{\rho}$ is a subtree of the tree $T_{\dcm}$ induced by $\dcm$, thus,
\begin{align*}
|T_{\dcm^\rho}|\le |T_{\dcm}|=O(\Sigma_f+n\log n).
\end{align*}
Furthermore, $\dcm^{\rho}$ demands for a number of bit operations bounded by
$$\Otilde(n(\Sigma_f+\log n)(n\Gamma+\tau+\rho-\log\sigma_f)).$$ 
\end{thm}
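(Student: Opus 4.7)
The theorem has two parts. I would first show the combinatorial inclusion $T_{\dcm^\rho}\subseteq T_{\dcm}$, from which the size bound follows by Theorem~\ref{thm:complexitydcm}; I would then bound the cost at each node and multiply by the number of nodes.

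For the tree inclusion, the task is to verify that whenever $\dcm$ terminates an interval $I$, $\dcm^{\rho}$ also terminates at $I$, i.e.\ never reaches Step~4(b). Consider a leaf $I$ of $T_{\dcm}$: either $\var(f,I^+)=0$, or $\T^{(f_I)'}(0,2)$ holds. In the first case, Lemma~\ref{lem:dcml}(i) states $\tilde h\in[f_{I^+,\rev}]_{2^{n+2-\rho_I}}$, so all coefficients $\tilde h_i$ lie on one side of zero---either all are $>-2^{n+2-\rho_I}$ or all $<2^{n+2-\rho_I}$---and $I$ is discarded in Step~3. In the second case, $t_{3/2}^{(f_I)'}(0,2)>0$, so the second part of Lemma~\ref{lem:dcml}(i) yields $t_{3/2}^{(\tilde f_I)'}(0,2)>-n\cdot 2^{n+1-\rho_I}$, and the algorithm proceeds to Step~4(a), which in any case does not subdivide $I$. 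Thus $T_{\dcm^\rho}$ is a subtree of $T_{\dcm}$, and the size bound $|T_{\dcm^\rho}|=O(\Sigma_f+n\log n)$ follows from Theorem~\ref{thm:complexitydcm}.

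For the bit complexity, I would first bound the bitsize of the coefficients of $\tilde f_I$. Using $|a|\le 1/2$, $w(I)\le 1$, and the coefficient bound for $f$ from Section~\ref{sec:scaling}, the coefficients of $f_I$ satisfy $|c_k|\le 2^{O(n)}\cdot\max_i|a_i|=2^{O(n\Gamma+\tau)}$. Since $\tilde f_I$ is a $\rho_I$-binary approximation of $f_I$ with $\rho_I\le\rho$, every coefficient of $\tilde f_I$ has bitsize $M=\Otilde(n\Gamma+\tau+\rho)$; by Lemma~\ref{lem:taylorshift} the polynomials $\tilde f_{I^+}$, $\tilde h$, and the bisected $\tilde f_{I_l},\tilde f_{I_r}$ (after rounding to $\rho_I$ bits) share the same bitsize bound up to an additive $O(n)$. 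The endpoints of $I$ and of the intervals stored in $\mathcal{O}$ are dyadic of bitsize at most the tree depth, which by Theorem~\ref{thm:complexitydcm} is $O(\log n-\log\sigma_f)$. Using asymptotically fast Taylor shift~\cite{GG97,Ger04}, every operation performed at a node---the shift from $\tilde f_I$ to $\tilde f_{I^+}$, the reversal plus shift by $1$ giving $\tilde h$, the coefficient sign tests, the evaluations of $t_{3/2}^{(\tilde f_I)'}(0,2)$ and of $\lambda^-,\lambda^+,\lambda$, the $O(n)$ intersection queries against $\mathcal{O}$, and the computation of $\tilde f_{I_l},\tilde f_{I_r}$---costs $\Otilde(n(n\Gamma+\tau+\rho-\log\sigma_f))$ bit operations. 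Multiplying this per-node cost by $|T_{\dcm^\rho}|=O(\Sigma_f+n\log n)$ gives the claimed bound.

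The non-routine part is the bitsize bookkeeping across the chain of Taylor shifts and reversals at each node: individually these operations can multiply coefficient magnitudes by $2^{O(n)}$ (Lemma~\ref{lem:taylorshift}) and introduce $n$-dependent denominators (the shifts use the arguments $-\tfrac{1}{4n}$ and $1+\tfrac{1}{2n}$), but after rounding to $\rho_I$ bits and exploiting that the coefficients of $f_I$ have magnitude $2^{O(n\Gamma+\tau)}$ for every $I\subseteq I_0$, the bitsize $M$ is controlled uniformly at $\Otilde(n\Gamma+\tau+\rho)$. Once this is in place, the per-node cost estimate follows from standard fast-arithmetic bounds, and the complexity bound is just the product with $|T_{\dcm^\rho}|$.
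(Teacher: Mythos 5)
Your proposal is correct and follows essentially the same route as the paper's own proof: terminality of $\dcm$-leaves is transferred to $\dcm^\rho$ via Lemma~\ref{lem:dcml}~(i) (so $T_{\dcm^\rho}\subseteq T_{\dcm}$ and Theorem~\ref{thm:complexitydcm} gives the size bound), and the per-node cost is bounded by the coefficient bitsize $O(n\Gamma+\tau+\rho)$ of $\tilde f_I$, fast Taylor shifts/evaluations at $O(\log n)$-bit arguments, and $O(n)$ comparisons with dyadic endpoints of bitsize $O(\log n-\log\sigma_f)$, then multiplied by the tree size. Note that the product you obtain, $\Otilde(n(\Sigma_f+n\log n)(n\Gamma+\tau+\rho-\log\sigma_f))$, is exactly what the paper's argument yields as well (cf.\ Theorem~\ref{thm:certify}), so the discrepancy with the stated factor $(\Sigma_f+\log n)$ lies in the theorem's formulation, not in your reasoning.
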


\begin{proof} 
For the first claim, we remark that $\dcm^{\rho}$ never splits an interval $I$ which is not split by $\dcm$ when applied to the exact polynomial $f$. Namely, if $I$ is terminal for $\dcm$, then either $t_{3/2}^{(f_I)'}(0,2)>0$ or $\var(f,I^+)=\var(f_{I^+,\rev})=0$. In the first case, we must have $t_{3/2}^{(\tilde{f_I})'}(0,2)>-n2^{n+1-\rho_I}$ whereas, in the second case, all coefficients $\tilde{h}_i$ of $\tilde{h}(x)=(1+x)^n\tilde{f}_{I^+}(\frac{1}{1+x})$ are either larger than $-2^{n+2-\rho_I}$ or smaller than $2^{n+2-\rho_I}$; see Lemma~\ref{lem:dcml} (i). Thus, $I$ is terminal for $\dcm^\rho$ as well. The result on the size of $T_{\dcm^\rho}$ then follows directly from Theorem~\ref{thm:complexitydcm}. 

For the bit complexity, we first consider the cost in each iteration: For an active node $(I,\tilde{f_I},\rho_I)\in\mathcal{A}$, $I=(a,b)$, the polynomial $\tilde{f_I}$ approximates $f_I$ to $\rho_I\le\rho$ bits after the binary point. The absolute value of each coefficient of $f_I$ is bounded by $2^{n+\tau}(2\Gamma)^n$ because the shift operation $x\mapsto a+(b-a)x$ does not increase the coefficients of $f$ by a factor of more than $2^n$ and the absolute value of the coefficients of $f$ is bounded by $2^{\tau+n(\Gamma+1)}$; see Section~\ref{sec:scaling}. It follows that the bitsize of the coefficients of $\tilde{f}_I$ is bounded by $n(\Gamma+1)+\tau+\rho$. Hence, the cost for computing $\tilde{h}(x)$, $\tilde{f_{I_l}}$ and $\tilde{f_{I_r}}(x)$ is bounded by $\Otilde(n(n\Gamma+\tau+\rho))$. Namely, the latter constitutes a bound on the cost for a fast asymptotic Taylor shift by an $O(\log n)$-bit number. The cost for evaluating $t_{3/2}^{(f_I)'}(0,2)$, $\lambda^-$, $\lambda^+$ and $\lambda$ matches the same bound because all these computations are evaluations of a polynomial of bitsize $O(n\Gamma+\tau+\rho)$ at an $O(\log n)$-bit number. We further remark that, in each iteration, $\mathcal{O}$ contains disjoint isolating intervals $J$ for some of the real roots of $f$ and, thus, $|\mathcal{O}|\le n$. Hence, the endpoints of the interval $J$ have to be compared with those of at most $n$ intervals stored in $\mathcal{O}$. Since $\dcm^\rho$ does not produce any interval of size less than $\frac{\sigma_f}{8n^2}$, these comparisons demand for at most $O(n(\log n-\log\sigma_f))$ bit operations. It follows that the total cost at each node is bounded by $\Otilde(n(n\Gamma+\tau+\rho-\log\sigma_f))$ bit operations. The bound on the total cost then follows from our result on the size of the recursion tree.
\end{proof}

\subsection{Known $L_f$ and $\sigma_f$}\label{sec:knownvalues}

From Corollary~\ref{cor:precision}, we already know that, for $\rho\ge \rho_f$, each root $z_i$ of $f$ moves by at most $\frac{\sigma(z_i,f)}{64n^3}$ when passing from $f$ to an arbitrary approximation $\tilde{f}\in [f]_{2^{-\rho_f}}$; see Definition~\ref{def:sufficintlylarge} for the definition of $\rho_f$. Hence, we expect it to be possible to isolate the roots of $f$ by only considering approximations of $f$ (and the intermediate results $f_I$) to $\rho_{f}$ bits after the binary point. The following theorem proves a corresponding result.

\begin{thm}\label{thm:success}
Let $f$ be a polynomial as in (\ref{polyf}) and $\rho\in\N$ an integer with
\begin{align}
\rho\ge \rho_{f}^{\max}:=\left\lceil  \rho_f-3\log\sigma_f+16n\right\rceil=O(\Sigma_f+n).\label{largeenoughL}
\end{align} 
Then, $\dcm^\rho$ returns isolating intervals for \emph{all} roots of $f$ and $B_J>2^{\rho_f}$ for all $(J,s_{J,l},s_{J,r},B_J)\in\mathcal{O}$.
\end{thm}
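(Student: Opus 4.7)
The strategy is to use the generous precision slack in $\rho \geq \rho_f^{\max}$ to show that $\dcm^\rho$ faithfully mimics the exact $\dcm$ on every node relevant for capturing a real root. The first step is a uniform lower bound on the working precision: since $T_{\dcm^\rho} \subseteq T_{\dcm}$ (Theorem~\ref{complexitydcml}) and no interval of width below $\sigma_f/(8n^2)$ is ever processed (proof of Theorem~\ref{dcmltermination}), the invariant $\rho_I \geq \rho + 2\log w(I)$ combined with $\rho \geq \rho_f - 3\log\sigma_f + 16n$ yields $\rho_I \geq \rho_f - \log\sigma_f + 16n - O(\log n) \geq \rho_f + 16n - O(\log n)$ at every node $I$ visited by $\dcm^\rho$. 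Consequently, Corollary~\ref{cor:precision} applies to each $\tilde f_I$ with a large margin: roots of $\tilde f_I$ stay within $\sigma(z_i, f_I)/(64n^3)$ of those of $f_I$, and $|f_I(z)| > (n+1) 2^{-\rho_f}$ whenever $z$ avoids the corresponding small root discs.

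Next, I would fix a real root $\xi = z_{i_0}$ of $f$ and let $J^* = (a^*, b^*)$ be the unique terminal node of $T_{\dcm}$ for which $\dcm$ outputs $(J^*)^+$ as $\xi$'s isolating interval. Walking down the chain of ancestors of $J^*$, I claim $\dcm^\rho$ either reaches $J^*$ and isolates $\xi$ there or terminates earlier at some ancestor $P$ via Step~4(a) with $\tilde P$ added to $\mathcal O$. Both outcomes require two checks at $P$: (i) $\dcm^\rho$ does not discard $P$ in Step~3, and (ii) the four conditions (\ref{cond1})--(\ref{ineq4}) all hold when Step~4(a) fires. For (ii), Lemma~\ref{lem:dcml}(ii) bounds the errors $|\lambda^- - f(a^+)|$, $|\lambda^+ - f(b^+)|$ and $|\lambda - f(a - w(P)/n)|$ by $n 2^{n+3-\rho_P}$, which is $\ll (n+1) 2^{-\rho_f}$ by the slack; Corollary~\ref{cor:precision} then furnishes lower bounds on the corresponding values of $|f|$ (whose evaluation points lie outside every root disc, by an argument modelled on Lemma~\ref{lem:dcml}(iv)), giving (\ref{ineq1}), (\ref{ineq3}) and (\ref{ineq4}). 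Condition (\ref{cond1}) is handled by the stronger form of Lemma~\ref{lem:intersection}: any $J \in \mathcal O$ intersecting $\tilde P$ would sit with $\tilde P$ in a common disc containing at most one root of $f$, hence $J$ already isolates $\xi$ and there is nothing more to do.

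The main obstacle is (i): proving $\dcm^\rho$ cannot discard $P$ in Step~3 while $\xi \in P^+$. Here $f_{P^+, \rev}$ has a positive real root, hence a sign change in its coefficients; the task is to make this quantitative, so that some coefficient of $\tilde h$ exceeds $2^{n+2-\rho_P}$ and another falls below $-2^{n+2-\rho_P}$. I would attack this by pairing the approximation bound $\tilde h \in [f_{P^+, \rev}]_{2^{n+2-\rho_P}}$ from Lemma~\ref{lem:dcml}(i) with lower bounds on $|f|$ furnished by Corollary~\ref{cor:precision} at well-chosen positive arguments (for instance at $x = 1$, which gives $f_{P^+, \rev}(1) = 2^n f(m(P^+))$, and at an argument straddling $\xi$ so that $f$ takes opposite signs on either side): if, say, every $\tilde h_i > -2^{n+2-\rho_P}$, then $f$ would be bounded below by something of order $-2^{n+3-\rho_P}$ at both test points, contradicting the sign change of $f$ across $\xi$ together with the lower bound $|f| > (n+1) 2^{-\rho_f}$ at points outside the root discs. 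The slack $\rho_P \geq \rho_f + 16n - O(\log n)$ is precisely what makes the contradiction quantitative. Finally, the claimed lower bound on $B_J$ follows immediately from (\ref{ineq3}) and the slack, since $B_J = \min(|\lambda^-|, |\lambda^+|) - n 2^{n+3-\rho_P} \geq (n+1) 2^{-\rho_f} - n 2^{n+3-\rho_P}$, which is comfortably larger than the claimed threshold.
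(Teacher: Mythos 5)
Your overall architecture matches the paper's proof (precision bookkeeping via $\rho_I\ge\rho+2\log w(I)$ and the height bound, then the three claims: no discarding in Step~3, success of conditions (\ref{cond1})--(\ref{ineq4}) when Step~4(a) fires, and the overlap case via Lemma~\ref{lem:intersection}), but the central step is missing. To verify (\ref{ineq1}), (\ref{ineq3}) and (\ref{ineq4}) you need that the evaluation points $a^+$, $b^+$ and $a-\frac{w(I)}{n}$ lie outside \emph{all} discs $\Delta_i=\Delta_{\sigma(z_i,f)/(64n^3)}(z_i)$, and that $I^+$ is isolating for $\xi$ (otherwise $f(a^+)f(b^+)<0$, hence (\ref{ineq1}), need not hold even though both values are large). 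You justify this by ``an argument modelled on Lemma~\ref{lem:dcml}(iv)'', but that is circular: the proof of Lemma~\ref{lem:dcml}(iv) presupposes part (iii), i.e.\ that (\ref{ineq1})--(\ref{ineq4}) already hold and the $\frac{w(I)}{n}$-neighborhood is already known to be isolating --- exactly what you are trying to establish. The paper closes this gap with a specific device you never use: from $t_{3/2}^{(\tilde f_I)'}(0,2)>-n2^{n+1-\rho_I}$ it passes to the perturbed polynomial $g(x)=f(x)+x\cdot\frac{n2^{n+2-\rho_I}}{w(I)}\in[f]_{2^{-\rho_f}}$, for which $\T^{(g_I)'}(0,2)$ holds \emph{exactly}; root stability (Corollary~\ref{cor:precision}) then gives that $\Delta_{3w(I)/2}(a)$ isolates $\xi$ for $f$ and that $\frac{\sigma(\xi,f)}{16n^2}<w(I)<2\sigma(\xi,f)$, and only this two-sided bound lets one place $a^+$, $b^+$, $a-\frac{w(I)}{n}$ outside every $\Delta_i$ and conclude the sign change on $I^+$. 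The same trick (with the polynomials $p$, $q$ containing the linear correction) is what makes your overlap argument rigorous: the ``common disc contains at most one root of $f$'' claim is not certified for $f$ itself, only for these nearby polynomials, and one must then transport it back via root stability. Note also that the term $-3\log\sigma_f$ in the definition (\ref{largeenoughL}) of $\rho_f^{\max}$ exists precisely to absorb the division by $w(I)$ in this perturbation; a proof that never divides by $w(I)$ leaves that term unexplained, which is a symptom of the missing step.

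Two smaller points. Your Step~3 argument (quantitative sign change of $f_{I^+,\rev}$ at positive arguments versus the lower bound $|f|>(n+1)2^{-\rho_f}$ off the discs) is a legitimate alternative to the paper's route, which instead perturbs by a constant, $g=f+2^{n+3-\rho_I}\in[f]_{2^{-\rho_f}}$, so that $g_{I^+,\rev}$ has one-signed coefficients while $g$ must still have a root in $I^+$; but to carry your version out you need the root-specific width bound $w(I)>\frac{\sigma(\xi,f)}{16n^2}$ (not just $w(I)>\frac{\sigma_f}{16n^2}$) and an explicit choice of a test point in $I^+$ on the negative side of $\xi$ avoiding all $\Delta_i$, neither of which is in your sketch. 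Finally, Corollary~\ref{cor:precision} is a statement about $f$ and $\mu$-perturbations of $f$; it cannot be applied verbatim ``to each $\tilde f_I$'' as you state at the outset, since $\rho_f$ and the discs are defined relative to $f$, and $\tilde f_I$ does not arise as $g_I$ for a single global approximation $g$ of $f$ --- all certified facts must be routed through Lemma~\ref{lem:dcml}(i)--(ii) and explicit perturbations of $f$, as above.
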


\begin{proof}
Due to Theorem~\ref{thm:complexitydcm} and~\ref{complexitydcml}, the height $h(\dcm^\rho)$ of $T_{\dcm^\rho}$ is bounded by $$h(\dcm^\rho)\le \log\frac{16n^2}{\sigma_f}=2\log n+4-\log\sigma_f\le 4n-\log\sigma_f.$$ Then, for any interval $I=(a,b)$ produced by $\dcm^\rho$, we have 
\begin{align}
\rho_I\ge \rho+2\log w(I)\ge \rho-2 h(\dcm^\rho)\ge \rho_{f}^{\min}:= \left\lceil \rho_f+8n-\log\sigma_f\right\rceil>0.\label{sizeLI}
\end{align}
The latter inequality guarantees that $\dcm^\rho$ does not return ``insufficient precision''. Now let $I$ be an interval whose closure $\overline{I}$ contains a root $\xi=z_{i_{0}}$ of $f$. We aim to show the following facts:\vspace{0.25cm}  
\begin{enumerate}
\item $I$ is not discarded in Step 3 of $\dcm^\rho$.
\item If $t_{3/2}^{(\tilde{f_I})'}(0,2)>-n2^{n+1-\rho_I}$, then all inequalities (\ref{ineq1})-(\ref{ineq4}) are fulfilled.
\item In the latter case, either $\tilde{I}=(a-\frac{w(I)}{2n},b+\frac{w(I)}{2n})$ is added to $\mathcal{O}$ or $\tilde{I}$ only intersects intervals $J$, with a corresponding $(J,s_{J,l},s_{J,r},B_J)\in\mathcal{O}$, which already isolate $\xi$.\vspace{0.25cm}
\end{enumerate}
If (1)-(3) hold, then $\dcm^\rho$ outputs isolating intervals for \emph{all} real roots of $f$. Namely, $\dcm^\rho$ starts subdividing $I_0=(-\frac{1}{2},\frac{1}{2})$ which contains all real roots of $f$. Thus, for each root $\xi$ of $f$, we eventually obtain an interval $I$ such that $\overline{I}$ contains $\xi$ and $t_{3/2}^{(\tilde{f_I})'}(0,2)>-n2^{n+1-\rho_I}$. Then, either $\tilde{I}$ is added to the list of isolating intervals or $\mathcal{O}$ already contains an isolating interval for $\xi$.

For the proof of (1), we have already shown that $w(I)>\frac{\sigma(\xi,f)}{16n^2}$. Corollary~\ref{cor:precision} then ensures that an arbitrary $g\in [f]_{2^{-\rho_f}}$ has a root $\xi'\in I^+$. Namely, the root $\xi\in\overline{I}$ stays real and moves by at most $\frac{\sigma(\xi,f)}{64n^3}<\frac{w(I)}{4n}$ when passing from $f$ to $g$.
Now, suppose that all coefficients $\tilde{h}_i$ of $\tilde{h}(x)=(1+x)^ n \tilde{f}_{I^+}(\frac{1}{1+x})$ are larger than $-2^{n+2-\rho_I}$; see (\ref{polyh}) for definitions. Since $|h_i-\tilde{h}_i|<2^{n+2-\rho_I}$ for all coefficients $h_i$ of $f_{I^+,\rev}=\sum_{i=0}^n h_i x^i$ (see Lemma~\ref{lem:dcml} (i)), it follows that $h_i>-2^{n+3-\rho_I}$ for all $i$. Hence, for the polynomial $$g(x):=f(x)+2^{n+3-\rho_I}\in [f]_{2^{-\rho_f}},$$ we have $g_{I^+,\rev}(x)=f_{I^+,\rev}(x)+2^{n+3-L_I}(x+1)^n$ and, thus, $g_{I^+,\rev}$ has only positive coefficients. In the case where $\tilde{h}_i<2^{n+2-\rho_I}$ for all $i$, we consider $g(x):=f(x)-2^{n+3-\rho_I}\in [f]_{2^{-\rho_f}}$ and, thus, $g_{I^+,\rho}$ has only negative coefficients. Hence, in both cases, there exists a $g\in [f]_{2^{-\rho_f}}$ which has no root in $I^+$, a contradiction. It follows that $I$ cannot be discarded in Step 3.
 
For (2), suppose that $t_{3/2}^{(\tilde{f_I})'}(0,2)>-n2^{n+1-\rho_I}$. Due to Lemma~\ref{lem:dcml} (i), we have $t_{3/2}^{(f_I)'}(0,2)>-n2^{n+2-\rho_I}$, and since 
$\log\frac{n2^{n+2-\rho_I}}{w(I)}\le 6+3\log n+n-\rho_I-\log\sigma_f
< -\rho_f,$
it follows that $$g(x):=f(x)+x\cdot \frac{n2^{n+2-\rho_I}}{w(I)}\in[f]_{2^{-\rho_f}}.$$
Hence, $g$ has a root $\xi'$ in $I^{+}$. Since $t_{3/2}^{(g_I)'}(0,2)=t_{3/2}^{(f_I)'}(0,2)+n2^{n+2-L_I}>0$, the disc $\Delta_{2w(I)}(a)$ is isolating for $\xi'$.
The following argument shows that $\Delta_{3w(I)/2}(a)$ isolates $\xi$: Suppose that $\Delta_{3w(I)/2}(a)$ contains an additional root $z_j\neq \xi$ of $f$. Then, $\sigma(\xi,f)<3w(I)$ and, thus, $\xi$ and $z_j$ would move by at most $\frac{3w(I)}{64n^3}<\frac{w(I)}{2}$ when passing from $f$ to $g$. It follows that $g$ would have at least two roots within $\Delta_{2w(I)}(a)$, a contradiction. Now, since $\Delta_{3w(I)/2}(a)$ is isolating for $\xi\in\overline{I}$, we have
$$
\frac{\sigma(\xi,f)}{16n^2}<w(I)<2\sigma(\xi,f).
$$
The left inequality implies that the distance of $\xi$ to any of the points $a^+=a-\frac{w(I)}{4n}$, $b^+=b+\frac{w(I)}{4n}$ and $c:=a-\frac{w(I)}{n}$ is larger than or equal to $\frac{w(I)}{4n}>\frac{\sigma(\xi,f)}{64n^3}$. Let $d_i:=|z_i-a|$ denote the distance between a root $z_i\neq \xi$ and the disc $\Delta_{3w(I)/2}(a)$. Then, $$\frac{\sigma(z_i,f)}{64n^3}\le\frac{|z_i-\xi|}{64n^3}\le\frac{d_i+3w(I)}{64n^3}<d_i+\frac{w(I)}{4}.$$ 
It follows that the points $a^+$, $b^+$, $c\in\Delta_{5w(I)/4}(a)$ are located outside the disc $\Delta_i:=\Delta_{\sigma(z_i,f)/(64n^3)}(z_i)$. In summary, none of the discs $\Delta_i$, $i=1,\ldots,n$, contains any of the points $a^+$, $b^+$ and $c$. Hence, due to Corollary~\ref{cor:precision}, it follows that each of the values $|f(c)|$, $|f(a^+)|$ and $|f(b^+)|$ is larger than $(n+1)2^{-\rho_f}$. A simple computation now shows that $(n+1)2^{-\rho_f}>2^{2n+8-\rho_I}n^2$. Thus, according to Lemma~\ref{lem:dcml} (ii), each of the absolute values $|\lambda|$, $|\lambda^-|$ and $|\lambda^+|$ is larger than \begin{align}
(n+1)2^{-\rho_f}-2^{n+3-\rho_I}n>2^{2n+8-\rho_I}n^2-2^{n+3-\rho_I}n>2^{2n+7-\rho_I}n^2.\label{absvalues}
\end{align}
It follows that the inequalities (\ref{ineq3}) and (\ref{ineq4}) hold. Since $I^+$ is isolating for $\xi$, $f(a^+)$ and $f(b^+)$ must have different signs and, thus, the same holds for $\lambda^-$ and $\lambda^+$. Hence, the inequality (\ref{ineq1}) holds as well. In addition, we have $B_{\tilde{I}}=\min(|\lambda^-|,|\lambda^+|)-2^{n+3-\rho_I}n>2^{-\rho_f}$ because of (\ref{absvalues}).
It remains to show (3): If $t_{3/2}^{(\tilde{f_I})'}(0,2)>-n2^{n+1-\rho_I}$, then due to (2) and Lemma~\ref{lem:dcml} (ii), the interval $\tilde{I}$ and the $\frac{w(I)}{n}$-neighborhood of $I$ is isolating for $\xi$. If $\tilde{I}$ does not intersect any other interval in $\mathcal{O}$, then $\tilde{I}$ is added to $\mathcal{O}$ and, thus, $\dcm^\rho$ outputs an isolating interval for $\xi$. We still have to consider the case where $\tilde{I}$ intersects an interval $J$ from $\mathcal{O}$. From the construction of $\mathcal{O}$, $J$ is the extension $(\tilde{c},\tilde{d})$ of an interval $J'=(c,d)$.
Now, suppose that $J$ is isolating for a root $\gamma\neq \xi$. The roots $\xi$ and $\gamma$ move by at most $\frac{w(I)}{4n}$ and $\frac{w(J')}{4n}$, respectively, when passing from $f$ to an arbitrary $g\in [f]_{2^{-\rho_f}}$ (see the proof of (1)). Hence, it follows that the union of $(a-w(I),b+w(I))$ and $(c-w(J'),d+w(J'))$ contains at least two roots of any $g\in [f]_{2^{-\rho_f}}$. Due to Lemma~\ref{lem:intersection}, one of the discs $\Delta_{2w(I)}(a)$ or $\Delta_{2w(J')}(c)$ then also contains at least two roots of $g$ contradicting the fact that $t_{3/2}^{(p_I)'}(0,2)>0$ for $p(x):=f(x)+x\cdot \frac{n2^{n+2-\rho_I}}{w(I)}\in [f]_{2^{-\rho_f}}$ and $t_{3/2}^{(q_{J'})'}(0,2)>0$ for $q(x):=f(x)+x\cdot \frac{n2^{n+2-\rho_{J'}}}{w(J')}\in [f]_{2^{-\rho_f}}$.
It follows that $J$ already isolates $\xi$.
\end{proof}

\subsection{Unknown $\rho_f$ and $\sigma_f$}\label{sec:certify}

For unknown $\rho_f$ and $\sigma_f$, we proceed as follows: We start with an initial precision $\rho$ (e.g., $\rho=16$) and run $\dcm^\rho$. If $\dcm^{\rho}$ returns "insufficient precision", we double $\rho$ and start over. Otherwise, $\dcm^{\rho}$ returns a list $\mathcal{O}=\{(J_k,s_{k,l},s_{k,r},B_k)\}_{k=1,\ldots,m}$, where each interval $J_k=(c_k,d_k)$ isolates a real root of $f$, $s_{k,l}=\sgn f(c_k)$, $s_{k,r}=\sgn f(d_k)$ and $0<B_{k}<\min(| f(c_k)|,| f(d_k)|)$. As already mentioned, there is no guarantee that all roots of $f$ are captured. Hence, in a second step, we use the subsequently described method $\textsc{Certify}^{\rho}$ to check whether the \emph{region of uncertainty}
$$\mathcal{R}:=\left[-\frac{1}{2},\frac{1}{2}\right]\backslash\bigcup_{k=1}^{m} J_k$$
may contain a root of $f$. If we can guarantee that $f(x)\neq 0$ for all $x\in\mathcal{R}$, we return the list $\mathcal{L}=\{J_k\}_{k=1,\ldots,m}$ of isolating intervals. Otherwise, we double $\rho$ and start over the entire algorithm. We have already proven in Theorem~\ref{thm:success} that $\dcm^{\rho}$ isolates all real roots of $f$ if $\rho\ge\rho_f^{\max}$ (i.e., $\rho$ fulfills the inequality (\ref{largeenoughL})). The following considerations will show that, for $\rho\ge\rho_f^{\max}$, $\textsc{Certify}^\rho$ succeeds as well.\\
 
How can we guarantee that $f$ does not vanish on $\mathcal{R}$? The crucial idea is to consider a decomposition of $[-\frac{1}{2},\frac{1}{2}]$ into subintervals $I$ and corresponding $\mu_I$-approximations $g$ of $f_{I}$ such that $g$ is monotone on $[0,1]$ or $\T_{3/2}^{g}(0,1)$ holds. Namely, for such an interval $I$, we can easily estimate the image $g([0,1])$ and, thus, conclude that $f$ contains no root in $I\cap \mathcal{R}$ or $\rho<\rho_f^{\max}$ because $g(t)$ and $f_I(t)$ differ by at most $(n+1)\mu_I$ for all $t\in[0,1]$. More precisely, we have:

\begin{lem}\label{lem:cert}
Let $I=(a,b)$ be an interval and $g(x)$ a $\mu$-binary approximation of $f_{I}$ with 
\begin{align}
-\log\mu\ge \rho-2(4n-\log\sigma_f).\label{condition}
\end{align}
\begin{itemize}
\item[(i)] Suppose that $\T_{3/2}^{g}(0,1)$ holds and $I$ is not entirely contained in one of the $J_k$. If \begin{align}
|g(0)|>8n\mu,\label{test1}
\end{align}
then $\bar{I}$ contains no root of $f$. Otherwise, $\rho<\rho_f^{\max}$. 
\item[(ii)] Suppose that $g$ is monotone on $[0,1]$ and let
$
\bar{I}\cap\mathcal{R}=\bigcup_{i=1}^s L_i
$
be the intersection of $\bar{I}$ and $\mathcal{R}$. For each endpoint $q$ of an arbitrary $L_i$, we define
\begin{align}
\lambda(q):=\begin{cases}
  s_{k,l}\cdot B_{k},  & \text{if }q\notin\{a,b\}\text{ and }q\text{ is the left endpoint of an interval } J_k\\
  s_{k,r}^r\cdot B_{k},  & \text{if }q\notin\{a,b\}\text{ and }q\text{ is the right endpoint of an interval } J_k\\
  g(0),  & \text{if }q=a\\
  g(1),  & \text{if }q=b.
\end{cases}\label{lambdaq}
\end{align}
If, for all $L_i=[q_l,q_r]$, $\min(|\lambda(q_l)|,|\lambda(q_r)|)>4n\mu$ and $\lambda(q_l)\cdot \lambda(q_r)>0$, then $\bar{I}\cap\mathcal{R}$ contains no root of $f$. Otherwise, we have $\rho<\rho_f^{\max}$.
\end{itemize}
\end{lem}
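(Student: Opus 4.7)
The plan is to prove each part by a direct ``if''-direction and a contrapositive ``otherwise''-direction. In every case, the ``if''-direction uses the global error bound $|f_I(t)-g(t)|\le (n+1)\mu$ on $[0,1]$ (from $g\in[f_I]_\mu$) together with the additional structural information given on $g$ ($\T_{3/2}^g(0,1)$ in (i), monotonicity in (ii)) to sandwich $|f_I|$ away from zero. The ``otherwise''-direction in both parts assumes $\rho\ge\rho_f^{\max}$ and draws on three already-proven facts: Theorem~\ref{thm:success} (every real root is captured and $B_k>2^{-\rho_f}$), Lemma~\ref{lem:dcml}(iv) (endpoints of every $J_k$ lie outside all discs $\Delta_i:=\Delta_{\sigma(z_i,f)/(64n^3)}(z_i)$), and Corollary~\ref{cor:precision} ($|f|>(n+1)2^{-\rho_f}$ away from the $\Delta_i$). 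Condition~(\ref{condition}) combined with $\rho\ge\rho_f^{\max}$ gives $\mu\le 2^{-\rho_f-8n}$, so every $\mathrm{poly}(n)\cdot\mu$ term will be negligible against $2^{-\rho_f}$.

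For part (i), the ``if''-direction is immediate: Lemma~\ref{lem:test}(i) with $K=3/2$ gives two-sided control of $|g|$ on $\bar\Delta_1(0)\supset[0,1]$ in terms of $|g(0)|$, and the test $|g(0)|>8n\mu$ pushes $|g|$ above $(n+1)\mu$, so $|f_I|>0$ on $[0,1]$. For the ``otherwise''-direction I would pick a point $x^*\in I\cap\mathcal{R}$, which exists because $I$ is connected and not entirely inside any single open $J_k$; then check that $x^*$ lies outside every $\Delta_i$---for real roots this is Lemma~\ref{lem:dcml}(iv) together with $x^*\notin J_k$, while for non-real $z_i$ the conjugate-root symmetry $\sigma(z_i,f)\le 2|\mathrm{Im}(z_i)|$ handles the case automatically. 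Corollary~\ref{cor:precision} then yields $|f(x^*)|>(n+1)2^{-\rho_f}$, and pulling this back through the upper bound in Lemma~\ref{lem:test}(i) forces $|g(0)|>8n\mu$, contradicting the failure of the test.

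For part (ii), I would first fix $L_i=[q_l,q_r]$ and its parameter-image $[s_l,s_r]\subseteq[0,1]$ under $t\mapsto a+t\,w(I)$, and translate each $\lambda(q)$ into information about $g$: if $q\in\{a,b\}$ then $\lambda(q)=g(s)$ exactly, whereas if $q$ is a $J_k$-endpoint then $\sgn f(q)=\sgn\lambda(q)$ with $|f(q)|\ge|\lambda(q)|$, so $g(s)$ and $\lambda(q)$ agree up to $(n+1)\mu$. The hypothesis $|\lambda(q)|>4n\mu$ dominates $(n+1)\mu$ for $n\ge 2$ and hence transfers to $\sgn g(s_l)=\sgn\lambda(q_l)$ and analogously for $q_r$; $\lambda(q_l)\lambda(q_r)>0$ forces these signs to agree, and monotonicity of $g$ on $[0,1]$ then keeps $|g|$ monotone and above $(n+1)\mu$ on $[s_l,s_r]$, so $|f_I|>0$ on $L_i$. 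For the ``otherwise''-direction I would split into three sub-cases: (a) if some $|\lambda(q)|\le 4n\mu$ with $q$ a $J_k$-endpoint, then $B_k\le 4n\mu$ contradicts Theorem~\ref{thm:success}; (b) if $q\in\{a,b\}$ and $|\lambda(q)|\le 4n\mu$, then $|f(q)|$ is at most $O(n\mu)$, contradicting Corollary~\ref{cor:precision} applied at $q\in\mathcal{R}$ (which is outside every $\Delta_i$ by the same arguments as in part (i)); (c) $\lambda(q_l)\lambda(q_r)\le 0$ with both magnitudes exceeding $4n\mu$ yields a sign change of $f$ on $L_i\subseteq\mathcal{R}$, producing a real root inside $\mathcal{R}$ and contradicting Theorem~\ref{thm:success}.

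The hard part will be the book-keeping: verifying in each sub-case that the exponential gap $\mu\le 2^{-\rho_f-8n}$ strictly dominates every $\mathrm{poly}(n)\cdot\mu$ against $(n+1)2^{-\rho_f}$ (or $2^{-\rho_f}$ for the $B_k$ case), and uniformly certifying that every evaluation point I feed into Corollary~\ref{cor:precision}---namely $x^*$ in part (i) and the endpoints $a,b$ when they belong to $\mathcal{R}$ in part (ii)---lies outside every $\Delta_i$, which for real roots uses Lemma~\ref{lem:dcml}(iv) and for non-real roots uses conjugate-root symmetry.
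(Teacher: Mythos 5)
Your proposal is correct and follows essentially the same route as the paper's proof: Lemma~\ref{lem:test}(i) plus the $(n+1)\mu$ coefficient-error bound for the ``if'' directions, and for the ``otherwise'' directions the combination of Theorem~\ref{thm:success} ($B_k>2^{-\rho_f}$, all real roots captured), Lemma~\ref{lem:dcml}(iv) and Corollary~\ref{cor:precision} together with $\mu\le 2^{-\rho_f-8n}$ from (\ref{condition}), merely organized contrapositively into sub-cases. You even make explicit a detail the paper leaves implicit, namely why every point of $\mathcal{R}$ avoids the discs $\Delta_i$ (Lemma~\ref{lem:dcml}(iv) for real roots, conjugate-root symmetry for non-real ones), and the deferred book-keeping indeed goes through.
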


\begin{figure}[t]
\begin{center}
\vspace{-12.2cm}\includegraphics[width=14cm]{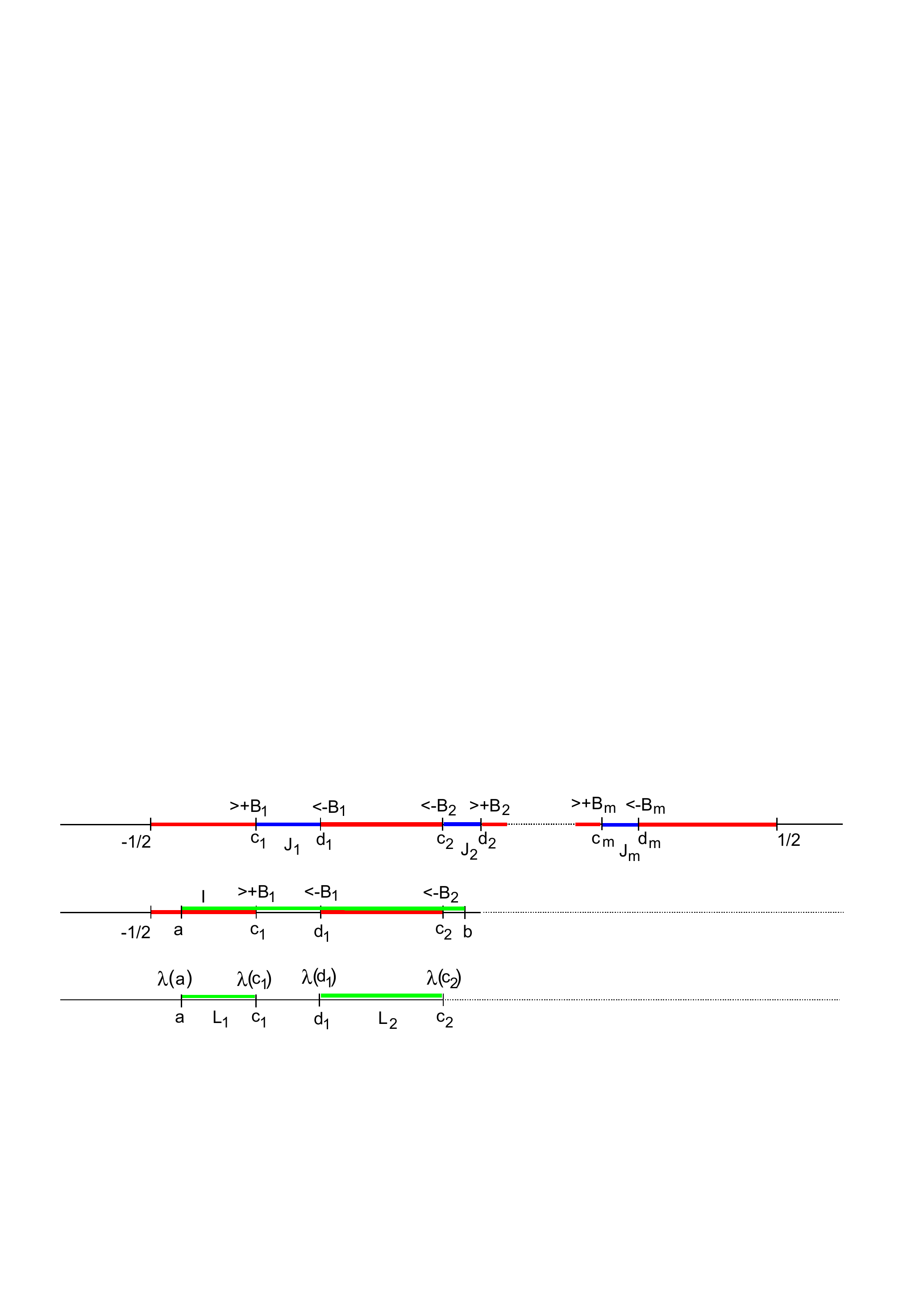}\end{center}
\vspace{-4cm}\caption{\label{fig:certify} \footnotesize{$\dcm^\rho$ returns a list $\mathcal{O}=\{J_k,s_{k,l},s_{k,r},B_k\}_k$, where $J_k$ is isolating for a real root of $f$, $s_{k,l}=\sgn f(c_k)$, $s_{k,r}=\sgn f(d_k)$ and $\min(|f(c_k)|,|f(d_k)|)>B_k>0$.
The intervals in between define the \emph{region of uncertainty} $\mathcal{R}$. In $\textsc{Certify}^\rho$, we subdivide $(-1/2,1/2)$ into intervals $I$ such that, for a $\mu$-approximation $g$ of $f_I$, either $\T_{3/2}^{g}(0,1)$ holds or $g$ is monotone on $(0,1)$. If $\T_{3/2}^{g}(0,1)$ holds and $|g(0)|>8m\mu$, then $I$ contains no root of $f$; see Lemma~\ref{lem:cert} (i). If $g$ is monotone on $(0,1)$, we consider all intervals $L_i$ in the intersection of $I$ with $\mathcal{R}$ and check whether the conditions in Lemma~\ref{lem:cert} (ii) are fulfilled. If they are fulfilled, then $f$ has no root in $L_i$; otherwise, we must have $\sigma<\sigma_f^{\max}$.}}
\end{figure}

\begin{proof}
If $\T_{3/2}^{g}(0,1)$ holds, then $\frac{1}{3}|g(0)|<|g(t)|<\frac{5}{3}|g(0)|$ for all $t\in[0,1]$ according to Lemma~\ref{lem:test}. It follows that
$$
|f_I(t)|\ge |g(t)|-|g(t)-f_I(t)|> \frac{1}{3}|g(0)|-|g(t)-f_I(t)|>\frac{8}{3}n\mu-(n+1)\mu>0, 
$$
hence, $f$ has no root in $\bar{I}$. Now suppose that $|g(0)|\le 8n\mu$. Since $I$ is not contained in any $J_k$, there exists a $t\in[0,1]$ with $x=a+t(b-a)\in\mathcal{R}$ and $|f(x)|=|f_I(t)|\le |g(t)|+(n+1)\mu\le\frac{5}{3}|g(0)|+(n+1)\mu<16n\mu$. If $\rho\ge\rho_f^{\max}$, then from (\ref{condition}) and the definition of $\rho_f^{\max}$, it follows that $-\log\mu\ge \rho_f^{\min}=\left\lceil \rho_f+8n-\log\sigma_f\right\rceil$; see the computation in (\ref{sizeLI}). Hence, we have $|f(x)|<2^{-\rho_f}$. In addition, Lemma~\ref{lem:dcml} (iv) and Theorem~\ref{thm:success} guarantee that $\dcm^\rho$ returns isolating intervals for all real roots of $f$, and each point in $\mathcal{R}$ has distance $\ge\sigma(z_i,f)/(64n^3)$ from each root $z_i$. Thus, $|f(x)|>(n+1)2^{-\rho_f}$ due to Corollary~\ref{cor:precision}, a contradiction. This proves (i).
 
For (ii), we consider an arbitrary interval $L_i=[q_l,q_r]$. Let $t_l$ and $t_r$ be corresponding values in $[0,1]$ with 
$q_l=a+t_l\cdot w(I)$ and $q_r=a+t_r\cdot w(I)$. If 
$\min(|\lambda(q_l)|,|\lambda(q_r)|)>4n\mu$, then 
$$\min(|g(t_l)|,|g(t_r)|)\ge\min(|\lambda(q_l)
|,|\lambda(q_r)|)-(n+1)\mu>2n\mu.$$ Namely, for $q_l=a$, we obviously have $|g(t_l)|=|\lambda(q_l)|$; otherwise, 
$|g(t_l)|\ge |f_I(t_l)|-(n+1)\mu\ge 
|\lambda(q_l)|-(n+1)\mu$. For $q_r$, an analogous argument applies. 
If, in addition, $\lambda(q_l)\cdot \lambda(q_r)>0$, then $g(t_l)\cdot g(t_r)>0$ as well because $\lambda(q_l)$ and $\lambda(q_r)$ have the same sign as $g(t_l)$ and $g(t_r)$, respectively. Since we assumed 
that $g$ is monotone on $[0,1]$, it follows that $|g(t)|>2n\mu$ for all $t\in[t_l,t_r]$. This shows that 
$|f_I(t)|\ge |g(t)|-(n+1)\mu>0$ for all $t\in[t_l,t_r]$, thus the first part of (ii) follows. For the second part, suppose that 
$\rho\ge\rho_f^{\max}$. Then, $B_k>2^{-\rho_f}>4n\mu$ for all $k$ and 
$|f(x)|>2^{-\rho_f}(n+1)$ for all $x\in\mathcal{R}$ according to Corollary~\ref{cor:precision} and Theorem~\ref{thm:success}. Thus, if 
$a\in\mathcal{R}$, we have $$|g(0)|\ge 
|f_I(0)|-(n+1)\mu=|f(a)|-(n+1)\mu>2^{\rho_f}
-(n+1)\mu>4n\mu.$$ An analogous argument applies to $b$. It 
follows that $|\lambda(q)|>4n\mu$ for all endpoints $q$ of an arbitrary interval 
$L_i=[q_l,q_r]$. It remains to show that $\lambda(q_l)\cdot\lambda(q_r)>0$. We have already shown that $|\lambda(q)|>4n\mu$ for 
each endpoint $q$, thus, $f(q)$ must have the same sign as $\lambda(q)$. 
Namely, if $q\in\{a,b\}$, then $f(q)$ differs from $\lambda(q)>4n\mu$ 
by at most $(n+1)\mu<4n\mu$, and, for $q\notin\{a,b\}$, we have 
$\sgn(\lambda(q))=s_{k,l}$ or $\sgn(\lambda(q))=s_{k,r}$ depending on 
whether $q$ is the left or the right endpoint of an interval $J_k$. Since 
$\rho\ge\rho_f^{\max}$, $\mathcal{R}$ contains no root of $f$, thus, we 
must have $\lambda(q_l)\cdot\lambda(q_r)=f(q_l)\cdot f(q_r)>0$.
\end{proof}

We can now formulate the subroutine $\textsc{Certify}^\rho$ (see Algorithm~\ref{alg:certify} in the Appendix for pseudo-code). $\textsc{Certify}^\rho$ is similar to $\dcm^\rho$ in the sense that we recursively subdivide $I_0=(-\frac{1}{2},\frac{1}{2})$ into intervals $I$ and consider corresponding $\rho_I$-binary approximations $\tilde{f}_I$ of $f_I$. Then, in each iteration, we aim to apply Lemma~\ref{lem:cert} in order to certify that $\bar{I}\cap\mathcal{R}$ contains no root of $f$ or $\rho<\rho_f^{\max}$. Throughout the following consideration, we \emph{assume} that 
\begin{align}
\textsc{Certify}^{\rho}\text{ never produces an interval }I\text{ of width } w(I)\le \frac{\sigma_f}{8n^2}.\label{assumption}
\end{align} 
We will prove this fact in Theorem~\ref{thm:certify} (ii). Again, we mark comments which should help to follow the approach by an "//" at the beginning. 

\vspace{0.5cm}

\hrule\vspace{0.2cm}
\noindent \textbf{$\textsc{Certify}^\rho$.} In a first step, we choose a $(\rho+n+1)$-binary approximation $\tilde{f}$ of $f$ and evaluate $\tilde{f}(-\frac{1}{2}+x)$. Then, the resulting polynomial is approximated by a $(\rho+1)$-binary approximation $\tilde{f}_{I_0}\in [\tilde{f}(-\frac{1}{2}+x)]_{2^{-\rho-1}}$, thus, $\tilde{f}_{I_0}\in [f_{I_0}]_{2^{-\rho}}$ according to Lemma~\ref{lem:taylorshift}.\\ 

$\textsc{Certify}^\rho$ maintains a list $\mathcal{A}$ of active nodes $(I,\tilde{f}_I,\rho_I)$, where $I=(a,b)\subset I_0$ is an interval, $\tilde{f}_I$ approximates $f_I$ to $\rho_I$ bits after the binary point and $\rho+2\log w(I)\le\rho_I\le \rho$. We initially start with $\mathcal{A}:=\{(I_0,\tilde{f}_{I_0},\rho)\}$. For each active node, we proceed as follows:\vspace{0.25cm}
\begin{enumerate} 
\item Remove $(I,\tilde{f}_I,\rho_I)$ from $\mathcal{A}$.
\item If $I\cap \mathcal{R}=\emptyset$, do nothing (i.e., discard $I$). Otherwise, compute $t_{3/2}^{\tilde{f_{I}}}(0,1)$.\\

\noindent$\text{\large{//}}$\textit{ If $I\cap \mathcal{R}=\emptyset$, $I$ is contained in one of the isolating intervals $J_k$, hence, we can discard $I$.}\\

\item If $t_{3/2}^{\tilde{f_{I}}}(0,1)>-2^{-\rho_{I}+2}n$, check whether
\begin{align}
|\tilde{f}_{I}(0)+2^{-\rho_I+2}n|>2^{-\rho_I+5}n^2.\label{check1}
\end{align} 
If (\ref{check1}) holds, do nothing (i.e., discard $I$); otherwise, return "insufficient precision".\\

\noindent$\text{\large{//}}$\textit{ For $g(x):=f_{I}(0)+2^{-\rho_I+2}n\in [f_I]_{2^{-\rho_I+3}n}$, the predicate $\T_{3/2}^g(0,1)$ holds. From our assumption on $w(I)$, we further have $\rho_I\ge\rho+2\log w(I)\ge \rho-2(3+2\log n-\log\sigma_f)$, and, thus, $2^{-\rho_I+3}n\le 2^{-\rho-2(4n-\log\sigma_f)}$. It follows that $g$ fulfills the condition (\ref{condition}) from Lemma~\ref{lem:cert} and, therefore, $\bar{I}$ contains no root of $f$ if (\ref{check1}) holds; otherwise, $\sigma<\sigma_f^{\max}$.}\\

\item If $t_{3/2}^{\tilde{f_{I}}}(0,1)\le -2^{-\rho_{I}+2}n$, compute $\tilde{h}(x)=\sum_{i=0}^{n}\tilde{h}_{i}x^{i}:=(1+x)^{n}(\tilde{f_I})'(\frac{1}{1+x})$ and consider the following distinct cases:\\

\begin{itemize}
\item[(a)] If $\tilde{h}_{i}>-n2^{n-\rho_{I}}$ for all $i$ (or $\tilde{h}_{i}<n2^{n-\rho_{I}}$ for all $i$), consider $$g(x):=\tilde{f}_I(x)+n2^{n-\rho_I}\cdot x\in [f_I]_{n2^{n-\rho_I}}$$
($g(x):=\tilde{f}_I(x)+n2^{n-\rho_I}\cdot x$, respectively). Then, for each interval $L_i=[q_l,q_r]$, determine $\lambda(q_l)$ and $\lambda(q_r)$ as defined in (\ref{lambdaq}). If $\min(|\lambda(q_l)|,|\lambda(q_r)|)>n2^{n+2-\rho_I}$ and $\lambda(q_l)\cdot\lambda(q_r)>0$ for all $L_i$, discard $I$; otherwise, return "insufficient precision".\\

\noindent$\text{\large{//}}$\textit{ Suppose $\tilde{h}_{i}>-n2^{n-\rho_{I}}$ for all $i$ and $g(x):=\tilde{f}_I(x)+n2^{n-\rho_I}x$. Then, the polynomial $(1+x)^{n}(g)'(\frac{1}{1+x})=(1+x)^{n}(\tilde{f_I})'(\frac{1}{1+x})+n2^{n-\rho_I}(1+x)^n$ has only positive coefficients. It follows that $\var(g',(0,1))=0$ and, therefore $g$ is monotone on $[0,1]$. In addition, from our assumption on $w(I)$, we have $n2^{n-\rho_I}\le 2^{-\rho-2(4n-\log\sigma_f)}$. Hence, we can apply Lemma~\ref{lem:cert} (ii) to $g$ which guarantees that $\bar{I}\cap\mathcal{R}$ does not contain a root of $f$ if $\min(|\lambda(q_l)|,|\lambda(q_r)|)>n2^{n+2-\rho_I}$ and $\lambda(q_l)\cdot\lambda(q_r)>0$ for all $L_i=[q_l,q_r]$. If one of the latter two inequalities does not hold, then $\sigma<\sigma_f^{\max}$. The case $\tilde{h}_{i}<n2^{n-\rho_{I}}$ for all $i$ is treated in exactly the same manner.}\\

\item[(b)] If there exist $\tilde{h}_i$ and $\tilde{h}_j$ with $\tilde{h}_i\le-n2^{n-\rho_{I}}$ and $\tilde{h}_j\ge n2^{n-\rho_{I}}$, then $I$ is subdivided into $I_l:=(a,m_I)$ and $I_r:=(m_I,b)$. We add $(I_l,\tilde{f_{I_l}},\rho_I-1)$ and
$(I_r,\tilde{f_{I_r}},\rho_I-2)$ to $\mathcal{A}$, where $\tilde{f_{I_l}}$ is an $\rho_I$-binary approximation of $\tilde{f_{I}}(\frac{x}{2})$ and $\tilde{f_{I_r}}$ an $(\rho_I-1)$-binary approximation of $\tilde{f_I}(\frac{x+1}{2})$; see Step 4 (b) of $\dcm^\rho$ for details. If $\rho_I<2$, return "insufficient precision".\\

\noindent$\text{\large{//}}$\textit{ Due to Lemma~\ref{lem:taylorshift}, we have $\tilde{f}_{I_l}\in [f_{I_l}]_{2^{-\rho_I-1}}$ and $\tilde{f}_{I_r}\in [f_{I_r}]_{2^{-\rho_I-2}}$. Hence, by induction, it follows that $\rho+2\log w(I)\le\rho_I\le \rho$ for all active nodes.}\\ 

\end{itemize}
\end{enumerate}
$\textsc{Certify}^\rho$ stops when $\mathcal{A}$ becomes empty. If $\textsc{Certify}^{\rho}$ returns "insufficient precision", we know for sure that $\sigma<\sigma_f^{\max}$. Otherwise, the region of uncertainty $\mathcal{R}$ contains no root of $f$.\vspace{0.2cm}
\vspace{0.2cm}
\hrule\vspace{0.5cm}

The following theorem proves that our assumption (\ref{assumption}) for the intervals produced by $\textsc{Certify}^{\rho}$ is correct. Furthermore, we show that $\textsc{Certify}^{\rho}$ is also efficient with respect to bit complexity matching the worst case bound obtained for $\dcm^{\rho}$; see Theorem~\ref{complexitydcml}.

\begin{thm}\label{thm:certify}
For a polynomial $f$ as defined in (\ref{polyf}) and an arbitrary $\rho\in\N$, 
\begin{itemize}
\item[(i)] $\textsc{Certify}^\rho$ does not produce an interval $I$ of width $w(I)\le\sigma_f/(8n^2)$ and induces a recursion tree of size $O(\Sigma_f+n\log n)$.
\item[(ii)] $\textsc{Certify}^\rho$ needs no more than $\Otilde(n(\Sigma_f+n\log n)(n\Gamma+\tau+\rho-\log\sigma_f))$ bit operations.  
\item[(iii)] For $\sigma\ge\sigma_f^{\max}$, $\textsc{Certify}^\rho$ succeeds.
\end{itemize}
\end{thm}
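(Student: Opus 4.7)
My plan for part (i) mirrors the structure of Theorem~\ref{thm:complexitydcm}, adapted to the subdivision rule of $\textsc{Certify}^\rho$. Step 4(b) is entered only when both the Step 3 test fails (so $t_{3/2}^{\tilde{f_I}}(0,1)\le -2^{-\rho_I+2}n$) and $\tilde{h}$ has a sign change beyond the threshold $n2^{n-\rho_I}$. I would first show that if $w(I)\le\sigma_f/(4n^2)$ then Step 4(b) is unreachable. By Lemma~\ref{lem:success}(i) at least one of $\T^{f_I}(0,1)$ or $\T^{(f_I)'}(0,1)$ holds. In the former case, the Lipschitz-style bound $|t_{3/2}^{f_I}(0,1)-t_{3/2}^{\tilde{f_I}}(0,1)|\le\frac{3}{2}(n+1)2^{-\rho_I}$ keeps the Step 3 threshold satisfied, so $I$ enters Step 3 and is not split. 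In the latter case, since $\Delta_{1/2}(1/2)\subset\Delta_1(0)$, the One-Circle theorem applied to $f'$ on the unit interval forces $\var((f_I)',(0,1))=0$, and the error analysis underlying Lemma~\ref{lem:dcml}(i) then shows that all $\tilde{h}_i$ share a sign up to the tolerance $n2^{n-\rho_I}$, so $I$ enters Step 4(a) and is again not split. Hence no child of width $\le\sigma_f/(8n^2)$ is produced. For the size bound I reuse the canonical-tree scheme of Theorem~\ref{thm:complexitydcm}: every internal node $I$ has $\T^{f_I}(0,1)$ failing and $\var(f',I)\ge 1$, so Lemma~\ref{lem:success}(iv) places a root $\xi$ of $f$ in $\Delta_{2nw(I)}(a)$ with $\sigma(\xi,f)<4n^2w(I)$; a same-level canonical interval for $\xi$ then lies in $\Delta_{2nw(I)}(a)$. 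This defines a map from internal nodes into $T_c$ under which each canonical interval receives only a bounded number of preimages, and combining with $|T_c|=O(\Sigma_f+n\log n)$ gives the claimed size.

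Part (ii) is then a routine accounting matching Theorem~\ref{complexitydcml}. As in the analysis of $\dcm^\rho$ the polynomial $\tilde{f_I}$ carries coefficients of bitsize $O(n\Gamma+\tau+\rho)$, so each approximate Taylor shift, the construction of $\tilde{h}$, and each evaluation of $t_{3/2}^{\tilde{f_I}}(0,1)$, $\tilde{f_I}(0)$, $\tilde{f_I}(1)$ costs $\Otilde(n(n\Gamma+\tau+\rho))$; the at most $n$ intersection tests against $\mathcal{O}$ add $O(n(\log n-\log\sigma_f))$ per node because all endpoints have bitsize $O(\log n-\log\sigma_f)$. Multiplying the per-node cost by the tree-size bound from (i) produces the stated total.

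The hard part is (iii), where I would spend most of the effort. Assuming $\rho\ge\rho_f^{\max}$, the depth bound from (i) combined with the precision update rule gives $\rho_I\ge\rho_f^{\min}>0$ on every processed interval, which rules out the precision exit in Step 4(b). I would then analyze the two remaining failure branches separately. For the Step 3 exit, if condition~(\ref{check1}) were to fail then $|\tilde{f_I}(0)|=O(n^2 2^{-\rho_I})$; combining $\T_{3/2}^g(0,1)$ for $g:=\tilde{f_I}+2^{-\rho_I+2}n$ (which lies in $[f_I]_{2^{-\rho_I+3}n}$) with Lemma~\ref{lem:test} would furnish a point $x\in\bar{I}\cap\mathcal{R}$ with $|f(x)|<(n+1)2^{-\rho_f}$, contradicting Corollary~\ref{cor:precision} since by Lemma~\ref{lem:dcml}(iv) and Theorem~\ref{thm:success} every point of $\mathcal{R}$ keeps distance $\ge\sigma(z_i,f)/(64n^3)$ from every root $z_i$. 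For the Step 4(a) exit, the perturbed $g=\tilde{f_I}\pm n2^{n-\rho_I}x$ is monotone on $[0,1]$ and Lemma~\ref{lem:cert}(ii) applies with $\mu=n2^{n-\rho_I}$; I would verify that each $\lambda(q)$ is either $\pm B_k>2^{-\rho_f}$ from $\mathcal{O}$ (Theorem~\ref{thm:success}) or approximates an $f(q)$ with $|f(q)|>(n+1)2^{-\rho_f}$ to within $(n+1)\mu$, and that $\sgn\lambda(q)$ is forced by the decomposition of $\mathcal{R}$. The calibration of $\rho_f^{\max}$ in~(\ref{largeenoughL}) is designed precisely so that $4n\mu<\min(|\lambda(q_l)|,|\lambda(q_r)|)$ and $\lambda(q_l)\lambda(q_r)>0$; carefully verifying this chain of threshold inequalities is the main obstacle, and it reduces to the arithmetic gap built into the definition $\rho_f^{\max}\ge\rho_f-3\log\sigma_f+16n$.
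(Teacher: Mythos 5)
Your handling of the width claim in (i), of (ii), and of (iii) is essentially sound and close in spirit to the paper's: for the width bound, your contrapositive via Lemma~\ref{lem:success}(i) (if $w(I)\le\sigma_f/(4n^2)$ then either the $\T^{f_I}(0,1)$-branch or the monotonicity branch catches $I$) is a valid variant of the paper's direct argument, which notes that a subdivided interval has $\T^{f_I}(0,1)$ failing and $\var(f',I)\neq 0$ and then applies Lemma~\ref{lem:success}(iv); and your plan for (iii) amounts to re-deriving what Lemma~\ref{lem:cert} already provides, which is exactly how the paper argues (the in-line comments of $\textsc{Certify}^\rho$ invoke Lemma~\ref{lem:cert}, and $\rho_I\ge\rho_f^{\min}>0$ rules out the precision exit).

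The genuine gap is the tree-size bound in (i). Your canonical-tree map does not have bounded multiplicity: Lemma~\ref{lem:success}(iv) only places the witness root $\xi$ inside $\Delta_{2nw(I)}(a)$, so the canonical interval of $\xi$ at the same level can lie as many as $\Theta(n)$ intervals away from $I$, and a single canonical node can therefore receive $\Theta(n)$ preimages. Contrast this with the $\dcm$ analysis in Theorem~\ref{thm:complexitydcm}, where Lemma~\ref{lem:success}(iii) localizes the root within $\Delta_{2w(I)}(a)$ and the multiplicity is at most four. As stated, your argument only yields $|T_{\textsc{Certify}^\rho}|=O(n(\Sigma_f+n\log n))$, a factor $n$ worse than claimed, and this loss would propagate into (ii) and ruin the $\Otilde(n^3\tau^2)$ headline bound. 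The paper avoids this by observing that the recursion tree is a subtree of the tree $T(f')$, in which an interval is subdivided only if $\var(f',I)\neq 0$ \emph{and} $\Delta_{2nw(I)}(m(I))$ contains a root $\xi$ with $\sigma(\xi,f)<4n^2w(I)$, and by citing \cite[Section 4.2]{s-bitstream-mcs11} for $|T(f')|=O(\Sigma_f+n\log n)$; that analysis crucially combines the root-proximity condition with the sign-variation condition (via the subadditivity of Theorem~\ref{subad}), which your sketch records ($\var(f',I)\ge 1$ at internal nodes) but never uses. Without invoking that result, or reproducing an argument that exploits both conditions simultaneously, the claimed bound $O(\Sigma_f+n\log n)$ is not established.
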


\begin{proof}
An interval $I$ is only subdivided if $t_{3/2}^{\tilde{f_{I}}}(0,1)\le -2^{-\rho_{I}+2}n$ (Step (3)) or if there exist coefficients $\tilde{h}_i$ and $\tilde{h}_j$ of $\tilde{h}(x)=\sum_{i=0}^n \tilde{h}_i x^i=(1+x)^n(\tilde{f_I})'(\frac{1}{1+x})$ with $\tilde{h}_i<-n2^{n-\rho_I}$ and $\tilde{h}_j>n2^{n-\rho_I}$ (Step 4 (b)). In the first case, we must have $t_{3/2}^{f_{I}}(0,1)<0$ since $|t_{3/2}^{f_I}(0,1)-t_{3/2}^{\tilde{f_{I}}}(0,1)|<2^{-\rho_I+2}n$, hence, $T_{3/2}^{f_I}(0,1)$ does not hold. For the second case, we have $\var((f_I)',(0,1))\neq 0$ since corresponding coefficients of $\tilde{h}(x)=(1+x)^n(\tilde{f_I})'(\frac{1}{1+x})$ and $(1+x)^n(f_I)'(\frac{1}{1+x})$ differ by at most $n2^{n-\rho_I}$, thus, $\var(f',I)=\var((f')_I,(0,1))=\var((f_I)',(0,1))\neq 0$. Hence, 
the first part of (i) follows from Lemma~\ref{lem:success} (iv) and (iii) is then immediate from the remarks in the above description of $\textsc{Certify}^\rho$. Namely, $\textsc{Certify}^\rho$ only returns "insufficient precision" if $\rho<\rho_f^{\max}$ and guarantees that $f(x)\neq 0$ for all $x\in\mathcal{R}$, otherwise.
For the second part of (i), we remark that, due to the above argument, an interval $I$ is terminal if the disc $\Delta_{2nw(I)}(m(I))$ does not contain a root $\xi$ of $f$ with $\sigma(\xi,f)<4n^2 w(I)$. In~\cite[Section 4.2]{s-bitstream-mcs11}, it is shown that the recursion tree $T(f')$ induced by the latter property~\footnote{In~\cite[Section 4.2]{s-bitstream-mcs11}, $T(f')$ is defined as subdivision tree obtained by recursive bisection of the interval $(-\frac{1}{4},\frac{1}{4})$ in accordance with the following rule: At depth $h\in\N_0$, an interval $I=(-\frac{1}{4}+i2^{-h-1},-\frac{1}{4}+(i+1)2^{-h-1})$ is subdivided if and only if $\var(f',I)\neq 0$ and $\Delta_{2^8n^5w(I)}(m(I))$ contains a root $\xi$ of $f$ with separation $\sigma(\xi,f)<2^7n^5w(I)$. For the given situation, we can alternatively define $T(f')$ as the (even smaller) tree obtained by recursive bisection of $(-\frac{1}{2},\frac{1}{2})$, where an interval $I$ is subdivided if $\var(I,f')\neq 0$ and $\Delta_{2nw(I)}(m(I))$ contains a root $\xi$ of $f$ with $\sigma(\xi,f)<4n^2 w(I)$.} has size $O(\Sigma_f+n\log n)$. Hence, the same holds for the recursion tree induced by $\textsc{Certify}^\rho$ which is a subtree of $T(f')$. Finally, (iii) follows in completely analogous manner as the result on the bit complexity for $\dcm^\rho$ as shown in the proof of Theorem~\ref{complexitydcml}. 
\end{proof}

Eventually, we present our overall root isolation method $\R\textsc{Isolate}$. It applies to a polynomial $F$ as given in (\ref{polyF}) and returns isolating intervals for all real roots of $F$.\vspace{0.5cm}

\hrule\vspace{0.2cm}
\textbf{$\R\textsc{Isolate}$:} Choose a starting precision $\rho\in\mathbb{N}$ (e.g., $L=16$) and run $\dcm^\rho$ on the polynomial $f$ as defined in (\ref{polyf}). If $\dcm^\rho$ returns ``insufficient precision'', we double $\rho$ and start over again. Otherwise, $\dcm^\rho$ returns a list $\mathcal{O}=\{(J_k,s_{k,l},s_{k,r},B_{k})\}_{k=1,\ldots,m}$ with isolating intervals $J_k$ for some of the real roots of $f$. If $\textsc{Certify}^\rho$ returns ``insufficient precision'', we double $\rho$ and start over the entire algorithm. If $\textsc{Certify}^\rho$ succeeds, the intervals $J_k=(c_k,d_k)$ isolate all real roots of $f$. Hence, we return the intervals $(2^{\Gamma+1} c_k, 2^{\Gamma+1} d_k)$, $k=1,\ldots,m$, which isolate the real roots of $F$.\vspace{0.2cm}
\hrule\vspace{0.4cm}

The following theorem summarizes our results:

\begin{thm}
Let $F$ be a polynomial as given in (\ref{polyF}). Then, $\R\textsc{Isolate}$ determines isolating intervals for all real roots of $F$ and, for each of these intervals $J$ containing a root $\xi$ of $F$, it holds that $$\frac{\sigma(\xi,F)}{16n^2}<w(J)<2n\sigma(\xi,F).$$ $\R\textsc{Isolate}$ demands for coefficient approximations of $F$ to $\tilde{O}(\Sigma_F+n\Gamma_F)$ bits after the binary point and the total cost is bounded by
$$
\Otilde(n(\Sigma_F+n\Gamma_F)^2)=\Otilde(n(\Sigma_F+n\tau)^2)
$$
bit operations. For $F\in\Z[x]$, the bound on the bit complexity writes as $\Otilde(n^3\tau^2)$.
\end{thm}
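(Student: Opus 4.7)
The plan is to assemble the correctness and complexity claims already proved for the two subroutines $\dcm^\rho$ and $\textsc{Certify}^\rho$ and account for the doubling loop in $\R\textsc{Isolate}$. First, correctness: by Theorem~\ref{thm:success}, $\dcm^\rho$ isolates all real roots of $f$ as soon as $\rho\ge \rho_f^{\max}=O(\Sigma_f+n)$, and by Theorem~\ref{thm:certify}(iii), $\textsc{Certify}^\rho$ also succeeds for any such $\rho$. Since $\R\textsc{Isolate}$ starts from a constant precision and doubles $\rho$ after every failure, it terminates at some precision $\rho^*<2\rho_f^{\max}$, at which point both subroutines succeed and a complete list of isolating intervals for $f$ is produced; the affine change of variable in the last step converts these into isolating intervals for $F$.

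For the width bound on each returned interval $J=(2^{\Gamma+1}c,2^{\Gamma+1}d)$, I would reuse the estimate $\sigma(\xi,f)/(16n^2)<w(I)<2\sigma(\xi,f)$ established inside the proof of Theorem~\ref{thm:success}, where $I=(a,b)$ is the interval whose extension $\tilde I=(c,d)$ is added to $\mathcal{O}$. Since $w(\tilde I)=w(I)(1+\tfrac{1}{n})$ and $\sigma(\xi,F)=2^{\Gamma+1}\sigma(z_i,f)$ for corresponding roots (Section~\ref{sec:scaling}), multiplying by $2^{\Gamma+1}$ immediately yields $\sigma(\xi,F)/(16n^2)<w(J)<4\sigma(\xi,F)\le 2n\sigma(\xi,F)$ for $n\ge 2$.

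For the precision claim, the identity $\Sigma_f=\Sigma_F+n(\Gamma+1)=O(\Sigma_F+n\Gamma_F)$ from Section~\ref{sec:scaling} gives $\rho^*=\tilde O(\Sigma_F+n\Gamma_F)$. As explained in Section~\ref{sec:apx}, obtaining a $\rho^*$-binary approximation of $f$ requires a $(n(\Gamma+1)+\rho^*+\tau+1)$-binary approximation of $F$, and by the Vieta-type bound $\tau=\tilde O(n\Gamma_F)$ from Appendix~\ref{sec:apx}, this total is also $\tilde O(\Sigma_F+n\Gamma_F)$ bits after the binary point.

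Finally, for the bit complexity, since the cost of each round scales polynomially in $\rho$ by Theorem~\ref{complexitydcml} and Theorem~\ref{thm:certify}(ii), the costs of the failed rounds form a geometric series dominated by the cost of the successful round at $\rho^*$; the cost $\tilde O((n\Gamma_F)^2)$ for pre-computing $\Gamma$ (Section~\ref{sec:scaling}) is easily absorbed. Substituting $\rho^*=\tilde O(\Sigma_F+n\Gamma_F)$ and the tree sizes $|T_{\dcm^{\rho^*}}|,|T_{\textsc{Certify}^{\rho^*}}|=O(\Sigma_f+n\log n)=O(\Sigma_F+n\Gamma_F)$ into the per-node estimate $\tilde O(n(n\Gamma+\tau+\rho-\log\sigma_f))=\tilde O(n(\Sigma_F+n\Gamma_F))$ yields the bound $\tilde O(n(\Sigma_F+n\Gamma_F)^2)$. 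For $F\in\Z[x]$, the classical bounds $\Gamma_F=O(\tau)$ and $\Sigma_F=\tilde O(n\tau)$ collapse this to $\tilde O(n^3\tau^2)$. The main bookkeeping obstacle is to ensure that all of $\Sigma_f$, $-\log\sigma_f$, $\tau$, and $\rho^*$ are simultaneously $\tilde O(\Sigma_F+n\Gamma_F)$; the scaling identity for $\Sigma_f$ and the Vieta-type bound $\tau=\tilde O(n\Gamma_F)$ are the two key estimates that keep the final bound clean.
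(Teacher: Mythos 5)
Your assembly of the main claims follows the paper's own proof almost verbatim: Theorem~\ref{thm:success} plus Theorem~\ref{thm:certify}(iii) for correctness once $\rho\ge\rho_f^{\max}$, the doubling argument bounding the terminal precision by $2\rho_f^{\max}=O(\Sigma_f+n)=O(\Sigma_F+n\Gamma)$, the per-node cost and tree sizes from Theorem~\ref{complexitydcml} and Theorem~\ref{thm:certify}(ii), absorption of the $\Otilde((n\Gamma_F)^2)$ cost for $\Gamma$, the scaling identity $\Sigma_f=\Sigma_F+n(\Gamma+1)$, the bound $\tau=\Otilde(n\Gamma_F)$ (which, incidentally, is proved in Appendix~\ref{sec:rootbound}, not in Section~\ref{sec:apx}), and $\Sigma_F=\Otilde(n\tau)$ for integer input. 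All of that is fine and is exactly the paper's route.

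The one genuine gap is in your justification of the width bound, specifically the upper bound $w(J)<2n\,\sigma(\xi,F)$. You propose to ``reuse the estimate $\sigma(\xi,f)/(16n^2)<w(I)<2\sigma(\xi,f)$ established inside the proof of Theorem~\ref{thm:success}'', but that estimate is derived there under the standing hypothesis $\rho\ge\rho_f^{\max}$: its proof perturbs $f$ within $[f]_{2^{-\rho_f}}$ and uses the root-movement bound of Corollary~\ref{cor:precision} to conclude that $\Delta_{3w(I)/2}(a)$ isolates $\xi$. However, $\R\textsc{Isolate}$ stops at the \emph{first} precision for which $\dcm^\rho$ and $\textsc{Certify}^\rho$ both succeed, and this may well happen with $\rho<\rho_f^{\max}$ --- enabling such early, certified successes is the entire purpose of $\textsc{Certify}^\rho$. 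So for the run that actually produces the output you cannot invoke the inequalities from Theorem~\ref{thm:success}'s proof. The paper argues differently and unconditionally: the lower bound comes from the fact that $T_{\dcm^\rho}\subseteq T_{\dcm}$ and no interval containing the root $z$ is subdivided once $w(I)\le\sigma(z,f)/(8n^2)$, and the upper bound comes from Lemma~\ref{lem:dcml}(iii), which holds for \emph{every} interval added to $\mathcal{O}$ at \emph{any} precision: the $\frac{w(I)}{n}$-neighborhood of $I$ is isolating for $\xi$, so every other root has distance at least $\frac{w(I)}{n}-\frac{w(I)}{4n}=\frac{3w(I)}{4n}$ from $z$, giving $w(I)\le\frac{4n}{3}\sigma(z,f)$ and hence $w(J)=2^{\Gamma+1}(1+\tfrac1n)w(I)<2n\,\sigma(\xi,F)$ --- which is precisely where the factor $2n$ (rather than your constant $4$) comes from. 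Your argument is easily repaired by switching to this lemma, but as written it establishes the width bound only for runs with $\rho\ge\rho_f^{\max}$.
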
 

\begin{proof}
It remains to prove the complexity bounds and the claim on the width of the isolating intervals. According to Appendix~\ref{sec:rootbound}, the computation of an approximate logarithmic root bound $\Gamma\in\N$ as defined in Section~\ref{sec:scaling} amounts for $\tilde{O}((n\Gamma_F)^2)$ bit operations. For a certain precision $\rho$, the total cost for running $\dcm^\rho$ and $\textsc{Certify}^\rho$ is bounded by 
$$\Otilde(n(\Sigma_f+n\log n)(n\Gamma+\tau+\rho-\log\sigma_f))=\Otilde(n(\Sigma_F+n\Gamma)(n\Gamma+\tau+\rho-\log\sigma_F))$$ 
bit operations; see Theorem~\ref{complexitydcml} and Theorem~\ref{thm:certify}. Since 
we double $\rho$ in each step and succeed for $\rho\ge \rho_f^{\max}$, $\rho$ is always bounded by $2\rho_f^{\max}=O(\Sigma_f+n)=O(\Sigma_F+n\Gamma)$. It follows that the total costs are dominated by the cost for the last run which is
$
\Otilde(n(\Sigma_F+n\Gamma)(n\Gamma+\tau+\Sigma_F)).
$
Furthermore, we have to approximate the coefficients of $f$ to $O(\Sigma_f+n)=O(\Sigma_F+n\Gamma)$ bits after the binary point. Hence, the coefficients of $F$ have to be approximated to $O(\Sigma_F+n\Gamma+\tau)$ bits after the binary point; see Section~\ref{sec:apx} for more details. From our construction of $f$ and $\Gamma$, it holds that $\Gamma<4\log n+\Gamma_F$ and $\tau=\tilde{O}(n\Gamma_F)$ (see Appendix~\ref{sec:rootbound}), hence, we can replace $\Gamma$ by $\Gamma_F$ and further omit $\tau$ in the above complexity bounds.
For the special case where $F$ is an integer polynomial, the bound on the bit complexity follows from $\Sigma_F=\Otilde(n\tau)$; see Appendix~\ref{integer}. The estimate on the size of the isolating intervals is due to the following consideration: An interval $I$ which contains the root $z=\frac{\xi}{2^{\Gamma+1}}$ of $f$ is not subdivided by $\dcm^\rho$ if $w(I)\le\sigma(z,f)/(8n^2)$. Hence, any interval $J_k$ which is returned by $\dcm^\rho$ as an isolating interval for $z$ is the extension $\tilde{I}=(a-\frac{w(I)}{2n},b+\frac{w(I)}{2n})$ of an interval $I=(a,b)$ with $w(I)>\sigma(z,f)/(16n^2)$, thus $w(J)=2^{\Gamma+1} w(I)>\sigma(\xi,F)/(16n^2)$. From our construction, the $\frac{w(I)}{n}$-neighborhood of $I$ isolates $z$ as well and, thus, $w(J)=2^{\Gamma+1} w(I)<4\Gamma n\sigma(z_i,f)=2n\sigma(\xi,F)$.
\end{proof}

\subsection{Some Remarks}

\subsubsection{On the Complexity Analysis for Integer Polynomials}

We remark that in order to achieve the complexity bound $\tilde{O}(n^3\tau^2)$ for integer polynomials, the subroutine $\textsc{Certify}^{\rho}$ and its analysis is not needed. Namely, due to our considerations in Appendix~\ref{integer}, we can compute upper bounds for $\Sigma_f$ (in terms of $n$ and $\tau$) and, thus, also an upper bound $\rho^*(n,\tau)$ for $\rho_f^{\max}$ which matches $\rho_f^{\max}$ at least with respect to worst case complexity. Then, according to Theorem~\ref{thm:success}, it is guaranteed that $\dcm^{\rho^*(n,\tau)}$ computes isolating intervals for all real roots of $f$. Unfortunately, this approach cannot be considered practical at all because such upper bounds usually tend to be much larger than the actual $\rho_f^{\max}$. We would like to emphasize on the fact that our algorithm is output sensitive in the way that it demands for a precision which is not much larger than $\rho_f$, hence, our algorithm chooses an almost optimal precision. Without giving an exact mathematical proof, we conjecture that, for any bisection method, the bound on the bit complexity as achieved by our algorithm is optimal (up to $\log$-factors). Namely, the bound $\tilde{O}(n\tau)$ on the precision as well as the bound $\tilde{O}(n\tau)$ on the size of the recursion tree cannot be lowered for Mignotte polynomials.

\subsubsection{On Efficient Implementation}
We formulated our algorithm in a way to make it accessible to the complexity analysis but still feasible and efficient for an implementation. Nevertheless, we recommend to consider a slight modification of our algorithm when actually implementing it. 

For our certification step $\textsc{Certify}$, the most obvious modification is to only subdivide the region $\mathcal{R}$ instead of the entire interval $(-\frac{1}{2},\frac{1}{2})$. More precisely, $\mathcal{R}$ decomposes into intervals $L_j$ "in between" the isolating intervals $J_k$. Then, we approximate the polynomials $f_{L_j}$ to $\rho$ bits after the binary point and recursively proceed each $L_j$ in a similar way as proposed in $\textsc{Certify}^{\rho}$. An experimental implementation of our algorithm in \textsc{Maple} has shown that following this approach the running time for the certification step is almost negligible whereas, for the original formulation, it is approximately of the same magnitude as the running time for $\dcm^{\rho}$. 

Furthermore, we propose to also use the inclusion predicate based on Descartes' Rule of Signs. With respect to complexity, our inclusion predicate based on the $\T'_{3/2}$-test (see Corollary~\ref{inclusion}) is comparable to Descartes' Rule of Signs, where we check whether $f$ has exactly one sign variation for a certain interval. However, in practice, this subtle difference is crucial because already $\log n$ bisection steps more for each root may render an algorithm inefficient. As an alternative, for each interval $I$ in the subdivision process, we propose to check whether there exists a suitable $\rho_I$-approximation $g$ of $f_I$ with $\var(g,(0,1))=1$. Namely, if there exists such a $g$, then we can proceed with $\tilde{f_I}:=g$ which has exactly one root in $I$. Thus, it is easy to refine $I$ (via simple bisection or quadratic interval refinement) such that $\T^{g'}_{3/2}(0,2)$ holds as well.

We finally report on an interesting behavior of the proposed method. It is easy to see that, for small intervals $I=(a,b)$, the leading coefficients of $f_I(x)=f(a+w(I)x)$ are considerably smaller than the first-order coefficients. Since we only consider a certain number $\rho_I\le\rho$ of bits after the binary point, the approximations $\tilde{f}_I$ are usually of lower degree than $f_I$. As a consequence, the cost at such an interval is tremendously reduced because we have to compute the polynomial $f_{I^+}=f_I(\frac{1}{4n}+(1+\frac{1}{2n})x)$ which is expensive for large degrees. In particular, for a polynomial with two very nearby roots (such as Mignotte polynomials), this behavior can be clearly observed. More precisely, when refining an interval $I$ which contains two nearby roots, the degree of $\tilde{f}_I$ decreases in each bisection step and eventually equals $2$ for $I$ small enough. We consider this behavior as quite natural because $f_I$ implicitly captures the information on the location of the roots in a neighborhood of $I$ whereas the influence of all other roots becomes almost negligible. In comparison to previous methods, the proposed algorithm exploits this fact.

\section{Conclusion}\label{conclusion}

We presented a new complete and deterministic algorithm to isolate the real 
roots of an arbitrary square-free polynomial $F$ with real coefficients. Our analysis shows that the hardness of isolating the real roots exclusively depends on the 
location of the roots and not on the coefficient type. Furthermore, the overall running time is significantly reduced by considering approximations at each node of the recursion tree. In particular, for integer polynomials, we achieve an improvement with respect to worst case bit complexity by a factor $n=\deg F$ compared to the best bounds known for other practical methods such as the Descartes or the continued fraction method. The latter is due to the fact that exact arithmetic produces too much 
information for the task of root isolation and, thus, a significant 
overhead of computation. We remark that recent work~\cite{ks-qir-issac11} shows corresponding results for the task of further refining given isolating intervals.
Since we are aiming for a practical method, we 
formulated our algorithm in the spirit of the Descartes method. We are convinced that because of its similarities to the latter exact 
method and because of its usage of approximate and less expensive 
computations, it will prove to be efficient in practice as well. We plan 
to implement our algorithm to verify this claim. A first promising step~\cite{st:polyalgebraic:11} has already been made.
Finally, univariate root isolation constitutes an important substep in cad (cylindrical algebraic decomposition) computations. In combination with a recent result on the complexity for real root approximation~\cite{ks-qir-issac11}, it is possible to obtain a bound on the worst case bit complexity of topology computation of planar algebraic curves~\cite{ks-topology-jsc11} which crucially improves upon the existing record bounds from~\cite{det-jsc,kerber-phd}.

\bibliography{localref,ref,yap,exact,alge}

\begin{thebibliography}{10}

\bibitem{Akritas:1980:FEA}
A.~G. Akritas.
\newblock The fastest exact algorithms for the isolation of the real roots of a
  polynomial equation.
\newblock {\em Computing}, 24(4):299--313, 1980.

\bibitem{Alesina-Galuzzi}
A.~Alesina and M.~Galuzzi.
\newblock A new proof of {V}icent's theorem.
\newblock {\em L'Enseignement Mathematique}, 44:219--256, 1998.

\bibitem{Alesina-Galuzzi99}
A.~Alesina and M.~Galuzzi.
\newblock Addendum to the paper "a new proof of {V}icent's theorem".
\newblock {\em L'Enseignement Mathematique}, 45:379--380, 1999.

\bibitem{Alesina-Galuzzi00}
A.~Alesina and M.~Galuzzi.
\newblock Vincent's theorem from a modern point of view.
\newblock {\em Categorical Studies in Italy 2000, Rendiconti del Circolo
  Matematico di Palermo, Serie II}, 64:179--191, 2000.

\bibitem{BPR}
S.~Basu, R.~Pollack, and M.-F. Roy.
\newblock {\em Algorithms in Real and Algebraic Geometry}.
\newblock Springer, 2nd edition, 2006.

\bibitem{bes:bisolve:11}
E.~Beberich, P.~Emeliyanenko, and M.~Sagraloff.
\newblock An elimination method for solving bivariate polynomial systems:
  Eliminating the usual drawbacks.
\newblock In {\em ALENEX}, page~35, 2011.

\bibitem{cjk-iaicad-02}
G.~Collins, J.~Johnson, and W.~Krandick.
\newblock Interval arithmetic in cylindrical algebraic decomposition.
\newblock {\em J. Symbolic Computation}, 34:143--155, 2002.

\bibitem{collins-akritas:76}
G.~E. Collins and A.~G. Akritas.
\newblock Polynomial real root isolation using {D}escartes' rule of signs.
\newblock In {\em ISSAC}, pages 272--275, 1976.

\bibitem{det-jsc}
D.~I. Diochnos, I.~Z. Emiris, and E.~P. Tsigaridas.
\newblock On the asymptotic and practical complexity of solving bivariate
  systems over the reals.
\newblock {\em J. Symb. Comput.}, 44(7):818--835, 2009.

\bibitem{du-sharma-yap:sturm:07}
Z.~Du, V.~Sharma, and C.~Yap.
\newblock Amortized bounds for root isolation via {S}turm sequences.
\newblock In {\em SNC}, pages 113--130. 2007.

\bibitem{Eigenwillig2007a}
A.~Eigenwillig.
\newblock On multiple roots in {D}escartes' rule and their distance to roots of
  higher derivatives.
\newblock {\em Journal of Computational and Applied Mathematics},
  200(1):226--230, March 2007.

\bibitem{eigenwillig:thesis}
A.~Eigenwillig.
\newblock {\em Real Root Isolation for Exact and Approximate Polynomials using
  {D}escartes' Rule of Signs}.
\newblock PhD thesis, Universit{\"a}t des Saarlandes, May 2008.

\bibitem{BitstreamDescartes}
A.~Eigenwillig, L.~Kettner, W.~Krandick, K.~Mehlhorn, S.~Schmitt, and
  N.~Wolpert.
\newblock An exact descartes algorithm with approximate coefficients.
\newblock In {\em CASC}, volume 3718 of {\em LNCS}, pages 138--149, 2005.

\bibitem{eigenwillig-sharma-yap:descartes:06}
A.~Eigenwillig, V.~Sharma, and C.~Yap.
\newblock Almost tight complexity bounds for the {D}escartes method.
\newblock In {\em ISSAC}, pages 71--78, 2006.

\bibitem{Ger04}
J.~Gerhard.
\newblock Modular algorithms in symbolic summation and symbolic integration.
\newblock {\em LNCS, Springer}, 3218, 2004.

\bibitem{Gourdon96}
X.~Gourdon.
\newblock {\em Combinatoire, Algorithmique et G\'eom\'etrie des Polyn\^omes}.
\newblock Th\`ese, \'Ecole polytechnique, 1996.

\bibitem{snc-benchmarks09}
M.~Hemmer, E.~P. Tsigaridas, Z.~Zafeirakopoulos, I.~Z. Emiris, M.~I. Karavelas,
  and B.~Mourrain.
\newblock Experimental evaluation and cross benchmarking of univariate real
  solvers.
\newblock In {\em SNC}, pages 45--54, 2009.

\bibitem{Johnson_Krandick:97}
J.~R. Johnson and W.~Krandick.
\newblock Polynomial real root isolation using approximate arithmetic.
\newblock In W.~K\"uchlin, editor, {\em ISSAC}, pages 225--232. ACM Press,
  1997.

\bibitem{kerber-phd}
M.~Kerber.
\newblock {\em Geometric Algorithms for Algebraic Curves and Surfaces}.
\newblock PhD thesis, Universit\"at des Saarlandes, 2009.
\newblock to appear.

\bibitem{ks-qir-issac11}
M.~Kerber and M.~Sagraloff.
\newblock Efficient real root approximation.
\newblock {\em International Symposium on Symbolic and Numeric Computation
  (ISSAC)}, 2011.
\newblock to appear. an extended version has been submitted. see
  \textit{http://www.mpi-inf.mpg.de/~msagralo/apx.pdf} for an online version.

\bibitem{ks-topology-jsc11}
M.~Kerber and M.~Sagraloff.
\newblock A worst-case bound for topology computation of algebraic curves.
\newblock 2011.
\newblock submitted, see
  \textit{http://www.mpi-inf.mpg.de/~msagralo/topcomplexity.pdf} for an online
  version.

\bibitem{Krandick-Mehlhorn:Descartes}
W.~Krandick and K.~Mehlhorn.
\newblock New bounds for the {D}escartes method.
\newblock {\em Journal of Symbolic Computation}, 41(1):49--66, 2006.

\bibitem{lickteig-roy:sequences:01}
T.~Lickteig and M.-F. Roy.
\newblock Sylvester-{H}abicht sequences and fast {C}auchy index computation.
\newblock {\em J. of Symbolic Computation}, 31:315--341, 2001.

\bibitem{ms-detbitdesc-09}
K.~Mehlhorn and M.~Sagraloff.
\newblock Isolating real roots of real polynomials.
\newblock In {\em ISSAC '09}, pages 247--254. ACM, 2009.

\bibitem{mrr:bernstein:05}
B.~Mourrain, F.~Rouillier, and M.-F. Roy.
\newblock The {B}ernstein basis and real root isolation.
\newblock In {\em Combinatorial and Computational Geometry}, pages 459--478.
  2005.

\bibitem{Obreshkoff25}
N.~Obreschkoff.
\newblock \"{U}ber die {W}urzeln von algebraischen {G}leichungen.
\newblock {\em Jahresbericht der Deutschen Mathematiker-Vereinigung},
  33:52--64, 1925.

\bibitem{Obreshkoff:book}
N.~Obreschkoff.
\newblock {\em Verteilung und Berechnung der Nullstellen reeller Polynome}.
\newblock VEB Deutscher Verlag der Wissenschaften, 1963.

\bibitem{Obreshkoff:book-english}
N.~Obreschkoff.
\newblock {\em Zeros of Polynomials}.
\newblock Marina Drinov, Sofia, 2003.
\newblock Translation of the Bulgarian original.

\bibitem{Ostrowski:1950}
A.~M. Ostrowski.
\newblock Note on {V}incent's theorem.
\newblock {\em Annals of Mathematics, Second Series}, 52(3):702--707, 1950.
\newblock Reprinted in: Alexander Ostrowski, Collected Mathematical Papers,
  vol. 1, Birkh{\"{a}}user Verlag, 1983, pp. 728--733.

\bibitem{pan:seq-parallel:87}
V.~Y. Pan.
\newblock Sequential and parallel complexity of approximate evaluation of
  polynomial zeros.
\newblock {\em Comput. Math. Applic.}, 14(8):591--622, 1987.

\bibitem{pan:history-progress:97}
V.~Y. Pan.
\newblock Solving a polynomial equation: some history and recent progress.
\newblock {\em SIAM Review}, 39(2):187--220, 1997.

\bibitem{rouillier-zimmermann:roots:04}
F.~Rouillier and P.~Zimmermann.
\newblock Efficient isolation of [a] polynomial's real roots.
\newblock {\em J. Computational and Applied Mathematics}, 162:33--50, 2004.

\bibitem{s-bitstream-mcs11}
M.~Sagraloff.
\newblock A general approach to isolating roots of a bitstream polynomial.
\newblock {\em Mathematics in Computer Science, Special Issue on Algorithms and
  Complexity at the Interface of Mathematics and Computer Science}, 2011.
\newblock to appear, see
  \textit{http://www.mpi-inf.mpg.de/~msagralo/bitstream10.pdf} for an online
  version.

\bibitem{sagraloff-yap:ceval:09}
M.~Sagraloff and C.~K. Yap.
\newblock An efficient exact subdivision algorithm for isolating complex roots
  of a polynomial and its complexity analysis, 2009.
\newblock Submitted. http://mpi-inf.mpg.de/~msagralo/ceval.pdf.

\bibitem{schonhage:fundamental}
A.~Sch{\"o}nhage.
\newblock The fundamental theorem of algebra in terms of computational
  complexity, 1982.
\newblock Manuscript, Department of Mathematics, University of T{\"u}bingen.
  Updated 2004.

\bibitem{Sharma}
V.~Sharma.
\newblock Complexity of real root isolation using continued fractions.
\newblock {\em Theoretical Computer Science}, 409:292--310, 2008.

\bibitem{st:polyalgebraic:11}
A.~W. Strzebonski and E.~P. Tsigaridas.
\newblock Univariate real root isolation in an extension field.
\newblock {\em ISSAC}, 2011.
\newblock to appear. see http://arxiv.org/abs/1101.4369 for an online version.

\bibitem{tsigaridasE08}
E.~P. Tsigaridas and I.~Z. Emiris.
\newblock On the complexity of real root isolation using continued fractions.
\newblock {\em Theor. Comput. Sci.}, 392(1-3):158--173, 2008.

\bibitem{Vincent}
A.~J.~H. Vincent.
\newblock Sur la r\'{e}solution des equations num\'{e}riques.
\newblock {\em Journal de Math\'{e}matiques Pures et Appliqu\'{e}es},
  1:341--372, 1836.

\bibitem{GG97}
J.~von~zur Gathen and J.~Gerhard.
\newblock Fast algorithms for {T}aylor shifts and certain difference equations.
\newblock In {\em ISSAC '97: Proceedings of the 1997 international symposium on
  Symbolic and algebraic computation}, pages 40--47, New York, NY, USA, 1997.
  ACM.

\bibitem{yap-fundamental}
C.~K. Yap.
\newblock {\em Fundamental Problems in Algorithmic Algebra}.
\newblock Oxford University Press, 2000.

\end{thebibliography}
\bibliographystyle{abbrv}

\newpage
\section{Appendix}

\subsection{Approximating $\Gamma_F$}\label{sec:rootbound}

In this section, we show how to compute an integer approximation $\Gamma\in\N$ of the exact logarithmic root bound $\Gamma_F:=\log(\max_i|\xi_i|)$ with $\Gamma_F\le\Gamma<4\log n+\Gamma_F$. We further prove that this computation can be done with $\tilde{O}((n\Gamma_F)^2)$ bit operations. As a byproduct,  $\tau=\tilde{O}(n\Gamma_F)$.

Consider the Cauchy polynomial $F^C(x):=|A_n|x^n-\sum_{i=0}^{n-1}|A_i|x^i$ of $F$. Then, $F^C$ has a unique positive real root $\xi^*\in\R^+$ and it holds that $$2^{\Gamma_F}\le\xi^*<\frac{n}{\ln 2}\cdot 2^{\Gamma_F}<2n\cdot 2^{\Gamma_F};$$
see~\cite[Proposition 2.51]{eigenwillig:thesis}. Thus, it follows that $F^C(x)>0$ for all $x\ge n2^{\Gamma_F+1}$ and, in particular, $|A_n|(n2^{\Gamma_F+1})^n>|A_i|(n2^{\Gamma_F+1})^i$ for all $i$. From the definition of $\tau$ on page~\pageref{deftau}, we have $\tau=1$ or there exists an $i_0$ with $|A_{i_0}|/|A_n|\ge 2^{\tau-1}$. The first case is trivial and, in the second case, we have $(2^{\Gamma_F+1}n)^{n-i_0}\ge 2^{\tau-1}$. Thus, $\tau=\tilde{O}(n\Gamma_F)$.

We now aim to compute an approximation of $\xi^*$ via evaluating $F^C$ at $x=2,4,8,\ldots$. Then, for the smallest $k\in\N$ (denoted by $k_0$) with $F^C(2^k)>0$, we must have $\xi^*<2^{k_0}<2\xi^*$. However, since $F$ has approximate coefficients, these evaluations cannot be done exactly. The idea is to use interval arithmetic with a certain precision $\rho$ (fixed point arithmetic) such that $\IBox(F^C(2^k),\rho)<1$, where $\IBox(E,\rho)$ is the interval obtained by evaluating of a polynomial expression $E$ via interval arithmetic with precision $2^{-\rho}$ for the basic arithmetic operations; see~\cite[Section 4]{ks-qir-issac11} for details. We initially start with $k=1$. If $\IBox(F^C(2^k),\rho)$ contains zero or $\lambda<0$ for all $\lambda\in\IBox(F^C(2^k),\rho)$, we proceed with $k+1$. Otherwise, we must have $k_0\le k\le k_0+1$: The left inequality is obvious. For the right inequality, we remark that $2^{k_0+1}$ has distance larger than $1$ to all roots of $F^C$ and, thus, $F^C(2^{k_0+1})\ge 1$. Hence, $\lambda>0$ for all $\lambda\in\IBox(F^C(2^{k_0+1}),\rho)$. It follows that $\Gamma:=k$ fulfills $$\Gamma_F\le\Gamma\le k_0+1<\log(4\xi^*)<\log(8n2^{\Gamma_F})<4\log n+\Gamma_F.$$

It remains to bound the cost for the interval computations of $\IBox(F^C(2^k),\rho)$. Since $F^C$ has coefficients of size less than $2^\tau$, we have to choose $\rho$ such that 
\[
2^{-\rho+2}(n+1)^2 2^{\tau+nk}<1
\] 
in order to guarantee that $w(\IBox(F^C(2^k),\rho))<1$; see~\cite[Lemma 3]{ks-qir-issac11}. Then, $\rho$ is bounded by $O(\tau+nk)$ and, thus, each interval evaluation needs $\tilde{O}(n(\tau+nk))$ bit operations. From our above considerations, we have $k\le k_0+1=O(\log n+\Gamma_F)$, hence, the total cost is bounded by $\tilde{O}(n\Gamma_F(\tau+n\Gamma_F))=\tilde{O}((n\Gamma_F)^2)$ and we need approximations of $F$ to $O(\tau+n\Gamma_F)=\tilde{O}(n\Gamma_F)$ bits after the binary point.

\subsection{Integer Polynomials}\label{integer}

For an integer polynomial $F\in\Z[x]$ as given in (\ref{polyF}), we aim to show that $\Sigma_F=\Otilde(n\tau)$. We proceed in two steps: First, we cluster the roots $\xi_i$ of $F$ into subsets consisting of nearby roots. Second, we apply the generalized Davenport-Mahler bound~\cite{du-sharma-yap:sturm:07,eigenwillig:thesis} to the roots of $F$. Eventually, the above result follows.\\

W.l.o.g., we can assume that $\sigma(\xi_1,F)\le\ldots,\le\sigma(\xi_n,F)$. For $h\in\N$, we denote $i(h)$ the maximal index $i$ with $\sigma(\xi_i,F)\le 2^{-h}$ and $R=R(h):=\{\xi_1,\ldots,\xi_{i(h)}\}$ the corresponding set of roots $\xi_i$ with $\sigma(\xi_i,F)\le 2^{-h}$. If $h\le\log(1/\sigma_F)$, then $R$ contains at least two roots. We are interested in a partition of $R$ into disjoint subsets $R_1,\ldots,R_l$ that consist of nearby points, only. 

\begin{lem}\label{cluster}
Suppose that $h\le\log(1/\sigma_F)$. Then, there exists a partition of 
$R:=R(h)$ into disjoint sets $R_1,\ldots,R_l$ such that $|R_{i}|\geq 2$ for 
all $i\in\{1,\ldots,l\}$ and $|\xi-\xi'|\le n2^{-h}$ for all $\xi$, $\xi'\in 
R_{i}$.
\end{lem}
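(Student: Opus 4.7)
The plan is to partition $R$ via the connected components of a proximity graph and then bound the Euclidean diameter of each component by the graph diameter times the edge-length threshold.

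First I would define an undirected graph $G$ on the vertex set $R=R(h)$ by declaring $\xi_i \sim \xi_j$ whenever $\xi_i\neq \xi_j$ and $|\xi_i-\xi_j|\le 2^{-h}$. The crucial observation is that every vertex of $G$ has degree at least one: for any $\xi_i\in R$, the definition of the separation gives some root $\xi_j\neq \xi_i$ of $F$ with $|\xi_i-\xi_j|=\sigma(\xi_i,F)\le 2^{-h}$, and the same inequality forces $\sigma(\xi_j,F)\le 2^{-h}$, so $\xi_j\in R$ as well. Hence $\{\xi_i,\xi_j\}$ is an edge of $G$.

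Next I would let $R_1,\ldots,R_l$ be the connected components of $G$ and declare this to be the desired partition. The minimum-degree-one property ensures that each component contains at least two vertices, so $|R_i|\ge 2$ for all $i$. To bound the diameter, pick any two roots $\xi,\xi'\in R_i$ and take a path $\xi=\eta_0,\eta_1,\ldots,\eta_k=\xi'$ in $G$; since a simple path visits distinct vertices of $R_i\subseteq R$ of size at most $n$, we have $k\le |R_i|-1\le n-1$. Each edge contributes at most $2^{-h}$ in the complex plane, so by the triangle inequality
\[
|\xi-\xi'|\le \sum_{j=0}^{k-1}|\eta_j-\eta_{j+1}|\le k\cdot 2^{-h}\le n\cdot 2^{-h},
\]
which is the required bound.

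There is no real obstacle here: the only mild subtlety is verifying that the nearest neighbor of a root in $R$ actually lies in $R$, which follows at once from the symmetry of the separation inequality. The hypothesis $h\le \log(1/\sigma_F)$ is only used to guarantee that $R$ is non-empty and in fact contains at least two roots, so that the partition into pairs-or-larger clusters is non-trivial.
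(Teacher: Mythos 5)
Your proof is correct and is essentially the paper's argument in different clothing: the paper's iterative clustering (repeatedly absorbing all roots within distance $2^{-h}$ of the current cluster) is exactly the computation of the connected components of your proximity graph, and both arguments bound the cluster diameter by at most $|R_i|-1\le n-1$ edges of length $\le 2^{-h}$ via the triangle inequality, with the same observation that each root's nearest neighbor also lies in $R$, forcing $|R_i|\ge 2$.
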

\begin{proof}
We initially set $R_1:=\{\xi_1\}$. Then, we add all roots $\xi_i$ to $R_1$ 
that satisfy $|\xi_i-\xi_1|\leq 2^{-h}$. For each root in 
$R_1$, we proceed in the same way. More precisely, for each $\xi\in R_1$, we add those roots $\xi'\in R$ to $R_1$ with $|\xi-\xi'|\le 2^{-h}$. If no further root can be added to $R_1$, we consider 
the set $R\backslash R_1$ of the remaining roots and treat it in 
exactly the same manner. Finally, we end up with a partition 
$R_{1},\ldots,R_{l}$ of $R$ such that, for any two points in any $R_{i}$, their 
distance is less than or equal to $(|R_{i}|-1)2^{-h}<n2^{-h}$. Furthermore, each of 
the sets $R_i$ must contain at least two roots as $\sigma(\xi_i,F)\leq 2^{-h}$ for all $i=1,\ldots,i(h)$. 
\end{proof}

We now consider a directed graph $\mathcal{G}_i$ on each $R_i$
which connects consecutive roots of $R_i$ in ascending order of their
absolute values. We define $\mathcal{G}:=(R,E)$ as the union of all
$\mathcal{G}_i$. Then, $\mathcal{G}$ is a directed graph on $R$
with the following properties:
\begin{enumerate}
\item each edge $(\alpha,\beta)\in E$ satisfies $|\alpha|\leq |\beta|$,
\item $\mathcal{G}$ is acyclic, and
\item the in-degree of any node is at most 1.
\end{enumerate} 
Hence, we can apply the generalized Davenport-Mahler
bound~\cite{du-sharma-yap:sturm:07,eigenwillig:thesis}
to $\mathcal{G}$:
\[
	\prod_{(\alpha,\beta)\in E}|\alpha-\beta|
		\geq\frac{1}{(\sqrt{n+1}2^{\tau})^{n-1}}
		\cdot\left(\frac{\sqrt{3}}{n}\right)^{\# E}
		\cdot\left(\frac{1}{n}\right)^{n/2}
\]
As each set $R_i$ contains at least $2$ roots,
we must have $i(h)>\# E\geq i(h)/2$.
Furthermore, for each edge $(\alpha,\beta)\in E$, we have
$|\alpha-\beta|\leq n2^{-h}$. It follows that
	\begin{align*}
	(n2^{-h})^{\frac{i(h)}{2}}&
		> \frac{1}{(\sqrt{n+1}2^{\tau})^{n-1}}
		 \cdot\left(\frac{\sqrt{3}}{n}\right)^{i(h)}
		\cdot\left(\frac{1}{n}\right)^{n/2}
	 > \frac{1}{(n+1)^n2^{n\tau}}
		\cdot\left(\frac{3}{n^2}\right)^{i(h)/2}
	\end{align*}
and, thus,
	$$
	i(h)<\frac{2n(\tau+\log(n+1))}{\log 3+\log n+h}<\frac{2n(\tau+\log(n+1))}{h}.
	$$
It directly follows that $\log(1/\sigma_F)<n(\tau+\log(n+1))+1$ since, otherwise, there would exist an 
$h$ with $n(\tau+\log(n+1))< h\le\log(1/\sigma_F)$ and $i(h)<2$ which is not possible. For the bound on $\Sigma_F$, it suffices to consider only the roots $\xi_1,\ldots,\xi_k$ with separation $\le 1/2$ since all other roots contribute with at most $n$ to the sum $\Sigma_F$. Since
$$
-\sum_{i=1}^k \log\sigma(\xi,F)<\sum_{h=1}^{\left\lceil n(\tau+\log(n+1))\right\rceil} i(h)<2n(\tau+\log(n+1))\sum_{h=1}^{\left\lceil n(\tau+\log(n+1))\right\rceil}\frac{1}{h}=O(n\tau\log(n\tau)),
$$ 
it follows that $\Sigma_F=\Otilde(n\tau)$.

\newpage

\subsection{Algorithms}\label{appendix:algos}

\begin{algorithm}
\caption{$\dcm$}
\label{alg:dcm}
\begin{algorithmic}
\REQUIRE {polynomial $f =\sum_{0 \le i \le n} a_i x^i\in\R[x]$ as defined in (\ref{polyf})}\medskip
\ENSURE{returns a list $\mathcal{O}$ of disjoint isolating intervals for all real roots of $f$}

\hfill \COMMENT{only in the REAL-RAM model}
\STATE $I_0 \assign (-\frac{1}{2},\frac{1}{2})$
\STATE $f_{I_0}(x)\assign f(-\frac{1}{2}+x)$ 
%\STATE $\mu \assign \frac{1}{2}$ \hfill \COMMENT{starting approximation};
\STATE $\mathcal{A} \assign \sset{(I_0,f_{I_0})}$; $\mathcal{O} \assign \emptyset$ \hfill \COMMENT{list of
active and isolating intervals}
%\WHILE {({\bf true})}  
\REPEAT
\STATE $(I,f_I)$ some element in $\mathcal{A}$ with $I=(a,b)$; 
delete $(I,f_I)$ from $\mathcal{A}$
\STATE $f_{I^+}\assign f_I\left(-\frac{1}{4n}+\left(1+\frac{1}{2n}\right)x\right)$ and 
$f_{I^+}^t(x)=\sum_{i=0}^n h_i x^i\assign (1+x)^n\cdot 
f_{I^+}\left(\frac{1}{1+x}\right)$
\IF {$\var(f_{I^+}^t)=0$}
\STATE do nothing
\ELSE 
\IF {$t_{3/2}^{(f_{I})'}(0,2)>0$}
\STATE $s\assign \sgn f_{I^{+}}(0)\cdot f_{I^{+}}(1)$
\IF {$s\ge 0$}
\STATE do nothing
\ELSE
\IF {$I^+$ does not intersect any interval in $\mathcal{O}$}
\STATE add $I^+$ to $\mathcal{O}$
\ELSE
\STATE do nothing
\ENDIF
\ENDIF
\ELSE
\STATE subdivide $I$ into $I_l\assign (a,m_I)$ and $I_r\assign (m_I,b)$
\STATE $f_{I_l}\assign f_I\left(\frac{x}{2}\right)$ and $f_{I_r}\assign f_I\left(\frac{x+1}{2}\right)=f_{I_l}(x+1)$
\STATE add $(I_l,f_{I_l})$ and $(I_r,f_{I_r})$ to $\mathcal{A}$
\ENDIF
\ENDIF
\UNTIL{$\mathcal{A}$ is empty}
\RETURN $\mathcal{O}$
\end{algorithmic}
\end{algorithm}\vspace{-0.2cm}

\begin{algorithm}[t]
\caption{$\dcm^\rho$}
\label{alg:dcml}
\begin{algorithmic}
\REQUIRE {polynomial $f =\sum_{0 \le i \le n} a_i x^i\in\R[x]$ as in (\ref{polyf}) and a $\rho\in\N$}\medskip
\ENSURE{returns "insufficient precision" or a list $\mathcal{O}=\{J_k,s_{k,l},s_{k,r},B_k\}$ of disjoint isolating intervals $J_k=(c_k,d_k)$ for some of the real roots of $f$ (and $s_{k,l}=\sgn f(c_k)$, $s_{k,r}=\sgn f(d_k)$ and $0<B_k\le\min(|f(c_k)|,|f(d_k)|)$.}\medskip
\STATE $I_0 \assign (-\frac{1}{2},\frac{1}{2})$
\STATE $\tilde{f}$ a $(\rho+n+1)$-binary approximation of $f$
\STATE $\tilde{f_{I_0}}$ a $(\rho+1)$-binary approximation of 
$\tilde{f}(-\frac{1}{2}+x)$ \hfill\COMMENT{$\Rightarrow\tilde{f_{I_0}}\in [f_{I_0}]_{2^{-\rho}}$ }
%\STATE $\mu \assign \frac{1}{2}$ \hfill \COMMENT{starting approximation};
\STATE $\mathcal{A} \assign \sset{(I_0,\tilde{f_{I_0}},\rho)}$; $\mathcal{O} \assign \emptyset$ \hfill \COMMENT{list of
active and isolating intervals}\vspace{0.5cm}
%\WHILE {({\bf true})}  

\REPEAT
\STATE $(I,\tilde{f_I},\rho_I)$, where $I \assign (a,b)$, some element in $\mathcal{A}$; delete $(I,\tilde{f_I},\rho_I)$ from $\mathcal{A}$
\STATE $\tilde{f}_{I^+}(x)\assign \tilde{f_I}\left(-\frac{1}{4n}+\left(1+\frac{1}{2n}\right) x\right)$ and
$\tilde{h}(x)=\sum_{i=0}^n \tilde{h}_i x^i:=(1+x)^n\cdot \tilde{f}_{I^+}^t\left(\frac{1}{1+x}\right)$
   \IF {$\tilde{h}_i>-2^{n+2-\rho_I}$ for all $i$ or $\tilde{h}_i<2^{n+2-\rho_I}$ for all $i$} \STATE do nothing
\ELSE 
       \IF {$t_{3/2}^{(\tilde{f_I})'}>-n2^{n+1-\rho_I}$} 
\STATE $\hat{f_I}(x)\assign \tilde{f_I}(x)+n2^{n+1-\rho_I}\cdot x$
\STATE $\lambda^-\assign \tilde{f}_{I^+}(0)-2^{n-1-\rho_I}$, $\lambda^+\assign \tilde{f}_{I^+}(1)+(4n+1)2^{n-1-\rho_I}$ and $\lambda\assign \tilde{f_I}(-1/n)-2^{n+1-\rho_I}$.
\IF {$\tilde{I}=(a-\frac{w(I)}{2n},b+\frac{w(I)}{2n}$ intersects no interval $J$ for all $(J,s_{J,l},s_{J,r},B_J)\in\mathcal{O}$ \textbf{and} $\lambda^-\cdot\lambda^+<0$ \textbf{and} $\min(|\lambda^-|,|\lambda^+|)>n2^{n+3-\rho_I}$ \textbf{and} $|\lambda|>n^2 2^{\deg(\hat{f_I})+7+n-\rho_I}$}
\STATE add $(\tilde{I},\sgn(\lambda^-),\sgn(\lambda^+),\min(|\lambda^-|,|\lambda^+|)-2^{n+3-\rho_I}n)$ to $\mathcal{O}$

\hfill \COMMENT{$\Rightarrow$ $\tilde{I}$ contains a root $\xi$ of $f$ and the $\frac{w(I)}{n}$-neighborhood of $I$ is isolating for $\xi$}
\ELSE
\STATE do nothing \hfill \COMMENT{$\tilde{J}$ is already isolating for $\xi$}
\ENDIF
\ELSE
\STATE do nothing
\ENDIF
\ELSE
\IF {$\rho_I<0$}
\RETURN "insufficient precision"
\ELSE
\IF {$\rho_I<2$}
\RETURN "insufficient precision"
\ELSE
\STATE Subdivide $I$ into $I_l\assign (a,m_I)$ and $I_r\assign(m_I,b)$
\STATE $\tilde{f_{I_l}}$ an $\rho_I$-binary approximation of $\tilde{f_I}\left(\frac{x}{2}\right)$\hfill\COMMENT{$\Rightarrow\tilde{f_{I_l}}\in[f_{I_l}]_{2^{-(\rho_I-1)}}$}
\STATE  $\tilde{f_{I_r}}$ an $(\rho_I-1)$-binary approximation of $\tilde{f_I}\left(\frac{1+x}{2}\right)$\hfill\COMMENT{$\Rightarrow\tilde{f_{I_r}}\in[f_{I_r}]_{2^{-(\rho_I-2)}}$}
\STATE Add $(I_l,\tilde{f_{I_l}},\rho_I-1)$ and $(I_r,\tilde{f_{I_r}},\rho_I-2)$ to $\mathcal{A}$
\ENDIF
\ENDIF
\ENDIF
\UNTIL{$\mathcal{A}$ is empty}
\RETURN $\mathcal{O}$
\end{algorithmic}
\end{algorithm}

\begin{algorithm}[t]
\caption{$\textsc{Certify}^\rho$}
\label{alg:certify}
\begin{algorithmic}
\REQUIRE {polynomial $f =\sum_{0 \le i \le n} a_i x^i\in\R[x]$ as defined in (\ref{polyf}), an $\rho\in\N$ and the list $\mathcal{O}=\{(J_k,s_{k,l},s_{k,r},B_k)\}_{k=1,\ldots,s}$ returned by $\dcm^\rho$.}\medskip
\ENSURE{returns "insufficient precision" or the list $\mathcal{L}=\{J_k\}_{k=1,\ldots,s}$ of isolating intervals with the guarantee that, for each real root of $f$, there exists a corresponding interval in $\mathcal{L}$.}\medskip
\STATE $I_0 \assign (-\frac{1}{2},\frac{1}{2})$
\STATE $\tilde{f}$ an $(\rho+n+1)$-binary approximation of $f$
\STATE $\tilde{f_{I_0}}$ an $(\rho+1)$-binary approximation of $\tilde{f}(-\frac{1}{2}+x)$ \hfill\COMMENT{$\Rightarrow\tilde{f_{I_0}}\in [f_{I_0}]_{2^{-\rho}}$ }
%\STATE $\mu \assign \frac{1}{2}$ \hfill \COMMENT{starting approximation};
\STATE $\mathcal{A} \assign \sset{(I_0,\tilde{f_{I_0}},\rho)}$ \hfill \COMMENT{list of
active intervals}\vspace{0.3cm}
%\WHILE {({\bf true})}  
\REPEAT
\STATE $(I,\tilde{f_I},\rho_I)$, where $I \assign (a,b)$, some element in $\mathcal{A}$; delete $(I,\tilde{f_I},\rho_I)$ from $\mathcal{A}$.
\IF {$\bar{I}\cap \mathcal{R}=\bigcup_{i=1}^s L_i=\emptyset$}
\STATE do nothing
\ELSE
\IF {$t_{3/2}^{\tilde{f_I}}(0,1)>-(n+1)2^{-\rho_I+1}$}
\IF {$|\tilde{f_I}(0)+2^{-\rho_I+2}n|>(n+1)2^{-\rho_I+5}$}
\STATE do nothing \hfill\COMMENT{$\overline{I}$ contains no root of $f$}
\ELSE
\RETURN "insufficient precision"\hfill\COMMENT{$\rho<\rho_f^{\max}$}
\ENDIF
\ELSE
\STATE $\tilde{h}(x)\assign \sum_{i=0}^n \tilde{h}_i x^i= (1+x)^n(\tilde{f_I})'(\frac{1}{1+x})$
\IF {$\tilde{h}_i<n2^{n-\rho_I}$ for all $i$ (or $\tilde{h}_i>-n2^{n-\rho_I}$ for all $i$)}
\STATE $g(x):=\tilde{f}_I(x)-n2^{n-\rho_I}$ (or $g(x):=\tilde{f}_I(x)+n2^{n-\rho_I}$, respectively);
\IF {for each $L_i=[q_l,q_r]$, $\min(|\lambda(q_l)|,|\lambda(q_r)|)>n2^{n+2-\rho_I}$\textbf{ and }$\lambda(q_l)\cdot\lambda(q_r)<0$}
\STATE do nothing
\hfill\COMMENT{$\overline{I}\cap\mathcal{R}$ contains no root of $f$; $\lambda(q_l)$, $\lambda(q_r)$ defined as in (\ref{lambdaq})}
\ELSE
\RETURN "insufficient precision"\hfill\COMMENT{$\rho<\rho_f^{\max}$}
\ENDIF
\ELSE
\IF {$\rho_I<2$}
\RETURN "insufficient precision"
\ELSE
\STATE Subdivide $I$ into $I_l\assign (a,m_I)$ and $I_r\assign(m_I,b)$
\STATE $\tilde{f_{I_l}}$ an $\rho_I$-binary approximation of $\tilde{f_I}\left(\frac{x}{2}\right)$\hfill\COMMENT{$\Rightarrow\tilde{f_{I_l}}\in[f_{I_l}]_{2^{-(\rho_I-1)}}$}
\STATE  $\tilde{f_{I_r}}$ an $(\rho_I-1)$-binary approximation of $\tilde{f_I}\left(\frac{1+x}{2}\right)$\hfill\COMMENT{$\Rightarrow\tilde{f_{I_r}}\in[f_{I_r}]_{2^{-(\rho_I-2)}}$}
\STATE Add $(I_l,\tilde{f_{I_l}},\rho_I-1)$ and $(I_r,\tilde{f_{I_r}},\rho_I-2)$ to $\mathcal{A}$
\ENDIF 
\ENDIF
\ENDIF
\ENDIF
\UNTIL{$\mathcal{A}$ is empty}
\RETURN "certification successful" \hfill\COMMENT{The region of uncertainty $\mathcal{R}$ contains no root of $f$}
\end{algorithmic}
\end{algorithm}

\end{document}